\documentclass[onecolumn,nofootinbib,superscriptaddress,notitlepage,10pt,tightenlines,pra,aps]{revtex4-1}

\usepackage[ colorlinks = true,
             linkcolor = blue,
             urlcolor  = blue,
             citecolor = red,
             anchorcolor = green,
]{hyperref}

\usepackage{mtunicode}
\usepackage{qittools}
\usepackage{thmtools}


\newcommand{\eqdef}{:=}

\newcommand{\ext}{\mathrm{Ext}}
\newcommand{\cond}{\mathrm{Con}}

\newcommand{\matr}{\mathrm{Mat}}

\newcommand{\lin}{\mathrm{Lin}}
\newcommand{\cb}{\mathrm{cb}}

\newcommand{\pr}[1]{\mathbf{P}\left\{ #1 \right\}}

\newcommand{\capcap}{\cap \nonscript\mskip-3mu \cdot \nonscript\mskip-3mu \cap}
\newcommand{\sigsig}{\Sigma \nonscript\mskip-3mu \cdot \nonscript\mskip-3mu \Sigma}

\def\id{{\rm id}}                            
\def\Rl{{\mathbb R}}\def\C{{\mathbb C}}     
\def\norm #1{\Vert #1\Vert}

\def\bra #1{\langle #1\vert}
\def\ket #1{\vert #1\rangle}

\def\tr{{\rm Tr}}



\def\P{{\mathbb P}}
\def\E{{\mathbb E}}



\def\omin{{\otimes_{\text{min}}}}




\newcommand*{\half}{\frac{1}{2}}


\begin{document}

\title{Quantum-proof randomness extractors via operator space theory}

\author{Mario Berta}
\email{berta@caltech.edu}

\affiliation{Institute for Quantum Information and Matter, Caltech, Pasadena, CA 91125, USA}

\author{Omar Fawzi}
\email{omar.fawzi@ens-lyon.fr}
\affiliation{Department of Computing and Mathematical Sciences, Caltech, Pasadena, CA 91125, USA}
\affiliation{ENS de Lyon, UMR 5668 LIP - CNRS - UCBL - INRIA - Université de Lyon, 69364 Lyon, France}
\affiliation{Institute for Theoretical Physics, ETH Zurich, 8093 Z\"urich, Switzerland}

\author{Volkher B.~Scholz}
\email{scholz@phys.ethz.ch}
\affiliation{Department of Physics, Ghent University, 9000 Gent, Belgium}
\affiliation{Institute for Theoretical Physics, ETH Zurich, 8093 Z\"urich, Switzerland}


\begin{abstract}
Quantum-proof randomness extractors are an important building block for classical and quantum cryptography as well as device independent randomness amplification and expansion. Furthermore they are also a useful tool in quantum Shannon theory. It is known that some extractor constructions are quantum-proof whereas others are provably not [Gavinsky {\it et al.}, STOC'07]. We argue that the theory of operator spaces offers a natural framework for studying to what extent extractors are secure against quantum adversaries: we first phrase the definition of extractors as a bounded norm condition between normed spaces, and then show that the presence of quantum adversaries corresponds to a completely bounded norm condition between operator spaces. From this we show that very high min-entropy extractors as well as extractors with small output are always (approximately) quantum-proof.

We also study a generalization of extractors called randomness condensers. We phrase the definition of condensers  as a bounded norm condition and the definition of quantum-proof condensers as a completely bounded norm condition. Seeing condensers as bipartite graphs, we then find that the bounded norm condition corresponds to an instance of a well studied combinatorial problem, called bipartite densest subgraph. Furthermore, using the characterization in terms of operator spaces, we can associate to any condenser a Bell inequality (two-player game) such that classical and quantum strategies are in one-to-one correspondence with classical and quantum attacks on the condenser. Hence, we get for every quantum-proof condenser (which includes in particular quantum-proof extractors) a Bell inequality that can not be violated by quantum mechanics.
\end{abstract}

\maketitle


\section{Introduction}\label{sec:intro}

In cryptographic protocols such as in key distribution and randomness expansion, it is often possible to guarantee that an adversary's knowledge about the secret $N$ held by honest players is bounded. The relevant quantity in many settings is the adversary's guessing probability of the secret $N$ given all his knowledge. However, the objective is usually not to create a secret that is only partly private but rather to create a (possibly smaller) secret that is almost perfectly private. The process of transforming a partly private string $N$ into one that is almost uniformly random $M$ from the adversary's point of view is called privacy amplification \cite{BBR88, BBCM95}. In order to perform privacy amplification, we apply to $N$ a function chosen at random from a set of functions $\{ f_s \}$ that has the property of being a randomness extractor. Randomness extractors are by now a standard tool used in many classical and quantum protocols. They are an essential ingredient in quantum key distribution and device independent randomness expansion protocols~\cite{Renner05,Vazirani12}, but are for example also useful in quantum Shannon theory~\cite{Dupuis12,Berta13}. For such applications, it has been only relatively recently realized~\cite{Renner05} that it is crucial to explicitly consider quantum adversaries. It is by no means obvious that a quantum adversary also satisfying the guessing probability constraint on $N$ would not be able to have additional knowledge about the output $M$. In fact, as explained below, we know of an extractor construction that becomes useless against quantum adversaries~\cite{Gavinsky07}.

We believe that in the same way as communication complexity and Bell inequalities (multi prover games), the setting of randomness extractors provides a beautiful framework for studying the power and limitations of a quantum memory compared to a classical one.
Here we argue that the theory of operator spaces, sometimes also called ``quantized functional analysis'', provides a natural arena for studying this question. This theory has already been successfully applied in the context of understanding Bell inequality violations, see~\cite{Junge10,JP11} and references therein.

This document is structured as follows. In the next two subsections we define (quantum-proof) randomness extractors (Section~\ref{sec:intro_extractor}) and condensers (Section~\ref{sec:condenser_intro}), and give a summary of the known results that are relevant to our discussion. In Section~\ref{sec:overview} we present an overview of our results (leaving out all the proofs). This is then followed by some open questions stated in Section~\ref{sec:conclusions}. For the main body of the paper, we start with basic preliminaries on the theory of normed spaces and operator spaces (Section~\ref{sec:preliminaries}). In Section~\ref{sec:extractors} we prove our results about quantum-proof extractors. The last section is devoted to the proofs of our results concerning condensers (Section~\ref{sec:condenser}). Some technical arguments are deferred to the Appendices (Appendix~\ref{app:missing}--\ref{app:intersection}).


\subsection{Randomness extractors}\label{sec:intro_extractor}

Extractors map a weakly random system into (almost) uniform random bits, with the help of perfectly random bits called the seed. We use $N=2^{n}$ to denote the input system (consisting of strings of \(n\) bits), $M=2^{m}$ (bit-strings of length \(m\)) to denote the output system, and $D=2^{d}$ (\(d\) bits) to denote the seed system. Note that in a slight abuse of notation, we use the same letters for the actual system as well as its dimension as a linear vector space. An extractor is then a family of functions $\{f_1, \dots, f_{D}\}$ with $f_s : N \to M$ satisfying the following property. For any random variable on $N$ with
\begin{align}
H_{\min}(N)\eqdef-\log p_{\mathrm{guess}}(N)\geq k\ ,
\end{align}
where $p_{\mathrm{guess}}(N)$ denotes the maximal probability of guessing the input, and an independent and uniform seed $U$, the random variable $f_U(N)$ has a distribution which is $\eps$-close in total variation distance to the uniform distribution $\upsilon_M$ on $M$. For us, it is more convenient to state the definition in terms of probability distributions. For this we associate to the functions $f_s$ an $M\times N$ matrix $F_s$ where the entry $(y,x)$ is equal to one if $f_s(x) = y$ and zero otherwise. With this notation, we have for any probability distribution $P_{N}$ of a random variable on $N$, $F_s(P_{N})$ is the distribution of $f_s(N)$. That is, a $(k,\eps)$-extractor satisfies the following property. For all input probability distributions $P_{N}$ with $H_{\min}(N)_{P}\geq k$,
\begin{align}\label{eq:intro_weakmin}
\left\|\frac{1}{D}\cdot\sum_{s=1}^{D}F_s(P_N)-\upsilon_{M}\right\|_{\ell^1_M}\leq\eps\ .
\end{align}
This definition is also referred to as weak extractors. An important special case of extractors are strong extractors~\cite{Nissan96}, for which the seed is part of the output, i.e., the output space has the form $M = D \times M'$ and $f_s(x)=(s,f'_s(x))$ for some function $f'_s:N\to M'$ (with $M'=2^{m'}$). This means that the invested randomness, the seed $D$, is not used up and can safely be reused later. Alternatively, in a cryptographic context, the seed $D$ can be published without compromising the security. The condition~\eqref{eq:intro_weakmin} then reads as
\begin{align}\label{eq:intro_strongmin}
\frac{1}{D}\cdot\sum_{s=1}^{D}\left\|F_s'(P_N)-\upsilon_{M'}\right\|_{\ell^1_M}\leq\eps\ .
\end{align}
Such objects are needed for privacy amplification, since the eavesdropper is allowed to know which function is applied.

We now briefly discuss the parameters for which strong extractors exists. Typically we want to maximize the output length $m$ and minimize the seed length $d$. Radhakrishnan and Ta-Shma~\cite{Radhakrishnan00} show that every strong $(k,\eps)$-extractor necessarily has
\begin{align}\label{eq:extractor_converse}
m\leq k-2\log(1/\eps)+O(1)\quad\mathrm{and}\quad d\geq\log(n-k)+2\log(1/\eps)-O(1)\ .
\end{align}
Using the probabilistic method one can show that random functions achieves these bounds up to constants~\cite{Sipser88,Radhakrishnan00}. There exists a strong $(k,\eps)$-extractor with
\begin{align}\label{eq:extractor_prob}
m=k-2\log(1/\eps)-O(1)\quad\mathrm{and}\quad d=\log(n-k)+2\log(1/\eps)+O(1)\ .
\end{align}
Probabilistic constructions are interesting, but for applications we usually need explicit extractors. Starting with the work by Nisan and Ta-Shma~\cite{Nisan:1999kn} and followed by Trevisan's breakthrough result~\cite{Trevisan99} there has been a lot of progress in this direction, and there are now many constructions that almost achieve the converse bounds in~\eqref{eq:extractor_converse} (see the review articles~\cite{Shaltiel02,Vadhan07}). 

For applications in classical and quantum cryptography (see, e.g., \cite{Renner05,Koenig12}), for constructing device independent randomness amplification and expansion schemes (see, e.g., \cite{Chung14,Miller14,Brandao13,Coudron13}), and for applications in quantum Shannon theory (see, e.g., \cite{Dupuis12,Berta13}) it is important to find out if extractor constructions also work when the input source is correlated to another (possibly quantum) system $Q$. That is, we would like that for all classical-quantum input density matrices $\rho_{QN}$ with conditional min-entropy
\begin{align}
H_{\min}(N|Q)_{\rho}\eqdef-\log p_{\mathrm{guess}}(N|Q)_{\rho}\geq k\ ,
\end{align}
where $p_{\mathrm{guess}}(N|Q)$ denotes the maximal probability of guessing $N$ given $Q$, the output is uniform and independent of $Q$,\footnote{Other notions for weaker quantum adversaires have also been discussed in the literature, e.g., in the bounded storage model (see~\cite[Section 1]{De12} for a detailed overview).}
\begin{align}\label{eq:q_extractor}
\left\|\frac{1}{D}\sum_{s=1}^{D} (\id_Q \otimes F_s)(\rho_{QN})-\rho_{Q}\otimes\upsilon_{M}\right\|_{1}\leq\eps\ ,
\end{align}
and $\|\cdot\|_{1}$ denotes the trace norm (the quantum extension of the $\ell^1$-norm). Similarly, the corresponding condition for quantum-proof strong extractors reads
\begin{align}\label{eq:qs_extractor}
\frac{1}{D}\sum_{s=1}^{D} \left\|(\id_Q \otimes F'_s)(\rho_{QN})-\rho_{Q}\otimes\upsilon_{M'}\right\|_{1}\leq\eps\ .
\end{align}
K\"onig and Terhal~\cite[Proposition 1]{Koenig08} observed that if we restrict the system $Q$ to be classical with respect to some basis $\{\ket{e}\}_{e\in Q}$ then every $(k,\eps)$-extractor as in~\eqref{eq:intro_weakmin} is also a $\big(k+\log(1/\eps),2\eps\big)$-extractor in the sense of~\eqref{eq:q_extractor} (and the analogue statement for strong extractors is a special case of this). That is, even when the input source is correlated to a classical system $Q$, every extractor construction still works (nearly) equally well for extracting randomness. However, if $Q$ is quantum no such generic reduction is known and extractor constructions that also work for quantum $Q$ are called quantum-proof.\footnote{Note that the dimension of $Q$ is unbounded and that it is a priori unclear if there exist any extractor constructions that are quantum-proof (even with arbitrarily worse parameters). Furthermore, and in contrast to some claims in the literature, we believe that the question to what extent extractors are quantum-proof is already interesting for weak extractors. In particular, if weak extractors were perfectly quantum-proof then strong extractors would be perfectly quantum-proof as well (and we know that the latter is wrong~\cite{Gavinsky07}).} Examples of (approximately) quantum-proof extractors include:
\begin{itemize}
\item Spectral $(k,\eps)$-extractors are quantum-proof $(k,2\sqrt{\eps})$-extractors~\cite[Theorem 4]{Berta14}. This includes in particular two-universal hashing~\cite{Renner05,Tomamichel11}, two-wise independent permutations~\cite{Szehr11}, as well as sample and hash based constructions~\cite{Koenig11}.
\item One bit output strong $(k,\eps)$-extractors are quantum-proof strong $(k+\log(1/\eps),3\sqrt{\eps})$-extractors~\cite[Theorem 1]{Koenig08}.
\item  Strong $(k,\eps)$-extractors constructed along Trevisan's ideas~\cite{Trevisan99} are quantum-proof strong $\big(k+\log(1/\eps),3\sqrt{\eps}\big)$-extractors~\cite[Theorem 4.6]{De12} (see also~\cite{Ben-Aroya12}).
\end{itemize}
We emphasize that all these stability results are specifically tailored proofs that make use of the structure of the particular extractor constructions. In contrast to these findings it was shown by Gavinsky {\it et al.}~\cite[Theorem 1]{Gavinsky07} that there exists a valid (though contrived) strong extractor for which the decrease in the quality of the output randomness has to be at least $\eps\mapsto \Omega(m'\eps)$.\footnote{Since the quality of the output randomness of Gavinsky {\it et al.}'s construction is bad to start with, the decrease $\eps\mapsto \Omega(m'\eps)$ for quantum $Q$ already makes the extractor fail completely in this case.} As put forward by Ta-Shma~\cite[Slide 84]{TaShma13}, this then raises the question if the separation found by Gavinsky {\it et al.} is maximal, that is:
\begin{center}
\begin{minipage}{0.92\textwidth}
{\it Is every $(k,\eps)$-extractor a quantum-proof $\big(O(k+\log(1/\eps)),O(m\sqrt{\eps})\big)$-extractor or does there exists an extractor that is not quantum-proof with a large separation, say $\eps\mapsto (2^{m}\eps)^{\Omega(1)}$?}
\end{minipage}
\end{center}
We note that such a stability result would make every extractor with reasonable parameters (approximately) quantum-proof.


\subsection{Randomness condensers}\label{sec:condenser_intro}

In the general theory of pseudorandomness one interesting generalization of extractors are condensers. These objects were first defined in~\cite{RR99,RSW00} as an intermediate step in order to build extractors. For condensers the output is not necessarily close to the uniform distribution (as it is the case for extractors), but only close to a distribution with high min-entropy $k'$. More precisely, a (weak) $(k\to_{\eps}k')$-condenser is family of functions $\{f_1, \dots, f_{D}\}$ with $f_s:N\to M$ such that for all input probability distributions $P_{N}$ with $H_{\min}(N)_{P}\geq k$
\begin{align}\label{eq:w_condenser}
H_{\min}^{\eps}(M)_{\frac{1}{D}\sum_{s=1}^{D}F_{s}(P)}\geq k'\ ,
\end{align}
with the smooth min-entropy
\begin{align}
H_{\min}^{\eps}(N)_{P}\eqdef\sup_{R}H_{\min}(N)_{R}\ ,
\end{align}
and the supremum over all probability distributions $R_{N}$ such that $\|R_{N}-P_{N}\|_{\ell_1}\leq\eps$. Observe that when $k'=\log M$, this is exactly the condition for being a (weak) $(k,\eps)$-extractor. The reason this definition is called condenser is because we want constructions with $M < N$ so that the entropy is condensed in a smaller probability space. For the special case of strong condensers the output space has the form $M = D \times M'$ and $f_s(x)=(s,f'_s(x))$ for some function $f'_s:N\to M'$, and the condition~\eqref{eq:w_condenser} then reads
\begin{align}
H_{\min}^{\eps}(M'D)_{\frac{1}{D}\sum_{s=1}^{D}F'_{s}(P)\otimes\proj{s}}\geq k'\ .
\end{align}
A (weak) quantum-proof condenser is as follows. For all classical-quantum input density matrices $\rho_{QN}$ with conditional min-entropy $H_{\min}(N|Q)_{\rho}\geq k$, the output should be close to a distribution with high conditional min-entropy $k'$
\begin{align}\label{eq:qcondenser}
H_{\min}^{\eps}(M|Q)_{\frac{1}{D}\sum_{s=1}^{D}(\id\otimes F_{s})(\rho)}\geq k'\ ,
\end{align}
with the smooth conditional min-entropy
\begin{align}
H_{\min}^{\eps}(N|Q)_{\rho}\eqdef\sup_{\sigma}H_{\min}(N|Q)_{\sigma}\ ,
\end{align}
and the supremum over all classical-quantum density matrices $\sigma_{QN}$ such that $\|\sigma_{QN}-\rho_{QN}\|_{1}\leq\eps$. Similarly, the corresponding condition for quantum-proof strong condensers reads
\begin{align}
H_{\min}^{\eps}(M'D|Q)_{\frac{1}{D}\sum_{s=1}^{D}(\id\otimes F'_{s})(\rho)\otimes\proj{s}}\geq k'\ .
\end{align}
We note that the works~\cite{Koenig11,Dupuis13,Wullschleger14} can be understood as results about quantum-proof condensers. One reason why we would like to understand to what extent condensers are quantum-proof is that the best known explicit extractor constructions are built from condensers (see the review article~\cite{Vadhan07}). 

As an extractor is a special case of a condenser with full output entropy $k' = m$, one can understand the construction of Gavinsky {\it et al.}~\cite[Theorem 1]{Gavinsky07} also as a valid randomness condenser that is not quantum-proof. But a condenser has the output entropy as an additional parameter, so it is natural to ask whether this construction is a quantum-proof condenser with slightly worse parameters, e.g., $k + c'' \to_{c \eps} k' - c'$ for some constants $c, c'$ and $c''$. Note that when $c' > 0$, this does not correspond to an extractor condition anymore, as the output is only required to have large smooth min-entropy. It turns out that even if we relax the condition on the output min-entropy slightly, the condenser is still not quantum-proof. The reason is that the quantum attack given in~\cite{Gavinsky07} allows the adversary to correctly guess $\gamma$ bits of the output with a memory of size $O(\gamma \log n)$. Thus the smooth min-entropy of the output can be at most roughly $k' - \gamma$ if the input min-entropy is $n - \Omega(\gamma \log n)$.


\section{Overview of results}\label{sec:overview}

Here we state our results but leave out all the proofs, which can be found in Section~\ref{sec:extractors} and Section~\ref{sec:condenser}.

\subsection{Extractors}\label{sec:extr_results}

A linear vector space which is equipped with a norm is called a normed space (we restrict ourselves to finite-dimensional spaces). Special examples are the linear space $\C^N$, which can be equipped with the $\ell^1_N$-norm, the sum of all absolute values of vector entries, or the $\ell^\infty_N$-norm, the largest absolute value of vector entries. Both norms are useful for studying extractors, as the first norm encodes the normalization constraint (the inputs are probability distributions), while the second is just the exponential of the min-entropy. Linear maps between normed spaces are naturally equipped with norms, defined as the maximum norm of any output, given that the input had bounded norm. Of course, the norms on the input and the output spaces can be different. Our first result is that the extractor condition~\eqref{eq:intro_weakmin} can be rewritten as a condition on the norm of a linear map. In the expression~\eqref{eq:intro_weakmin}, observe that
\begin{align}
P_{N} \mapsto \frac{1}{D} \sum_{s=1}^{D} F_s(P_N)
\end{align}
is a linear map. In addition, as $P_N$ is a probability distribution, we can write $\nu_M = (\mathbf{1}^{T} P_N) \nu_M$, where $\mathbf{1}^T = (1, \dots ,1) \in \bR^N$ is the vector with all ones. As a result,
\begin{align}
P_{N} \mapsto \frac{1}{D} \sum_{s=1}^{D} F_s(P_N) - (\mathbf{1}^{T} P_N) \nu_M
\end{align}
is a linear function in the input distribution. We can associate to an extractor $\ext=\{f_s\}_{s\in D}$ a linear map from $N$ to $M$ given by
\begin{align}\label{eq:delta_map}
\Delta[\ext](P_N)\eqdef\frac{1}{D}\sum_{s=1}^{D}F_s(P_{N})- (\mathbf{1}^{T} P_{N})\nu_{M}\ .
\end{align}
Using this notation, the extractor condition can be written as follows: for all distributions $P_{N}$ with $\|P_{N}\|_{\ell^1_N}=1$ and $H_{\min}(N)_{P}\geq k$, $\|\Delta[\ext](P)\|_{\ell^1_M}\leq\eps$. In order to capture the input constraints on $P_{N}$ we now define for $k\in(0,\infty)$ the $\cap\{2^{k}\ell^{\infty}_N,\ell^{1}_N\}$-norm as
\begin{align}
\|P_{N}\|_{\cap\{2^{k}\ell^\infty_N,\ell^1_N\}}\eqdef\max\Big\{2^{k}\|P_{N}\|_{\ell^\infty_N},\|P_{N}\|_{\ell^1_N}\Big\}\ .
\end{align}
We then get the bounded norm
\begin{align}
\left\|\Delta[\ext]:\cap\{2^{k}\ell^{\infty}_N,\ell^1_N\}\to\ell^1_M\right\|\equiv\|\Delta[\ext]\|_{\cap\{2^{k}\ell^\infty_N,\ell^1_N\}\to \ell^1_M}\eqdef\sup_{\|P\|_{\cap\{2^{k}\ell^\infty,\ell^1\}}\leq1}\|\Delta[\ext](P_N)\|_{\ell^1_M}\ ,
\end{align}
that gives the following alternative characterization for extractors.

\begin{restatable}{proposition}{extToNorm}\label{prop:ext-to-norm}
Let $\ext = \{f_s\}_{s\in D}$ and $\Delta[\ext]$ as defined in~\eqref{eq:delta_map}. Then, we have
\begin{align}
\left\|\Delta[\ext]:\cap\{2^{k}\ell^\infty_N,\ell^1_N\}\to\ell^1_M\right\|\leq\eps\quad&\Rightarrow\quad\ext\,\mathrm{is}\,\mathrm{a}\,(k,\eps)\mbox{-}\,\mathrm{extractor}\label{eq:extractor_1}\\
\ext\,\mathrm{is}\,\mathrm{a}\,(k-1,\eps)\mbox{-}\,\mathrm{extractor}\quad&\Rightarrow\quad\left\|\Delta[\ext]:\cap\{2^k\ell^\infty_N,\ell^1_N\}\to\ell^1_M\right\|\leq4\eps\ .\label{eq:extractor_2}
\end{align}
\end{restatable}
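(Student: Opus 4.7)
The plan is to prove the two implications separately, with the forward direction being essentially a restatement and the reverse direction requiring a decomposition argument.

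For the first implication, I would observe that any probability distribution $P_N$ with $H_{\min}(N)_P\geq k$ satisfies $\|P_N\|_{\ell_1}=1$ and $\|P_N\|_{\ell_\infty}\leq 2^{-k}$, so $\|P_N\|_{\cap\{2^k\ell_\infty,\ell_1\}}=\max\{2^k\cdot 2^{-k},1\}=1$. Hence the norm bound on $\Delta[\ext]$ immediately gives $\|\Delta[\ext](P_N)\|_{\ell_1}\leq\eps$, which is exactly the $(k,\eps)$-extractor condition.

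For the second implication, the plan is as follows. Take any vector $v\in\bR^N$ with $\|v\|_{\cap\{2^k\ell_\infty,\ell_1\}}\leq 1$; that is, $\|v\|_{\ell_\infty}\leq 2^{-k}$ and $\|v\|_{\ell_1}\leq 1$. Write $v=v_+-v_-$ using the Jordan decomposition into nonnegative parts, and let $\alpha=\|v_+\|_{\ell_1}$, $\beta=\|v_-\|_{\ell_1}$, so that $\alpha+\beta\leq 1$ and $\|v_\pm\|_{\ell_\infty}\leq 2^{-k}$. Now pad each piece to a probability distribution using the uniform distribution $\nu_N$: set
\begin{align}
P_+\eqdef v_+ + (1-\alpha)\nu_N\,,\qquad P_-\eqdef v_- + (1-\beta)\nu_N\,.
\end{align}
These are probability distributions on $N$, and assuming the (harmless) normalization $N\geq 2^k$, one checks
\begin{align}
\|P_\pm\|_{\ell_\infty}\leq 2^{-k}+\tfrac{1}{N}\leq 2\cdot 2^{-k}=2^{-(k-1)}\,,
\end{align}
so $H_{\min}(N)_{P_\pm}\geq k-1$. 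The uniform distribution $\nu_N$ itself trivially satisfies $H_{\min}(N)_{\nu_N}=\log N\geq k-1$.

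Now comes the key linear algebra step: by construction, $v=P_+-P_-+(\alpha-\beta)\nu_N$, so by linearity of $\Delta[\ext]$,
\begin{align}
\Delta[\ext](v)=\Delta[\ext](P_+)-\Delta[\ext](P_-)+(\alpha-\beta)\,\Delta[\ext](\nu_N)\,.
\end{align}
Applying the $(k-1,\eps)$-extractor hypothesis to each of the three probability distributions $P_+$, $P_-$, $\nu_N$, and using $|\alpha-\beta|\leq\alpha+\beta\leq 1$, the triangle inequality yields $\|\Delta[\ext](v)\|_{\ell_1}\leq\eps+\eps+\eps=3\eps$, as desired. The main obstacle is the second implication: a generic vector $v$ in the unit ball of $\cap\{2^k\ell_\infty,\ell_1\}$ is signed and possibly sub-normalized, so it is not itself an extractor input; the trick of splitting into $v_\pm$ and padding each with $\nu_N$ costs precisely one bit of min-entropy (hence the parameter $k-1$) and introduces a correction term proportional to $\nu_N$, producing the factor of $3$.
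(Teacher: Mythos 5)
Your proof is correct and follows essentially the same route as the paper: the forward direction is the same direct unit-ball observation, and the reverse direction uses the same Jordan decomposition $v=v_+-v_-$ padded with the uniform distribution at the cost of one bit of min-entropy. The only (cosmetic) difference is that you write $v=P_+-P_-+(\alpha-\beta)\nu_N$ exactly and apply the triangle inequality once, which makes the bookkeeping of the $\nu_N$ correction term slightly tidier than the paper's term-by-term estimate, but the idea and the constant $3$ are identical.
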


Note that strong extractors are covered by this as a special case. To make this explicit we associate to a strong extractor $\ext=\{f_s\}_{s\in D}$ the linear map
\begin{align}\label{eq:delta_mapstrong}
\Delta[\ext]_{S}(P_N)\eqdef\frac{1}{D}\sum_{s=1}^{D}F_s(P_{N})\otimes\proj{s}_{D}-(\mathbf{1}^{T} P_{N})\nu_{M'}\otimes\nu_{D}\ ,
\end{align}
and the corresponding statement can then be read off by replacing $\Delta[\ext]$ with $\Delta[\ext]_{S}$ in Proposition~\ref{prop:ext-to-norm}.

The theory of normed spaces is often convenient in classical probability theory. However, if the systems of interests are in addition correlated with quantum systems, we have more structure available. A natural norm on quantum systems is the $\infty$-norm, the largest singular value. Hence we start by modeling a classical system as a normed space, and a quantum system as complex-valued matrices equipped with the $\infty$-norm. If we allow for correlations between the two, we have to define norms on their tensor product, fulfilling reasonable compatibility requirements. The framework of operator spaces axiomatizes such scenarios: an operator space is a normed space equipped with a sequence of norms describing possible correlations to quantum systems. If we now study linear maps between normed spaces, we can naturally consider these maps to be maps between operator spaces by letting them act trivially on the quantum part. Of course, the norm of the linear maps might change, since we now also allow for correlations to the quantum part (at the input as well as at the output). The associated norm, defined as the supremum with respect to quantum systems of any dimension, is called the completely bounded norm, or just $\cb$-norm.

From this discussion, it is reasonable to expect that the property of being quantum-proof can be modeled as a $\cb$-norm constraint. Indeed, we have that there exists operator spaces $\mathds{L}_N^{\infty}$ and $\mathds{L}^1_N$ defined on the normed spaces described in Proposition~\ref{prop:ext-to-norm} such that the following completely bounded norm
\begin{align}
\left\|\Delta[\ext]:\capcap\left\{2^k\mathds{L}_N^{\infty},\mathds{L}_N^1\right\}\to\mathds{L}^1_M\right\|_{\cb}\equiv\|\Delta[\ext]\|_{\capcap\{2^k\mathds{L}_N^{\infty},\mathds{L}_N^1\}\to\mathds{L}^1_M}
\end{align}
captures the property of being quantum-proof. Note that we use the notation $\capcap$ instead of $\cap$. The reason is that there is a natural operator space extension associated with the $\cap\{2^k\ell^\infty_N,\ell^1_N\}$-norm for which one would use the same name, but the operator space we consider here is slightly different.

\begin{restatable}{theorem}{extToNormQuantum}\label{thm:ext-to-norm-quantum}
Let $\ext=\{f_s\}_{s\in D}$ and $\Delta[\ext]$ as defined in~\eqref{eq:delta_map}. Then, we have
\begin{align}
\left\|\Delta[\ext]:\capcap\left\{2^k\mathds{L}_N^{\infty},\mathds{L}_N^1\right\}\to\mathds{L}^1_M\right\|_{\cb} \leq \eps\quad&\Rightarrow\quad\ext\,\mathrm{is}\,\mathrm{a}\,\mathrm{quantum}\mbox{-}\mathrm{proof}\,(k,2\eps)\mbox{-}\,\mathrm{extractor}\label{eq:quantum_ext1}\\
\ext\,\mathrm{is}\,\mathrm{a}\,\mathrm{quantum}\mbox{-}\,\mathrm{proof}\,(k-1,\eps)\mbox{-}\mathrm{extractor}\quad&\Rightarrow\quad\left\|\Delta[\ext]:\capcap\left\{2^k\mathds{L}_N^{\infty},\mathds{L}_N^1\right\}\to\mathds{L}^1_M\right\|_{\cb} \leq 8\eps\ .\label{eq:quantum_ext2}
\end{align}
\end{restatable}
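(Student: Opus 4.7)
The strategy is to lift Proposition~\ref{prop:ext-to-norm} from the normed-space to the operator-space level by choosing structures on $\ell_\infty^N, \ell_1^N, \ell_1^M$ so that classical-quantum states with conditional min-entropy at least $k$ populate the unit ball of the input operator space $\capcap\{2^k\ell_\infty, \ell_1\}$, and the output operator-space norm agrees with the trace norm on cq-operators. The natural choice is the minimal (commutative) structure on $\ell_\infty^N$, giving $\|X\|_{M_Q(\ell_\infty^N)} = \max_x \|X_x\|_\infty$ for $X = \sum_x X_x \otimes e_x$, and the dual ``trace-class'' structure on the $\ell_1$ factors, so that a cq-operator $\sum_y B_y \otimes \ketbra{y}{y}$ has $M_Q(\ell_1^M)$-norm equal to its trace norm $\sum_y \|B_y\|_1$. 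The intersection operator space then carries the maximum of the two norms at each matrix level $Q$.

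The essential identification is that a cq-state $\rho_{QN} = \sum_x \rho_Q(x) \otimes \ketbra{x}{x}$ with $H_{\min}(N|Q)_\rho \geq k$ --- equivalently, $\rho_Q(x) \leq 2^{-k}\sigma_Q$ for some state $\sigma_Q$ --- lies in the unit ball of the input operator space, since both $\max_x \|\rho_Q(x)\|_\infty \leq 2^{-k}$ and $\sum_x \mathrm{tr}\,\rho_Q(x) = 1$ follow. For the forward implication~\eqref{eq:quantum_ext1}, applying the cb-norm hypothesis to $\rho_{QN}$ bounds $\|(\id_Q \otimes \Delta[\ext])(\rho_{QN})\|_{M_Q(\ell_1^M)}$ by $\eps$, which under the trace-class identification equals the trace norm featured in the quantum-proof extractor condition. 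The factor of $2$ is absorbed when converting between the Hermitian, potentially signed, element sitting in the operator space and the positive cq-state used in the min-entropy definition.

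For the reverse implication~\eqref{eq:quantum_ext2}, I would take an arbitrary $X$ in the unit ball of $\capcap\{2^k \ell_\infty, \ell_1\}$, split it into self-adjoint and anti-self-adjoint parts and each of these into positive and negative parts, obtaining four positive cq-operators with $X = X_1 - X_2 + i(X_3 - X_4)$. Each $X_j$ inherits a block-wise operator-norm bound of essentially $2^{-k}$ and bounded trace, so after normalization $\tilde X_j = X_j/\mathrm{tr}\,X_j$ is a cq-state with conditional min-entropy at least $k-1$ (the $-1$ accommodating the normalization bookkeeping). Applying the quantum-proof $(k-1,\eps)$-extractor hypothesis to each $\tilde X_j$ and reassembling via the triangle inequality, while tracking the $4 \times 2$ constants, yields the claimed bound $\|\Delta[\ext]\|_{\cb} \leq 8\eps$. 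The main obstacle is precisely this decomposition: the four-way splitting into positive cq-pieces must be compatible with the intersection operator-space norm so that each normalized piece genuinely has conditional min-entropy at least $k-1$ without further blowup, with the passage from the classical factor $3$ in~\eqref{eq:extractor_2} to the quantum factor $8$ reflecting the cost of complex Hermitian decomposition into four positive operators rather than two signed probability pieces.
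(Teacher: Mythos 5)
There is a genuine gap, and it occurs in both directions because the proposal works with the wrong operator space structures. The theorem is about the specific spaces $\capcap\{2^k\ell_\infty,\ell_1\}$ (Definition~\ref{def:intersection_norms}, a factorization norm built from the Haagerup tensor product) and $S^1_M$ (Proposition~\ref{prop:s1-os}), not structures you are free to choose. Your proposed output structure, with $\|\sum_y B_y\otimes\proj{y}\|_{M_Q}=\sum_y\|B_y\|_1$, is not an operator space at all: it violates Ruan's direct-sum axiom~\eqref{eq:cond-direct-sum}, since the norm of $x\oplus x$ would be $2\|x\|$ rather than $\|x\|$. The actual output norms are the $(\infty;1)$ norms, which at matrix level $Q\geq 2$ are in general strictly \emph{smaller} than the trace norm (e.g.\ $\|\1_{QM}\|_{(\infty;1)}=M$ while $\|\1_{QM}\|_{(1;1)}=QM$). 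Consequently, in the forward direction the cb-norm hypothesis only yields $\|\Delta[\ext](\rho_{QN})\|_{(\infty;1)}\leq\eps$, which does not bound the trace norm appearing in~\eqref{eq:q_extractor}. The paper bridges this by writing $\Delta[\ext](\rho_{QN})=\sigma_Q^{1/2}\Delta[\ext](\sigma_Q^{-1/2}\rho_{QN}\sigma_Q^{-1/2})\sigma_Q^{1/2}$ with $\sigma_Q=(\omega_Q+\rho_Q)/2$, so that the trace norm of the left-hand side is dominated by the $(\infty;1)$ norm of the conjugated object, whose $\capcap$ norm is at most $2$; this is where the factor $2$ in~\eqref{eq:quantum_ext1} actually originates, not from a signed-versus-positive conversion.

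The reverse direction contains the more serious error. A positive block-diagonal $X_j=\sum_x X_j(x)\otimes\proj{x}$ with $\max_x\|X_j(x)\|_\infty\leq 2^{-k}$ and trace at most one does \emph{not} normalize to a state of conditional min-entropy about $k$: min-entropy is governed by the $(1;\infty)$ norm, i.e.\ by domination $X_j(x)\leq 2^{-k}\sigma_Q$ for a \emph{trace-one} $\sigma_Q$, not by $X_j(x)\leq 2^{-k}\1_Q$. For instance $X(x)=2^{-k}\kettbra{\psi_x}$ with orthonormal $\ket{\psi_x}$, $x=1,\dots,Q\leq 2^k$, satisfies your block-wise bounds yet has zero conditional min-entropy after normalization. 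The paper obtains states of min-entropy $k-1$ only after conjugating by the Hilbert--Schmidt-normalized operators $c,d$ that arise from unpacking the output $(\infty;1)$ norm (a step absent from your argument, which again treats the output norm as the plain trace norm). Moreover, the four-way splitting must come from the factorization $z=ab^*$ defining the $\capcap$ norm (as in Proposition~\ref{prop:decomposition-positive}, or via the positive $2\times2$-block matrix $\hat z$ with $\|\hat z\|_{\capcap}\leq 4$), not from the naive Hermitian/positive-part decomposition, which is not known to respect the $\capcap$ norm. The constant $8$ is then $2\times 4$ from extracting an off-diagonal block of $\hat z$ and from $\|\hat z\|_{\capcap}\leq 4\|z\|_{\capcap}$, rather than a count of four Hermitian pieces.
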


Again the special case of strong extractors just follows by replacing $\Delta[\ext]$ with $\Delta[\ext]_{S}$ in Theorem~\ref{thm:ext-to-norm-quantum}.


\subsubsection*{Applications}

We conclude that the ratio between the bounded norm $\cap\{2^{k}\ell^\infty_N,\ell^1_N\} \to \ell^1_M$ and its completely bounded extension in Theorem~\ref{thm:ext-to-norm-quantum} can be used to quantify to what extent extractors are quantum-proof. This type of comparison is of course very well studied in the theory of operator spaces. As a first straightforward application, we can use dimension dependent bounds for the maximal possible ratios between the completely bounded norm and the bounded norm,
\begin{align}
\|\cdot\|_{\capcap\{2^k\mathds{L}_N^{\infty},\mathds{L}_N^1\}\to\mathds{L}^1_M}\leq\sqrt{2^{m}}\|\cdot\|_{\cap\{2^{k}\ell^\infty_N,\ell^1_N\}\to\ell^1_M}\ ,
\end{align}
and with Proposition~\ref{prop:ext-to-norm} and Theorem~\ref{thm:ext-to-norm-quantum} we find that every $(k,\eps)$-extractor is a quantum-proof $(k+1,8\sqrt{2^{m}}\eps)$-extractor.\footnote{We should point out that showing a similar bound, where the quantum error is upper bounded by $2^m$ multiplied by the classical error, can be obtained with a basic argument (not making use of any operator space theory).} If $M$ is small this bound is non-trivial for weak extractors, but for strong extractors (for which the seed $D$ is part of the output $M=D\times M'$) the bound becomes useless. However, using an operator version of the Cauchy-Schwarz inequality due to Haagerup we find the following bound for strong extractors.

\begin{restatable}{theorem}{smallerrorent}\label{thm:small-error-ent}
Let $\ext=\{f_s\}_{s\in D}$, and $\Delta[\ext]_{S}$ as defined in~\eqref{eq:delta_mapstrong}. Then, we have
\begin{align}
\left\|\Delta[\ext]_{S}:\cap\left\{2^{k}\ell^\infty_N,\ell^1_N\right\}\to\ell^1_M\right\|\leq\eps\,\Rightarrow\,\left\|\Delta[\ext]_{S}:\capcap\left\{2^{k+\log(1/\eps)}\mathds{L}_N^{\infty},\mathds{L}_N^1\right\}\to\mathds{L}^1_M\right\|_{\cb}\leq4\sqrt{2^{m'}}\sqrt{2\eps}\ .
\end{align}
\end{restatable}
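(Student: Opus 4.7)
The key idea is to exploit the block-diagonal structure of $\Delta[\ext]_S$ over the seed, together with an operator Cauchy--Schwarz inequality of Haagerup type, so that the output-side dimensional cost is only $\sqrt{2^{m'}}$ (per-seed) rather than the naive $\sqrt{2^{m}}=\sqrt{D\cdot 2^{m'}}$ that a blind bounded-vs-cb dimension inequality would give. The quadratic improvement from $\eps$ to $\sqrt{2\eps}$ will be paid for by the $\log(1/\eps)$ shift in the min-entropy hypothesis, in the standard spirit of the ``square-root trick''.

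Classically, the hypothesis unfolds as $\frac{1}{D}\sum_{s=1}^{D}\|F'_s(P_N)-(\mathbf{1}^{T} P_N)\nu_{M'}\|_{\ell_1}\leq\eps$ for every $P_N$ with $\|P_N\|_{\ell_1}\leq 1$ and $\|P_N\|_{\ell_\infty}\leq 2^{-k}$. The strengthened hypothesis tightens the pointwise bound to $\|P_N\|_{\ell_\infty}\leq \eps\,2^{-k}$, hence $\|P_N\|_{\ell_2}^2\leq \eps\,2^{-k}$. Applying Cauchy--Schwarz in the output coordinate (paying $\sqrt{2^{m'}}$) and in the seed index yields a Hilbertian estimate of the form ``the rescaled $\Delta[\ext]_S$ has norm at most $C\sqrt{2^{m'}}\sqrt{\eps}$'', now respecting the seed block structure. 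This is the classical target that the cb analysis will have to hit.

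For the cb lift, I would rewrite the conditional min-entropy assumption on a cq-state $\rho_{QN}$ as the operator inequality $\rho_{QN}\leq \eps\,2^{-k}\,\sigma_Q\otimes I_N$ for some density $\sigma_Q$, sandwich $\rho_{QN}=\rho_{QN}^{1/2}\rho_{QN}^{1/2}$, and apply Haagerup's operator Cauchy--Schwarz to factorize $\|(\id_Q\otimes\Delta[\ext]_S)(\rho_{QN})\|_{1}$ into two Hilbert-space-valued pieces: one carrying the $Q$-side via $\sigma_Q$, one carrying the $N$-side via the extractor. Because $\Delta[\ext]_S$ is a direct sum over the seed, this factorization commutes with the seed index and does not produce a $\sqrt D$ factor; substituting the classical Hilbertian estimate of the previous paragraph and absorbing the small losses from the operator-space modification $\cap\to\capcap$ then delivers the announced $4\sqrt{2^{m'}}\sqrt{2\eps}$. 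The main obstacle will be pinning down the correct operator-space structure on $\capcap\{2^{k}\ell_\infty,\ell_1\}$ -- which, as the paper flags, is not the naive intersection -- so that the operator min-entropy inequality translates to a single clean norm bound on $\rho_{QN}^{1/2}$; once that dictionary is in place, the Haagerup factorization and the per-seed classical $\ell_2$ estimate match up without further loss.
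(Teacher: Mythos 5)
Your plan assembles the right ingredients --- duality to reduce to positive operators, Haagerup's operator Cauchy--Schwarz, the per-seed $\sqrt{2^{m'}}$ cost, and the $\log(1/\eps)$ entropy shift feeding a square-root trick --- and this is indeed the skeleton of the paper's argument. But there is a genuine gap at the decisive step: how the classical hypothesis $\|\Delta[\ext]_S:\cap\{2^k\ell_\infty,\ell_1\}\to\ell_1\|\leq\eps$ is actually invoked after the operator Cauchy--Schwarz. In the paper, duality for $S^1$ together with Proposition~\ref{prop:decomposition-positive} first reduce the cb-norm to bounding $\big\|\E_s\sum_y\big(\sum_x(\delta_{f_s(x)=y}-2^{-m'})p(x)\big)\otimes q(s,y)\big\|_{(\infty;\infty)}$ over positive $p(x)$ with $0\leq p(x)\leq\eps\,2^{-k}\1$ and $\sum_x p(x)\leq\1$ (this reduction is also where the prefactor $4$ comes from, which your plan never accounts for). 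Haagerup's inequality is then applied to separate the input operators $p(x)$ from the output operators $q(s,y)$ --- not, as you propose, the $Q$-side from the $N$-side via $\rho^{1/2}\rho^{1/2}$. The $q$-factor yields $\sqrt{2^{m'}}$; the $p$-factor, tested against a Hilbert--Schmidt unit vector $C$, produces the quantity $l_{xx'}(C)=\tr[Cp(x)C^*p(x')]$, which is a genuine (sub-normalized) bipartite distribution on $N\times N$, not a product $P(x)P(x')$. The paper then observes $l_{xx'}(C)\leq\eps\,2^{-k}\tr[CC^*p(x')]$, so that $H_{\min}(X|X')_{l(C)}\geq k+\log(1/\eps)$, applies Markov to conclude that all but an $\eps$-fraction of the conditional distributions $l_{x|x'}(C)$ have min-entropy at least $k$, and only then applies the classical $\ell_1$ extractor bound to each good conditional distribution, obtaining $2\eps$ overall.

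Your substitute for this step --- a ``per-seed classical $\ell_2$ estimate'' obtained by Cauchy--Schwarz in the output coordinate --- does not survive the lift to operators. After the operator Cauchy--Schwarz, the first factor is not $\|F_s'(P)-\nu_{M'}\|_{\ell_2}$ for a single high-min-entropy distribution $P$; it is the norm of an operator built from all pairs $(x,x')$, and non-commutativity forces you through the bipartite object $l_{xx'}(C)$, for which no $\ell_2$-type hypothesis is available --- only the $\ell_1$ extractor hypothesis applied conditionally on $x'$ after the min-entropy/Markov step. Without that reduction your plan has no route from the operator data back to the classical assumption, so as written the proof does not close.
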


By Proposition~\ref{prop:ext-to-norm} every strong $(k,\eps)$-extractor becomes a quantum-proof strong $(k+1+\log(1/(4\eps)),16\sqrt{2^{m'}}\sqrt{2\eps})$-extractor. This extends the result of K\"onig and Terhal for $m'=1$~\cite[Theorem 1]{Koenig08} to arbitrary output sizes. The bound is useful as long as the output size is small but does of course not match Ta-Shma's conjecture that asks for an error dependence of $O(m'\sqrt{\eps})$. We note that in independent work, \cite{CLW14} observed that one can use the quantum XOR lemma of~\cite{KK12} to obtain a result slightly weaker than Theorem~\ref{thm:small-error-ent}, namely the error of the extractor against quantum adversaries can be bounded by $O(2^{m'} \sqrt{\eps})$. 

We can also analyze the simpler bounded norm $2^{k}\ell^\infty_N\to\ell^1_M$ instead of $\cap\{2^{k}\ell^\infty_N,\ell^1_N\}\to \ell^1_M$. Grothendieck's inequality~\cite[Corollary 14.2]{Pis12} then shows that the ratio between the bounded norm $2^{k}\ell^\infty_N\to\ell^1_M$ and its completely bounded extension is at most Grothendieck's constant $K_{G}<1.8$:
\begin{align}
\|\cdot\|_{2^{k}\mathds{L}^\infty_N\to\mathds{L}^1_M}\leq K_{G}\|\cdot\|_{2^{k}\ell^\infty_N\to\ell^1_M}\ .
\end{align}
This gives the following bound.\footnote{Interestingly this also implies that extractors for non-normalized inputs are automatically quantum-proof (e.g., spectral extractors~\cite{Berta14}), and hence that the $\ell^1_N$-normalization condition on the input is crucial for studying to what extent extractors are quantum-proof.}

\begin{restatable}{theorem}{highMinEntExt}\label{thm:high-min-ent-ext}
Let $\ext = \{f_s\}_{s\in D}$ and $\Delta[\ext]$ as defined in~\eqref{eq:delta_map}. Then, we have
\begin{align}
\left\|\Delta[\ext]:\cap\left\{2^{k}\ell^\infty_N,\ell^1_N\right\}\to\ell^1_M\right\|\leq\eps\,\Rightarrow\,\left\|\Delta[\ext]:\capcap\left\{2^k\mathds{L}_N^{\infty},\mathds{L}_N^1\right\}\to\mathds{L}^1_M\right\|_{\cb}\leq K_{G}2^{n-k}\eps\ .
\end{align}
\end{restatable}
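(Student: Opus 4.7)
The plan is to chain three inequalities together: first pass from the intersection norm $\cap\{2^k\ell_\infty,\ell_1\}$ to the simpler norm $2^k\ell_\infty$ at the classical (bounded-norm) level, then invoke Grothendieck's inequality to promote the bounded norm $2^k\ell_\infty\to\ell_1$ to its cb-extension, and finally transfer the cb-bound back from $2^k\ell_\infty$ to the operator space $\capcap\{2^k\ell_\infty,\ell_1\}$.

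For the first step, I would observe that every vector $P_N \in \bR^N$ satisfies
\begin{align}
\|P_N\|_{\cap\{2^k\ell_\infty,\ell_1\}}=\max\Big\{\|P_N\|_{2^k\ell_\infty},\,\|P_N\|_{\ell_1}\Big\}\leq 2^{n-k}\,\|P_N\|_{2^k\ell_\infty}\ ,
\end{align}
since $\|P_N\|_{\ell_1}\leq 2^n\|P_N\|_{\ell_\infty}=2^{n-k}\|P_N\|_{2^k\ell_\infty}$ and $\|P_N\|_{2^k\ell_\infty}\leq 2^{n-k}\|P_N\|_{2^k\ell_\infty}$ (assuming $n\geq k$; the case $n<k$ is trivial). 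A rescaling argument then yields
\begin{align}
\|\Delta[\ext]\|_{2^k\ell_\infty\to\ell_1}\leq 2^{n-k}\,\|\Delta[\ext]\|_{\cap\{2^k\ell_\infty,\ell_1\}\to\ell_1}\leq 2^{n-k}\eps\ .
\end{align}

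For the second step, I apply Grothendieck's inequality in its operator space formulation (Pisier, Corollary 14.2), which states $\|\cdot\|_{\cb,2^k\ell_\infty\to\ell_1}\leq K_G\|\cdot\|_{2^k\ell_\infty\to\ell_1}$. Combining with the first step gives
\begin{align}
\|\Delta[\ext]\|_{\cb,\,2^k\ell_\infty\to\ell_1}\leq K_G\,2^{n-k}\eps\ .
\end{align}

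The third step requires comparing the operator space structures on $\capcap\{2^k\ell_\infty,\ell_1\}$ and $2^k\ell_\infty$. At every matrix level, the (possibly slightly modified) intersection operator space must satisfy $\|\cdot\|_{M_n(\capcap\{2^k\ell_\infty,\ell_1\})}\geq\|\cdot\|_{M_n(2^k\ell_\infty)}$, so that the unit balls shrink and the cb-norm can only decrease:
\begin{align}
\|\Delta[\ext]\|_{\cb,\,\capcap\{2^k\ell_\infty,\ell_1\}\to\ell_1}\leq\|\Delta[\ext]\|_{\cb,\,2^k\ell_\infty\to\ell_1}\leq K_G\,2^{n-k}\eps\ ,
\end{align}
which is the claimed bound. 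The main obstacle I expect is precisely this third step: the paper flags that its $\capcap$ is a tweak of the standard intersection operator space, so one has to verify from the explicit definition (given earlier in the preliminaries) that matrix norms on $\capcap\{2^k\ell_\infty,\ell_1\}$ still dominate those on $2^k\ell_\infty$. Once that monotonicity is in hand, the remaining ingredients are the Grothendieck inequality and the elementary norm comparison, both of which are immediate.
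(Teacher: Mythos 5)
Your proposal is correct and follows essentially the same route as the paper's own proof: the elementary comparison $\|P\|_{\cap\{2^{k}\ell_\infty,\ell_1\}}\leq 2^{n-k}\|P\|_{2^{k}\ell_\infty}$, then Grothendieck's inequality for the $2^{k}\ell_\infty\to\ell_1$ cb-norm, then monotonicity of the cb-norm under the larger matrix norms of $\capcap$. The monotonicity you flag as the potential obstacle is exactly what Proposition~\ref{prop:new-old-norm} supplies, since $\|x\|_{M_Q(\capcap)}\geq\|x\|_{M_Q(\cap)}\geq 2^{k}\|x\|_{(\infty;\infty)}$.
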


Hence we get by Proposition~\ref{prop:ext-to-norm} and Theorem~\ref{prop:ext-to-norm} that every $(k,\eps)$-extractor is a quantum-proof $(k+1,8K_{G}2^{n-k}\eps)$-extractor. This applies to weak and strong extractors equally since the statement is independent of the output size. So in particular if the input min-entropy is very high, e.g., $k\geq n-\log n$, we get a quantum-proof $(k+1,8K_{G}n\eps)$-extractor. Hence, very high min-entropy extractors are always (approximately) quantum-proof.\footnote{This result tightly matches with that the extractor construction of Gavinsky {\it et al.}~\cite{Gavinsky07}. In fact, their construction is an extractor for $k \geq n - n^{c}$ with error $\eps = n^{-c'}$ for some constants $c,c'$ and it fails to be quantum-proof even if $k=n-O(\log n)$ with constant $\eps$. Theorem~\ref{thm:high-min-ent-ext} says is that if the error in the classical case was a slightly smaller, for example super-polynomially small, then the extractor would have been quantum-proof.} We emphasize again that Theorem~\ref{thm:small-error-ent} and Theorem~\ref{thm:high-min-ent-ext} do only make use of the definition of extractors and not of any potential structure of extractor constructions. Finally, we note that after this paper first appeared online, we were able to prove our results about quantum-proof extractors without the operator space theory language and only using semidefinite programming~\cite{BFS15_tqc,berta16}.

\begin{table}[ht]
\centering
\begin{tabular}{l|ll}
Strong $(k,\eps)$-extractor & Quantum-proof & with parameters\\
\hline\hline
Probabilistic constructions & ? & \\
\hline
Spectral (e.g., two-universal hashing) & $\checkmark$~~~\cite[Thm.~4]{Berta14} & $(k,c\sqrt{\eps})$\\
\hline
Trevisan based & $\checkmark$~~~\cite[Thm.~4.6]{De12} & $(k+\log(1/\eps),c\sqrt{\eps})$\\
\hline
One-bit output & $\checkmark$~~~\cite[Thm.~1]{Koenig08} &$(k+\log(1/\eps),c\sqrt{\eps})$\\
\hline
Small output & $\checkmark$~~~[Thm.~\ref{thm:small-error-ent}] & $(k+\log(1/\eps),c\sqrt{2^m}\sqrt{2\eps})$\\
\hline
High entropy & $\checkmark$~~~[Thm.~\ref{thm:high-min-ent-ext}] & $(k+1,c2^{n-k}\eps)$\\
\hline
\end{tabular}
\caption{Stability results for strong extractors: input $N=2^{n}$, output $M=2^{m}$, seed $D=2^{d}$, min-entropy $k$, error parameter $\eps$, and $c$ represents possibly different constants.
}
\label{table:stability}
\end{table}


\subsection{Condensers}\label{sec:cond_results}

The framework of normed spaces and operator spaces also allows to analyze to what extent condensers are quantum-proof. Analogously as for extractors, we associate to a condenser $\cond=\{f_{s}\}_{s\in D}$ a linear map from $N$ to $M$ given by
\begin{align}\label{eq:cond_map}
[\cond](P)\eqdef\frac{1}{D}\cdot\sum_{s=1}^{D}F_{s}(P_{N})\ .
\end{align}
The input constraint for condenser is the same as for extractors and in order to characterize the output constraint~\eqref{eq:qcondenser} we define the norm
\begin{align}
\|Q_{M}\|_{\Sigma\{2^{k'}\ell^\infty_M,\ell^1_M\}}\eqdef\inf\left\{2^{k'}\|Q_{1}\|_{\ell^\infty_M}+\|Q_{2}\|_{\ell^1_M}:Q_{1}+Q_{2}=Q_{M}\right\}\ .
\end{align}
The bounded norm
\begin{align}
\left\|[\cond]:\cap\left\{2^{k}\ell^\infty_N,\ell^1_N\right\}\to\Sigma\left\{2^{k'}\ell^\infty_M,\ell^1_M\right\} \right\|&\equiv\|[\cond]\|_{\cap\{2^{k}\ell^\infty_N,\ell^1_N\}\to\Sigma\{2^{k'}\ell^\infty_M,\ell^1_M\}}\\
&\eqdef\sup_{\|P\|_{\cap\{2^{k}\ell^\infty,\ell^1\}\leq1}}\|[\cond](P_N)\|_{\Sigma\{2^{k'}\ell^\infty_M,\ell^1_M\}}
\end{align}
then gives the following norm characterization for condensers.

\begin{restatable}{proposition}{condToNorm}\label{prop:cond-to-norm}
Let $\cond=\{f_{s}\}_{s\in D}$ and $[\cond]$ as defined in~\eqref{eq:cond_map}. Then, we have
\begin{align}
\|[\cond]:\cap\{2^{k}\ell^\infty_N,\ell^1_N\}\to\Sigma\{2^{k'}\ell^\infty_M,\ell^1_M\} \|\leq\eps\quad&\Rightarrow\quad\cond\,\mathrm{is}\,\mathrm{a}\,\left(k\to_{\eps}k'+\log(1/\eps)\right)\mbox{-}\,\mathrm{condenser}\label{eq:condenser_1}\\
\cond\,\mathrm{is}\,\mathrm{a}\,\left(k-1\to_{\eps}k'+\log(1/\eps)\right)\mbox{-}\,\mathrm{condenser}\quad&\Rightarrow\quad\|[\cond]:\cap\{2^{k}\ell^\infty_N,\ell^1_N\} \to \Sigma\{2^{k'}\ell^\infty_M,\ell^1_M\}\|\leq8\eps\ .\label{eq:condenser_2}
\end{align}
\end{restatable}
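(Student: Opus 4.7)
The plan is to prove the two implications of Proposition~\ref{prop:cond-to-norm} separately, by interpreting the $\Sigma\{2^{k'}\ell_\infty,\ell_1\}$ norm as a dual formulation of the smooth min-entropy.

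For the forward direction~\eqref{eq:condenser_1}, let $P_N$ be an input probability distribution with $H_{\min}(P)\geq k$, so $\|P_N\|_{\ell_\infty}\leq 2^{-k}$ and $\|P_N\|_{\ell_1}=1$; together these give $\|P_N\|_{\cap\{2^k\ell_\infty,\ell_1\}}\leq 1$. The hypothesis then yields a decomposition $[\cond](P_N)=Q_1+Q_2$ with $2^{k'}\|Q_1\|_{\ell_\infty}+\|Q_2\|_{\ell_1}\leq\eps$, hence $\|Q_1\|_{\ell_\infty}\leq\eps 2^{-k'}$ and $\|Q_2\|_{\ell_1}\leq\eps$. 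To produce a probability distribution $R$ witnessing the bound $H_{\min}^\eps([\cond](P_N))\geq k'+\log(1/\eps)$, I would truncate $[\cond](P_N)$ pointwise at the threshold $\eps 2^{-k'}$ and redistribute the removed mass onto coordinates still below the threshold. The pointwise inequality $([\cond](P_N))_i\leq(Q_1)_i+|(Q_2)_i|\leq\eps 2^{-k'}+|(Q_2)_i|$ then gives $\|R-[\cond](P_N)\|_{\ell_1}\leq\|Q_2\|_{\ell_1}\leq\eps$, as desired.

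For the reverse direction~\eqref{eq:condenser_2}, the obstacle is that the condenser hypothesis concerns probability distributions, while the norm bound must be tested against arbitrary $P$ with $\|P\|_{\cap\{2^k\ell_\infty,\ell_1\}}\leq 1$. I would handle this by writing such a general $P$ as a signed combination of three probability distributions each of min-entropy at least $k-1$. Concretely, set $P=P^+-P^-$ with $P^\pm\geq 0$ supported on disjoint sets, and $c_\pm=\|P^\pm\|_{\ell_1}$. Pad each piece with uniform noise:
\begin{align}
A=P^++(1-c_1)\upsilon_N,\qquad B=P^-+(1-c_2)\upsilon_N.
\end{align}
Both $A$ and $B$ are probability distributions with $\|\cdot\|_{\ell_\infty}\leq 2^{-k}+2^{-n}\leq 2^{-(k-1)}$ (assuming $k\leq n$), and a direct calculation shows $P=A-B+(c_1-c_2)\upsilon_N$ with $|c_1-c_2|\leq 1$; the third summand $\upsilon_N$ itself has min-entropy $n\geq k-1$.

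Applying the condenser hypothesis to each of $A$, $B$, $\upsilon_N$ yields, for each $X\in\{A,B,\upsilon_N\}$, a decomposition $[\cond](X)=R_X+S_X$ with $\|R_X\|_{\ell_\infty}\leq\eps 2^{-k'}$ and $\|S_X\|_{\ell_1}\leq\eps$ (where $R_X$ is the smooth min-entropy witness and $S_X=[\cond](X)-R_X$). Assembling by linearity,
\begin{align}
[\cond](P)=\bigl(R_A-R_B+(c_1-c_2)R_{\upsilon_N}\bigr)+\bigl(S_A-S_B+(c_1-c_2)S_{\upsilon_N}\bigr)\equiv Q_1+Q_2,
\end{align}
and the triangle inequality gives $\|Q_1\|_{\ell_\infty}\leq 3\eps 2^{-k'}$ and $\|Q_2\|_{\ell_1}\leq 3\eps$, so $\|[\cond](P)\|_{\Sigma\{2^{k'}\ell_\infty,\ell_1\}}\leq 6\eps$. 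The main technical step, and the source of the two constants in the statement, is precisely this padding-and-triangle-inequality decomposition: it accounts both for the loss $k\mapsto k-1$ (the unavoidable cost of turning $P^\pm$ into probability distributions without doubling the $\ell_\infty$ norm) and, through the three summands, for the factor $6$ in the error. The remainder of the argument is pure bookkeeping.
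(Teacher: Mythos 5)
Your proposal is correct and follows essentially the same route as the paper: the forward direction extracts the $\Sigma$-decomposition from the norm bound and converts it into a nonnegative min-entropy witness, and the reverse direction uses exactly the paper's $P=P^{+}-P^{-}$ split padded with $\upsilon_N$, with the three applications of the condenser hypothesis each costing $2\eps$ to yield the factor $6$. One small caution in the forward direction: redistributing the truncated mass onto other coordinates doubles the $\ell_1$ distance, so you should simply keep the sub-normalized truncation as the witness (equivalently, take $Q_1'=\max(Q_1,0)$ as the paper does).
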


Note that strong condensers are covered by this as a special case. To make this explicit we associate to a strong condenser $\cond=\{f_s\}_{s\in D}$ the linear map
\begin{align}\label{eq:cond_strongmap}
[\cond]_{S}(P_N)\eqdef\frac{1}{D}\cdot\sum_{s=1}^{D}F_{s}(P_{N})\otimes\proj{s}_{D}\ ,
\end{align}
and the corresponding statement can then be read off by replacing $[\cond]$ with $[\cond]_{S}$ in Proposition~\ref{prop:cond-to-norm}. Again we have that there exist operator spaces $\mathds{L}_N^{\infty}$ and $\mathds{L}^1_N$ defined on the normed spaces described in Proposition~\ref{prop:cond-to-norm} such that the following completely bounded norm
\begin{align}
\left\|[\cond]:\capcap\left\{2^{k}\mathds{L}_N^\infty,\mathds{L}_N^1\right\}\to\sigsig\left\{2^{k'}\mathds{L}_M^\infty,\mathds{L}^1_M\right\}\right\|_\cb\equiv\left\|[\cond]\right\|_{\capcap\{2^{k}\mathds{L}_N^\infty,\mathds{L}_N^1\}\to\sigsig\{2^{k'}\mathds{L}_M^\infty,\mathds{L}^1_M\}}
\end{align}
captures the property of being quantum-proof.

\begin{restatable}{theorem}{condToNormQuantum}\label{thm:cond-to-norm-quantum}
Let $\cond=\{f_{s}\}_{s\in D}$ and $[\cond]$ as defined in~\eqref{eq:cond_map}. Then, we have
\begin{align}
\left\|[\cond]:\capcap\left\{2^{k}\mathds{L}_N^\infty,\mathds{L}_N^1\right\}\to\sigsig\left\{2^{k'}\mathds{L}_M^\infty,\mathds{L}^1_M\right\}\right\|_\cb\leq\frac{\eps}{4}\,\Rightarrow\,\cond\,\mathrm{quantum}\mbox{-}\mathrm{proof}\left(k\to_{\eps}k'+\log(1/\eps)\right)\mbox{-}\,\mathrm{condenser}\label{eq:quantumcondenser1}
\end{align}
\begin{align}
\cond\,\mathrm{quantum}\mbox{-}\mathrm{proof}\,(k-1\to_{\eps}k'+\log(1/\eps))\mbox{-}\,\mathrm{condenser}\Rightarrow\,\left\|[\cond]:\capcap\left\{2^{k}\mathds{L}_N^\infty,\mathds{L}_N^1\right\}\to\sigsig\left\{2^{k'}\mathds{L}_M^\infty,\mathds{L}^1_M\right\}\right\|_\cb\leq8\eps\ .\label{eq:quantumcondenser2}
\end{align}
\end{restatable}

The special case of strong condensers just follows by replacing $[\cond]$ with $[\cond]_{S}$ in Theorem~\ref{thm:cond-to-norm-quantum}. Note that even though an extractor is just a condenser with full output entropy $k'=m$, our norm characterization for condensers can not directly be used to characterize extractors because there is a loss of $\log(1/\eps)$ for the output entropy $k'$ (see Proposition~\ref{prop:cond-to-norm} and Theorem~\ref{thm:cond-to-norm-quantum}). However, for that reason we have the separate norm characterization for extractors in Section~\ref{sec:extr_results}.


\subsubsection*{Applications}

It is known that condensers are closely related to graph-theoretic problems~\cite{Umans07}, and here we make exactly such a connection (that is different from previously studied connections). Using the bounded norm characterization in Proposition~\ref{prop:cond-to-norm}, we can show that evaluating the performance of a condenser corresponds to an instance of a well studied combinatorial problem, called bipartite densest subgraph. For this we think of condensers as bipartite graphs $G=(N,M,V\subset N \times M,D)$ with left index set $N$, right index set $M$, edge set $V$ having left degree $D$, and neighbor function $\Gamma: N \times D \to M$. (Note the slight abuse in notation that we use throughout the paper: $N$, $M$ and $D$ refer both to sets and to their sizes.) The identification with the usual definition of condensers $\cond=\{f_{s}\}_{s\in D}$ is just by setting
\begin{align}\label{eq:neighbor_graph}
\Gamma(\cdot,s):=f_{s}(\cdot)
\end{align}
for every value of the seed $s\in D$.

\begin{restatable}{proposition}{bipartitegraph}\label{prop:bipartite_graph}
Let $\cond=\{f_{s}\}_{s\in D}$, $[\cond]$ as defined in~\eqref{eq:cond_map}, $G=(N,M,V,D)$ be the bipartite graph as defined in~\eqref{eq:neighbor_graph}, and $\mathrm{Dense}(G,2^{k},2^{k'})$ be the optimal value of the quadratic program
\begin{center}
\begin{minipage}[t]{0.90\textwidth}
\begin{align}
\mathrm{Dense}(G,2^{k},2^{k'})\eqdef \text{maximize}\quad & \sum_{(x,y) \in V}  f_x \, g_y \label{eq:conddensestsubgraph}\\
\text{subject to}\quad & f_x, g_y \in [0,1]\\
& \sum_x f_x \leq 2^{k}\\
& \sum_y g_y \leq 2^{k'}\ .
\end{align}
\end{minipage}
\end{center}
Then, we have
\begin{align}
\left\|[\cond]:\cap\{2^{k}\ell^\infty_N,\ell^1_N\}\to\Sigma\{2^{k'}\ell^\infty_M,\ell^1_M\}\right\|\leq\eps\quad\Leftrightarrow\quad\mathrm{Dense}(G,2^{k},2^{k'})\leq2^{k} D \eps\ .
\end{align}
\end{restatable}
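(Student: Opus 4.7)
The plan is to establish the identity
\begin{align}
\|[\cond]\|_{\cap\{2^{k}\ell_\infty,\ell_1\}\to\Sigma\{2^{k'}\ell_\infty,\ell_1\}}=\frac{\mathrm{Dense}(G,2^{k},2^{k'})}{2^{k+d}}\ ,
\end{align}
from which both directions of the claimed equivalence follow at once, since $D=2^{d}$. The starting point is norm duality. The general identity $(\Sigma\{X_1,X_2\})^{*}=\cap\{X_1^{*},X_2^{*}\}$, together with the elementary fact $(2^{k'}\ell_\infty)^{*}=2^{-k'}\ell_1$, gives
\begin{align}
\left(\Sigma\{2^{k'}\ell_\infty,\ell_1\}\right)^{*}=\cap\{2^{-k'}\ell_1,\ell_\infty\}\ ,
\end{align}
so the bounded norm equals $\sup_{x,y}\langle y,[\cond](x)\rangle$, where $x$ ranges over vectors with $\|x\|_{\ell_\infty}\leq 2^{-k}$ and $\|x\|_{\ell_1}\leq 1$, and $y$ ranges over vectors with $\|y\|_{\ell_\infty}\leq 1$ and $\|y\|_{\ell_1}\leq 2^{k'}$.

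Since every $F_{s}$ is a $\{0,1\}$-matrix, the map $[\cond]$ has nonnegative entries; replacing $x$ by $|x|$ and $y$ by $|y|$ componentwise preserves all the constraints above and can only increase the bilinear pairing, so it suffices to optimize over nonnegative $x,y$. I then change variables to $f_{x}\eqdef 2^{k}x_{x}\in[0,1]$ and $g_{y}\eqdef y_{y}\in[0,1]$, so that the constraints become $\sum_{x}f_{x}\leq 2^{k}$ and $\sum_{y}g_{y}\leq 2^{k'}$. Unfolding the definition of $[\cond]$ using the neighbour function $\Gamma(\cdot,s)=f_{s}(\cdot)$ of the bipartite graph $G=(N,M,V,D)$ gives
\begin{align}
\langle y,[\cond](x)\rangle=\frac{1}{D}\sum_{s=1}^{D}\sum_{x\in N}y_{f_{s}(x)}\,x_{x}=\frac{1}{2^{k}D}\sum_{(x,y)\in V} f_{x}\,g_{y}\ ,
\end{align}
where the edge sum is over $V$ (with multiplicity where relevant).

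To finish, I would note that the resulting objective is bilinear and the constraints on $f$ and $g$ are separable. For fixed $g$ the objective is linear in $f$, so the supremum is attained at an extreme point of $\{f\in[0,1]^{N}:\sum_{x}f_{x}\leq 2^{k}\}$; symmetrically for $g$. When $2^{k}$ and $2^{k'}$ are positive integers, these extreme points are exactly the $\{0,1\}$-valued vectors satisfying the cardinality bound, so the value of the continuous relaxation coincides with that of the integer programme defining $\mathrm{Dense}(G,2^{k},2^{k'})$. Dividing by $2^{k}D=2^{k+d}$ yields the displayed identity.

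The main obstacle will be the careful identification of the dual of the sum norm $\Sigma\{2^{k'}\ell_\infty,\ell_1\}$ together with the reduction from a continuous bilinear optimisation to the combinatorial densest-subgraph quantity; both are standard in their own right, but combining them requires keeping track of the two separate normalisations $2^{k}$ and $2^{k'}$. For integer $2^{k},2^{k'}$ the extreme-point step is immediate, while non-integer parameters would only require a slightly more careful argument allowing a single fractional coordinate per extreme point.
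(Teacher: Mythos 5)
Your proposal is correct and follows essentially the same route as the paper's proof: dualize the $\Sigma\{2^{k'}\ell_\infty,\ell_1\}$ norm to a $\cap\{2^{-k'}\ell_1,\ell_\infty\}$ constraint on the test vector, use positivity of the matrix of $[\cond]$ to restrict to nonnegative vectors, rescale to pull out the factor $2^{k+d}$, and conclude via the extreme points of the box-with-cardinality polytopes and bilinearity of the objective. The only (harmless) cosmetic difference is that you phrase the last step as a two-stage linear optimisation while the paper invokes convexity of the objective in each variable directly.
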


In fact, it is well-known that the optimal value of the quadratic program will be achieved on $f_x, g_y \in \{0,1\}$. Hence, we try to find the subgraph of $G$ with the biggest number of edges, having $2^{k}$ vertices on the left and $2^{k'}$ vertices on the right. The norm condition for being a condenser (Proposition~\ref{prop:cond-to-norm}) then just says that the size of the edge set of all such subgraphs has to be bounded by $2^{k}D\eps$. This is exactly an instance of the bipartite densest $(2^{k},2^{k'})$-subgraph problem. The (bipartite) densest subgraph problem is well studied in theoretical computer science, but its hardness remains elusive. However, it is known that usual semidefinite program (SDP) relaxation possess a large integrality gap for random graphs~\cite{FS97,Bhaskara:2012wn}. Interestingly, we can show that the densest subgraph SDP relaxations are not only an upper bound on the quadratic program~\eqref{eq:conddensestsubgraph} characterizing the bounded norm of condensers, but also on the completely bounded norm of condensers.

Seeing condensers as bipartite graphs also allows us to define a Bell inequality (two-player game) such that the classical value characterizes the condenser property and the entangled value characterizes the quantum-proof condenser property (see~\cite{Brunner14} for a review article about Bell inequalities / two-player games). Starting from the bipartite graph $G=(N,M,V,D)$ as defined in~\eqref{eq:neighbor_graph}, we use operator space techniques by Junge~\cite{Junge:2006wt} to define a two-player game $(G;2^{k},2^{k'})$ with classical value $\omega(G;2^{k},2^{k'})$ and entangled value $\omega^{*}(G;2^{k},2^{k'})$, that has the following properties.

\begin{restatable}{theorem}{condensergame}\label{thm:condensergame}
Let $\cond=\{f_{s}\}_{s\in D}$, $[\cond]$ as defined in~\eqref{eq:cond_map}, $G=(N,M,V,D)$ be the bipartite graph as defined in~\eqref{eq:neighbor_graph}, and $(G;2^{k},2^{k'})$ be the two-player game defined in Section~\ref{sec:two_player}. Then, there is a constant $c>0$ such that
\begin{align}\label{eq:condensergame_b}
\omega(G;2^{k},2^{k'})\leq2^{-k'}\cdot\left\|[\cond]:\cap\left\{2^{k}\ell^\infty_N,\ell^1_N\right\}\to\Sigma\left\{2^{k'}\ell^\infty_M,\ell^1_M\right\}\right\|\leq c\cdot\omega(G;2^{k},2^{k'})\ ,
\end{align}
as well as
\begin{align}\label{eq:condensergame_cb}
\omega^{*}(G;2^{k},2^{k'})\leq2^{-k'}\cdot\left\|[\cond]:\capcap\left\{2^{k}\mathds{L}_N^\infty,\mathds{L}_N^1\right\}\to\sigsig\left\{2^{k'}\mathds{L}_M^\infty,\mathds{L}^1_M\right\}\right\|_\cb\leq c\cdot\omega^{*}(G;2^{k},2^{k'})\ .
\end{align}
Furthermore, the amount of entanglement used by the players corresponds to the dimension of the quantum side information $Q$.
\end{restatable}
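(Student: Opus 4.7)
The overall strategy is to define the two-player game $(G;2^{k},2^{k'})$ so that a classical (resp.\ quantum) strategy is essentially a pair of vectors (resp.\ families of operators) witnessing the bounded (resp.\ completely bounded) norm, up to normalization, and then to verify the four inequalities by a direct translation. Dualizing the target norm using the duality $\Sigma\{2^{k'}\ell_\infty,\ell_1\}^{*}=\cap\{2^{-k'}\ell_1,\ell_\infty\}$, the bounded norm may be rewritten as a bilinear optimization over pairs of vectors constrained in the input intersection norm and the output dual intersection norm. The natural way of rendering such a bilinear form as a two-player game is to treat the seed $s\in D$ as the referee's question to one of the parties, and to let Alice and Bob produce outputs $x\in N$ and $y\in M$ respectively; after appropriate normalization, their (possibly randomized) answer strategies encode the optimizing vectors, and the acceptance predicate tests whether $(x,s,y)$ lies in the edge set $V$ of $G$.

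For the classical inequality (\ref{eq:condensergame_b}) I would appeal to Proposition~\ref{prop:bipartite_graph}, which already identifies the bounded norm with the densest subgraph value $\mathrm{Dense}(G,2^{k},2^{k'})$. A deterministic strategy in the game sketched above corresponds precisely to a pair of subsets $(S,T)\subseteq N\times M$ with $|S|\le 2^{k}$ and $|T|\le 2^{k'}$, and its acceptance probability equals the number of edges of $G$ between $S$ and $T$ divided by the normalization $2^{k}\cdot 2^{k'}\cdot D$. Reading Proposition~\ref{prop:bipartite_graph} through this identification then yields (\ref{eq:condensergame_b}), with the constant $c$ absorbing only the standard convexification going from $0/1$-valued strategies to general distributions.

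For (\ref{eq:condensergame_cb}) I would follow the operator-space route of~\cite{Junge:2006wt}: the cb-norm of a linear map $X\to Y^{*}$ between operator spaces equals the $\osmin$-tensor norm of the associated bilinear form on $X\oinj Y$, which in turn is computed as a supremum over POVMs acting on a bipartite entangled state. Concretely, I would express the cb-norm as the $\osmin$-tensor norm of a tensor associated to $G$ lying in $\capcap\{2^{k}\ell_\infty,\ell_1\}^{*}\osmin\sigsig\{2^{k'}\ell_\infty,\ell_1\}^{*}$, and identify the unit-ball elements of each factor with POVMs satisfying rank and trace constraints matching the answer-alphabet restrictions of the game. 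The entangled value $\omega^{*}(G;2^{k},2^{k'})$ then reads off as the supremum of $\bra{\psi}A\otimes B\ket{\psi}$ over such POVMs, with the Schmidt rank of $\ket{\psi}$ equal to the dimension of the quantum side-information system $Q$, yielding also the final assertion about entanglement dimension.

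The main obstacle I expect is proving that the operator-space structures on the hybrid spaces $\capcap\{\cdot,\cdot\}$ and $\sigsig\{\cdot,\cdot\}$, which as noted after Theorem~\ref{thm:ext-to-norm-quantum} are slightly non-standard variants of the usual intersection/sum operator-space constructions, yield dual POVM descriptions whose constraints correspond exactly (up to $O(1)$ factors) to the subset-size restrictions of the game. This step will require a sum/intersection operator-space duality analogous to the classical identity $\cap\{X,Y\}^{*}=\Sigma\{X^{*},Y^{*}\}$, lifted to the operator-space category, together with Haagerup-Pisier factorization of completely bounded maps through row and column Hilbert spaces to control the losses. The universal constant $c$ in the theorem then absorbs both these factorization losses and the normalization constants relating edge probabilities of the game to the explicit norm values.
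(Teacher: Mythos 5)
The central difficulty of this theorem is not addressed by your proposal: the game must be designed so that the players are \emph{forced} to use ``flat'' strategies, i.e., so that Alice's effective output distribution on $N$ assigns weight at most $2^{-k}$ to each element (and similarly $2^{-k'}$ for Bob on $M$). In the game you sketch, Alice's question is at most the seed $s$ and she outputs a single $x\in N$; since the value of a two-player game is attained by deterministic strategies, her optimal strategy is a function $D\to N$, which touches at most $D$ elements and carries no trace of the parameter $2^{k}$. Consequently your claim that a deterministic strategy ``corresponds precisely to a pair of subsets $(S,T)$ with $|S|\le 2^{k}$, $|T|\le 2^{k'}$'' fails for the game you define, and the intended identification with $\mathrm{Dense}(G,2^{k},2^{k'})$ via Proposition~\ref{prop:bipartite_graph} does not go through. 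The same issue breaks the entangled case: from an arbitrary POVM $\{\hat q(y)\}_{y\in M}$ one only gets $\sum_y\hat q(y)\le\1$, with no mechanism producing the second constraint $\hat q(y)\le 2^{-k'}\1$ needed for membership in the relevant $\capcap$-unit ball.

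The game actually defined in Section~\ref{sec:two_player} (following Junge) solves this by sampling: the referee draws $\gamma=2^{n-k}$ uniform elements $x_1,\dots,x_\gamma$ of $N$ and $\gamma'=2^{m-k'}$ uniform elements of $M$, sends these tuples as questions, and the players answer with \emph{indices} into their tuples. Then $p(x')=\sum_\alpha\E_{\vec x}[\delta_{\vec x(\alpha)=x'}\,\hat p(\alpha;\vec x)]\le\sum_\alpha\P[x_\alpha=x']\,\1=2^{-k}\1$ holds automatically, which yields the first inequalities in \eqref{eq:condensergame_b} and \eqref{eq:condensergame_cb}. For the converse direction one must convert operators $p(x)$ with $\sum_x p(x)\le\1$ and $p(x)\le\1$ into legitimate POVMs on the answer alphabet; this is done by the successive-measurement construction $\hat p(\alpha;\vec x)=r^{*}_{\vec x,\alpha-1}\,p[\vec x(\alpha)]\,r_{\vec x,\alpha-1}$ with $r_{\vec x,\alpha}=\prod_{k\le\alpha}(\1-p[\vec x(k)])^{1/2}$, and the constant $c$ arises from expanding the game operator into $16$ terms and bounding the error terms via Lemma~\ref{lem:Jungelemma} (the $\gamma/8$ and $\gamma/4$ estimates). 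None of this machinery appears in your proposal, and your closing paragraph locates the difficulty in the operator-space duality of $\capcap$ and $\sigsig$ rather than where it actually lies, namely in the combinatorial design of the questions and the POVM construction. (Incidentally, the paper does not treat the classical case via Proposition~\ref{prop:bipartite_graph}: it proves the chain of inequalities for every fixed local dimension $Q$ and obtains \eqref{eq:condensergame_b} as the special case $Q=1$.)
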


Theorem~\ref{thm:condensergame} shows that for every condenser construction there is a corresponding Bell inequality, and that the degree to which this inequality is violated by quantum mechanics characterizes how quantum-proof the condenser construction is (and vice versa). So in particular, fully quantum-proof condensers have a corresponding Bell inequality that is not violated by quantum mechanics.


\section{Open problems}\label{sec:conclusions}

We showed how the theory of operator spaces provides a useful framework for studying the behavior of randomness extractors and condensers in the presence of quantum adversaries. However, there are many questions left open from here and we believe that the following are particularly interesting to explore:
\begin{itemize}
\item The main question if there exist good classical extractors that are not quantum-proof with a large gap still remains open. More precisely we would like to understand the classical/quantum separation better by finding tighter upper bounds (as in Theorem~\ref{thm:small-error-ent} and Theorem~\ref{thm:high-min-ent-ext}) as well as tighter lower bounds (as in~\cite{Gavinsky07}) on the size of the gap.
\item Is it possible to give an upper bound on the dimension of the quantum adversary that is sufficient to consider? This is also a natural question in the norm formulation (Theorem~\ref{thm:ext-to-norm-quantum} and Theorem~\ref{thm:cond-to-norm-quantum}) and the Bell inequality formulation (Theorem~\ref{thm:condensergame}). In the first case it translates into the question for what dimension the completely bounded norm saturates, and in the latter case it translates into the question how much entanglement is needed to achieve the optimal entangled value.
\item Given our new connection to Bell inequalities (Theorem~\ref{cor:out-size}) it would be interesting to explore the corresponding Bell inequalities of quantum-proof extractor constructions (since they can not be violated by quantum mechanics).
\item What other explicit extractor constructions are quantum-proof? This includes variations of Trevisan's constructions as, e.g., listed in~\cite[Section 6]{De12}, but also condenser based constructions~\cite{RR99,RSW00}. Here the motivation is that all of these constructions have better parameters than any known quantum-proof construction.
\item Operator space techniques might also be useful for analyzing fully quantum and quantum-to-classical randomness extractors as described in~\cite{Berta11_5,Berta12_2,Berta14}.
\end{itemize}


\section{Preliminaries}\label{sec:preliminaries}

Here we present some basic facts about quantum information theory (Section~\ref{sec:quantum_information}), normed spaces (Section~\ref{sec:spaces}), and operator spaces (Section~\ref{sec:operator_spaces}). A reader already familiar with these topics might consult Section~\ref{sec:notation} where we briefly subsume our notation.


\subsection{Quantum information}\label{sec:quantum_information}

In quantum theory, a system is described by an inner-product space, that we denote here by letters like $N, M, Q$.\footnote{In the following all spaces are assumed to be finite-dimensional.} Note that we use the same symbol $Q$ to label the system, the corresponding inner-product space and also the dimension of the space. Let $\matr_Q(S)$ be the vector space of $Q \times Q$ matrices with entries in $S$. Whenever $S$ is not specified, it is assumed to be the set of complex numbers $\bC$, i.e., we write $\matr_Q(\bC)=:\matr_{Q}$. The state of a system is defined by a positive semidefinite operator $\rho_{Q}$ with trace $1$ acting on $Q$. The set of states on system $Q$ is denoted by $\cS(Q) \subset \matr_Q$. The inner-product space of a composite system $QN$ is given by the tensor product of the inner-product spaces $Q \otimes N=:QN$. From a joint state $\rho_{QN} \in \cS(QN)$, we can obtain marginals on the system $Q$ by performing a partial trace of the $N$ system $\rho_Q \eqdef \tr_{N}[\rho_{QN}]$. A state $\rho_{QN}$ on $QN$ is called quantum-classical (with respect to some basis) if it can be written as $\rho_{QN} = \sum_{x} \rho_{x} \otimes \proj{x}$ for some basis $\{\ket{x}\}$ of $N$ and some positive semidefinite operators $\rho_x$ acting on $Q$. We denote the maximally mixed state on system $N$ by $\upsilon_N$.

To measure the distance between two states, we use the trace norm $\| A \|_{1} \eqdef \tr[\sqrt{A^{*}A}]$, where $A^{*}$ is the conjugate transpose of $A$. In the special case when $A$ is diagonal, $\| A \|_{1}$ becomes the familiar $\ell^1_N$-norm of the $N$ diagonal entries of $A$. Moreover, the Hilbert-Schmidt norm is defined as $\|A\|_{2}\eqdef\sqrt{\tr[A^{*}A]}$, and when $A$ is diagonal this becomes the usual $\ell^2_N$-norm. Another important norm we use is the operator norm, or the largest singular value of $A$, denoted by $\|A\|_{\infty}$. When $A$ is diagonal, this corresponds to the familiar $\ell^\infty_N$-norm. For a probability distribution $P_N \in \cS(N)$, $\| P_N \|_{\ell^\infty_N}$ corresponds to the optimal probability with which $P_N$ can be guessed. We write
\begin{align}
H_{\min}(N)_P\eqdef -\log \| P_N \|_{\ell^\infty_N}\quad\text{for the min-entropy of $P_{N}$.}
\end{align}
More generally, the conditional min-entropy of $N$ given $Q$ is used to quantify the uncertainty in the system $N$ given the system $Q$. The conditional min-entropy is defined as 
\begin{align}
H_{\min}(N|Q)_{\rho}\eqdef-\log\min_{\sigma_{Q}\in\cS(Q)}\left\|\left(\1_{N}\otimes\sigma_{Q}^{-1/2}\right)\rho_{NQ}\left(\1_{N}\otimes\sigma_{Q}^{-1/2}\right)\right\|_{(\infty;\infty)}\ ,
\end{align}
with generalized inverses, and $\|\cdot\|_{(\infty;\infty)}$ the operator norm on $\matr_{Q}\otimes\matr_{N}=:\matr_{QN}$. Note that in the special case where the system $Q$ is trivial, we have $H_{\min}(N)_{\rho}=-\log\|\rho_{N}\|_{\infty}$. In fact, the general case also corresponds to a norm, we have
\begin{align}
H_{\min}(N|Q)_{\rho}=-\log\|\rho_{QN}\|_{(1;\infty)}\ ,
\end{align}
where the norm $\| \cdot \|_{(1;\infty)}$ on $\matr_{QN}$ is defined as
\begin{align}\label{eq:def-one-infty}
\| A \|_{(1;\infty)} \eqdef \inf \Big\{ \| B_1 \|_2 \| C \|_{(\infty;\infty)} \| B_2 \|_2 : A = (B_1 \otimes \1_{N}) C (B_2 \otimes \1_{N}); B_1, B_2 \in \matr_{Q}\Big\}\ .
\end{align}
A proof of this is given in the Appendix as Proposition~\ref{prop:min-ent-norm}.


\subsection{Normed spaces}\label{sec:spaces}

A vector space $E$ together with a norm $\|\cdot\|_{E}$ defines a normed space denoted by $\mathbf{E}=\left(E,\|\cdot\|_{E}\right)$. On the dual vector space $E^{*}$ the dual norm $\|\cdot\|_{E^{*}}$ is defined as
\begin{align}
\|f\|_{E^{*}}\eqdef\sup_{\|e\|_{E}\leq1}|f(e)|\ ,
\end{align}
and hence $\mathbf{E}^*:=(E^{*},\|\cdot\|_{E^{*}})$ is again a normed space. We have that $\mathbf{E}$ is isomorphic to $\mathbf{E}^{**}$ since we restrict to finite-dimensional spaces.

The vector space of linear operators from $E$ to $F$ is denoted by $\lin(E,F)$ and for normed spaces $\mathbf{E}=\left(E,\|\cdot\|_{E}\right)$ and $\mathbf{F}=\left(F,\|\cdot\|_{F}\right)$ there is a natural norm induced on $\lin(E,F)$ defined by
\begin{align}\label{eq:bounded}
\|u\|_{\mathbf{E}\to\mathbf{F}}\eqdef\sup_{\|x\|_E \leq 1} \|u(x)\|_F\ ,
\end{align}
where $u\in\lin(E,F)$. This norm is called the bounded norm and we also use the notation
\begin{align}
\|u:\mathbf{E}\to\mathbf{F}\|\eqdef \|u\|_{\mathbf{E}\to\mathbf{F}}\ .
\end{align}
We write $\mathbf{B}(\mathbf{E},\mathbf{F})=\left(\lin(E,F),\|\cdot\|_{\mathbf{E}\to\mathbf{F}}\right)$ for the resulting normed space. Note that every $u\in\lin(E,F)$ has finite bounded norm (since we restrict to finite-dimensional spaces). The dual of the bounded norm $\mathbf{E}\to\mathbf{F}$ is the bounded norm $\mathbf{F^*}\to\mathbf{E^*}$. We define the adjoint map $u^*\in\lin(F^*,E^*)$ as $(u^*(f^*))(e) = f^*(u(e))$ for $e \in E$ and $f^* \in F^*$. It is simple to check that we have
\begin{align}\label{eq:dual-norm-map}
\|u\|_{\mathbf{E}\to\mathbf{F}} = \|u^*\|_{\mathbf{F}^*\to\mathbf{E}^*}\ .
\end{align}

The norms we are interested in are constructed by combining the $\ell^\infty_N$-norm and the $\ell^1_N$-norm. The following gives a general way of combining two norms.

\begin{definition}[$\cap$-norm]
Let $\mathbf{E}^{\alpha}=\left(E,\|\cdot\|_{\alpha}\right)$ and $\mathbf{E}^{\beta}=\left(E,\|\cdot\|_{\beta}\right)$ be two normed spaces on the same vector space $E$. We define a new normed space $\mathbf{E}^{\cap\{\alpha,\beta\}}=\left(E,\|\cdot\|_{\cap\{\alpha,\beta\}}\right)$ by
\begin{align}
\|\cdot\|_{\cap\{\alpha,\beta\}}\eqdef\max\big\{\|\cdot\|_{\alpha}, \|\cdot\|_{\beta}\big\}\ .
\end{align}
The dual of the normed space $\mathbf{E}^{\cap\{\alpha,\beta\}}$ is denoted $\mathbf{E}^{\Sigma\{\alpha^{*},\beta^{*}\}}=\left(E^*,\|\cdot\|_{\Sigma\{\alpha^{*},\beta^{*}\}}\right)$.
\end{definition}

The dual norm then takes the following form.

\begin{proposition}\label{prop:cap-sigma}
The normed space $\mathbf{E}^{\Sigma\{\alpha^{*},\beta^{*}\}}=\left(E^*,\|\cdot\|_{\Sigma\{\alpha^{*},\beta^{*}\}}\right)$ is given by
\begin{align}
\text{$\| y \|_{\Sigma\{\alpha^{*},\beta^{*}\}} = \inf\big\{\| y_1 \|_{\alpha^*} + \| y_2 \|_{\beta^*} : y_1 + y_2 = y \big\}$ with $y\in E^{*}$.}
\end{align}
\end{proposition}

\begin{proof}
For a given $y \in E^*$, we write the dual convex program to $\| y \|_{\Sigma\{\alpha^{*},\beta^{*}\}} = \inf \{ \| y_1 \|_{\alpha^*} + \| y_2 \|_{\beta^*} : y = y_1 + y_2 \}$ as
\begin{align}
\sup_{x \in E} \inf_{y_1 y_2} \| y_1 \|_{\alpha^*} + \| y_2 \|_{\beta^*} + (y - y_1 - y_2)(x)\ .
\end{align}
But observe that $\inf_{y_1} \| y_1 \|_{\alpha^*} - y_1(x) = 0$ if $\| x \|_{\alpha} = \sup_{y_1} \{ y_1(x) :  \| y_1 \|_{\alpha^*} \leq 1 \} \leq 1$ and $\inf_{y_1} \| y_1 \|_{\alpha^*} - y_1(x) = - \infty$ otherwise. Thus the dual convex program can be written as
\begin{align}
\sup_{x \in E} \{ y(x) : \|x\|_{\alpha} \leq 1, \| x \|_{\beta} \leq 1\}\ ,
\end{align}
which is the definition of the dual of the $\cap\{\alpha,\beta\}$-norm. We conclude by observing that strong duality holds in this case. 
\end{proof}


\subsection{Operator spaces}\label{sec:operator_spaces}

For the convenience of the reader we recall a few basic facts about the theory of operator spaces, sometimes also referred to as ``quantized Banach spaces''. For a more in depth treatment we refer the reader to Pisier's book~\cite{Pis03}.

\begin{definition}[Operator space]\label{def:operator_space}
An (abstract) operator space $\mathds{E}$ is a vector space $E$ together with a sequence of normed spaces $\mathbf{E}_{Q}=\left(\matr_Q(E),\|\cdot\|_{\mathds{E}_Q}\right)$ such that for all $Q\geq1$:
\begin{enum}
\item For all $x \in \matr_Q(E)$ and $x' \in \matr_Q(E)$\ ,
\begin{align}
\left\| 
\left( \begin{array}{cc}
x & 0 \\
0 & x' \\
\end{array} 
\right)
\right\|_{\mathds{E}_{Q+Q'}} = \max\left\{ \|x\|_{\mathds{E}_Q}, \|x'\|_{\mathds{E}_{Q'}} \right\}\ .\label{eq:cond-direct-sum}
\end{align}
\item For all $\alpha, \beta \in \matr_Q$ and $x \in \matr_Q(E)$\ ,
\begin{align}\label{eq:cond-mult}
\| \alpha \cdot x \cdot \beta \|_{\mathds{E}_Q} \leq \| \alpha \|_{\infty} \| x \|_{\mathds{E}_Q} \| \beta \|_{\infty}\ ,
\end{align}
where the notation $\alpha \cdot x$ refers to usual multiplication of matrices.
\end{enum}
We write $\mathds{E}=\left(E,\|\cdot\|_{\mathds{E}_Q}\right)$ for the operator space structure. Moreover, we abbreviate $\mathbf{E}\equiv\mathbf{E}_1$ with $\|\cdot\|_E\equiv\|\cdot\|_{\mathds{E}_1}$ for the normed space with $Q=1$. We say that $\mathds{E}$ is an operator space on the vector space $E$, or on the normed space $\mathbf{E}$ if we want to specify the $\mathds{E}_1$-norm.
\end{definition}

The most important example of an operator space is
\begin{align}\label{eq:MN_operator}
\text{$\mathds{L}^{\infty}_N:=\left(\matr_N,\|\cdot\|_{\left(\mathds{L}^{\infty}_N\right)_Q}\right)$ with $\|\cdot\|_{\left(\mathds{L}^{\infty}_N\right)_Q}:=\|\cdot\|_{(\infty;\infty)}$ the usual operator norms on $\matr_{QN}$.}
\end{align}
It is easy to verify that the two conditions~\eqref{eq:cond-direct-sum} and~\eqref{eq:cond-mult} are satisfied. We note that this operator space is usually called $M_N$.

Alternatively we could also define a (concrete) operator space $\mathds{E}=\left(E,\|\cdot\|_{\mathds{E}_Q}\right)$ on a normed space $\mathbf{E}=(E,\|\cdot\|_{E})$ with $\|\cdot\|_E\equiv\|\cdot\|_{\mathds{E}_1}$ by seeing $\mathbf{E}$ as a subspace of a normed space
\begin{align}\label{eq:MN_normed}
\mathbf{L}^{\infty}_N:=(\matr_N,\|\cdot\|_{\infty}),\quad\text{with $\|\cdot\|_{\mathds{E}_Q}$ induced by the operator norms $\|\cdot\|_{(\infty;\infty)}$ on $\matr_{QN}$.}
\end{align}
Since Ruan~\cite{Rua88} proved that every abstract operator space can be realized as a concrete operator space (see also~\cite[Chapter 2.2]{Pis03}) these two definitions are really equivalent.\footnote{In general, one needs to embed $\mathbf{E}=(E,\|\cdot\|_{E})$ into the space of bounded operators on an infinite-dimensional Hilbert space.} For example, consider the subspace $D_N\subset\matr_N$ of diagonal matrices. It is then immediate to deduce that for this concrete operator space $\mathds{D}_N=\left(D_N,\|\cdot\|_{(\mathds{D}_N)_Q}\right)$, $\|\cdot\|_{D_N}$ is simply the $\ell^\infty_N$-norm of the diagonal vector. As another example, consider the subspace $C_N$ ($R_{N}$) of matrices such that only the first column (row) contains non-zero elements. This defines concrete operator spaces $\mathds{C}_N=\left(C_N,\|\cdot\|_{(\mathds{C}_N)_Q}\right)$ and $\mathds{R}_N=\left(R_N,\|\cdot\|_{(\mathds{R}_N)_Q}\right)$, respectively, with $\|\cdot\|_{C_N}=\|\cdot\|_{R_N}$ simply given by the $\ell^2_N$-norm of the vector. Note that even though the normed spaces $\mathbf{C}_{N}=\left(C_N,\|\cdot\|_{C_N}\right)$ and $\mathbf{R}_{N}=\left(R_N,\|\cdot\|_{R_N}\right)$ are both isomorphic to $\left(\bC^N,\|\cdot\|_{\ell^2_N}\right)$, the operator spaces $\mathds{C}_N$ and $\mathds{R}_N$ are different. In fact, a given normed space has in general many possible operator space structures. However, for many normed spaces that we are interested in, there is one ``natural'' operator space structure. For example, for the normed space $\mathbf{L}^\infty_N$ as in~\eqref{eq:MN_normed} there is the natural operator space structure $\mathds{L}_N^{\infty}$ as in~\eqref{eq:MN_operator}. Also, for the normed space
$\mathbf{L}_N^1:=(\matr_N,\|\cdot\|_{1})$ there is a natural operator space structure $\mathds{L}_N^1$, defined as the operator space dual of $\mathds{L}_N^{\infty}$. We will discuss this in detail after the definition of the operator space dual (Definition~\ref{prop:s1-os}).

The bounded norm as in~\eqref{eq:bounded} is fundamental in understanding linear maps between normed spaces. The analogous norm for linear maps between operator spaces is the completely bounded norm (because completely bounded maps are the morphisms in the category of operator spaces). In the quantum information literature, the completely bounded norm usually refers specifically to maps between the operator spaces $\mathds{L}_N^\infty$ and $\mathds{L}_M^\infty$. The dual norm is called the diamond norm, first introduced in quantum information by Kitaev~\cite{Kit97}. Here we are concerned with the completely bounded norm between more general operator spaces.

\begin{definition}[Completely bounded norm]
For operator spaces $\mathds{E}=\left(E,\|\cdot\|_{\mathds{E}_Q}\right)$ and $\mathds{F}=\left(F,\|\cdot\|_{\mathds{F}_Q}\right)$ the completely bounded norm of $u\in\lin(E,F)$ is defined as
\begin{align}\label{eq:cb-norm}
\|u\|_{\mathds{E}\to\mathds{F}}\eqdef\sup_Q\|u_Q\|_{\mathbf{E}_Q\to\mathbf{F}_Q}\ ,
\end{align}
where for $\{x_{ij}\}_{ij} \in \matr_Q(E)$, $u_Q(\{x_{ij}\}_{ij})\eqdef\{u(x_{ij})\}_{ij}$, or simply $u_Q = \1_Q \otimes u$. We also use the notation
\begin{align}
\|u:\mathds{E}\to\mathds{F}\|_{\cb}\eqdef \|u\|_{\mathds{E}\to\mathds{F}}\ .
\end{align}
\end{definition}

Hence $\left(\lin(E,F),\|\cdot\|_{\mathds{E}\to\mathds{F}}\right)$ is a normed space just as $\mathbf{B}(\mathbf{E},\mathbf{F})=(\lin(E,F),\|\cdot\|_{\mathbf{E}\to\mathbf{F}})$ is.  Note however that even though every $u\in\lin(E,F)$ has finite completely bounded norm (since we restrict to finite-dimensional spaces), in general $\|\cdot\|_{\mathbf{E}\to\mathbf{F}}$ is smaller than $\|\cdot\|_{\mathds{E}\to\mathds{F}}$. Later, we are interested in upper bounding this ratio for particular operator spaces and maps. For general $\mathds{E},\mathds{F}$ and $u\in\lin(E,F)$ of rank $M$ it is known that the ratio of the completely bounded to the bounded norm is at most $M/2^{1/4}$~\cite[Theorem 7.15]{Pis03}.

It is in general not true that we can restrict the supremum in~\eqref{eq:cb-norm} to finite $Q$. However, for the target operator space $\mathds{L}_N^{\infty}$ defined in~\eqref{eq:MN_operator}, we have~\cite[Proposition 1.12]{Pis03}:
\begin{align}\label{eq:cb-norm-to-mn}
\|u\|_{\mathds{E}\to\mathds{L}_N^{\infty}}=\|u_N\|_{\mathbf{E}_N\to(\mathbf{L}_N^{\infty})_{N}}\ .
\end{align}
This raises the question whether there are specific operator space structures such that all bounded norms are also completely bounded. Such structures do in fact exist, and are called minimal and maximal operator space structures.

\begin{proposition}\cite[Chapter 3]{Pis03}
Let $\mathbf{E}=(E,\|\cdot\|_{E})$ be a normed space. Then there exists two (in general different) operator spaces $\mathds{E}^{\max}$ and $\mathds{E}^{\min}$ on $\mathbf{E}$, such that we have for all operator spaces $\mathds{F}=\left(F,\|\cdot\|_{\mathds{F}_Q}\right)$ with $u\in\lin(E,F)$ and $v\in\lin(F,E)$:
\begin{align}
\norm{u}_{\mathds{F}\to\mathds{E}^{\min}} = \norm{u}_{\mathbf{F}\to\mathbf{E}} \text{ and }\norm{v}_{\mathds{E}^{\max}\to\mathds{F}} = \norm{v}_{\mathbf{E}\to \mathbf{F}}\ .
\end{align}
\end{proposition}

For $\mathds{E}=\left(E,\|\cdot\|_{\mathds{E}_Q}\right)$ and $\mathds{F}=\left(F,\|\cdot\|_{\mathds{F}_Q}\right)$ operator spaces there is a natural operator space structure $\mathds{B}(\mathds{E},\mathds{F})=\left(\lin(E,F),\|\cdot\|_{\mathds{B}(\mathds{E},\mathds{F})_Q}\right)$. For the definition of the $\mathds{B}(\mathds{E},\mathds{F})_Q$-norms, observe that we can see $\matr_Q(\lin(E,F))$ as elements of $\lin(E,\matr_Q(F))$. Thus for $x \in \matr_Q(\lin(E,F))$, we define
\begin{align}
\| x \|_{\mathds{B}(\mathds{E},\mathds{F})_Q}\eqdef \| x \|_{\mathds{E}\to\mathds{L}^{\infty}_Q(\mathds{F})}\quad\text{with the operator space $\mathds{L}^{\infty}_Q(\mathds{F}):=\left(\matr_Q(F),\|\cdot\|_{(\mathds{L}^{\infty}_Q(\mathds{F}))_{Q'}}\right)$}
\end{align}
having $\|\cdot\|_{(\mathds{L}^{\infty}_Q(\mathds{F}))_{Q'}}:=\|\cdot\|_{\mathds{F}_{QQ'}}$. It is then simple to verify the two conditions~\eqref{eq:cond-direct-sum} and~\eqref{eq:cond-mult}. By taking $F=\bC$, this allows us to define the notion of a dual operator space.

\begin{definition}[Operator space dual]\label{def:opertordual}
For an operator space $\mathds{E}=\left(E,\|\cdot\|_{\mathds{E}_Q}\right)$, the dual operator space $\mathds{E}^*=\left(E^*,\|\cdot\|_{\mathds{E}^*_Q}\right)$ is defined as
\begin{align}
\| x \|_{\mathds{E}^*_Q}\eqdef \| x \|_{\mathds{E}\to\mathds{L}^{\infty}_Q}\quad\text{with $\mathds{L}^{\infty}_Q$ as in~\eqref{eq:MN_operator} and $x \in \matr_Q(E^*)$ is viewed as an element of $\lin(E,\matr_Q)$.}
\end{align}
\end{definition}

For $u\in\lin(E,F)$ we have for the adjoint $u^*\in\lin(F^{*},E^{*})$ that $\| u \|_{\mathds{E}\to\mathds{F}} = \| u^* \|_{\mathds{F}^*\to\mathds{E}^*}$. We also have $\mathds{E}^{**}=\mathds{E}$ with $\mathbf{E}^{**}_Q=\mathbf{E}_Q$ since we restrict to finite-dimensional spaces. If we consider the norm for $Q = 1$, then $\|\cdot\|_{\mathds{E}^{*}_1}$ corresponds to the dual norm of $\|\cdot\|_{\mathds{E}_1}$, since $\|\cdot\|_{\mathds{E}\to\mathds{L}_1^\infty}=\|\cdot\|_{\mathbf{E}\to\mathbf{L}_1^\infty}$~\cite[Proposition 1.10]{Pis03}. However, this is not the case for $Q\geq 2$, that is, $\|\cdot\|_{\mathds{E}^*_Q}$ on $\matr_{Q}(E^{*})$ is in general not the dual norm of $\|\cdot\|_{\mathds{E}_Q}$ on $\matr_{Q}(E)$.

As an example let us now consider the dual operator space of $\mathds{L}_N^{\infty}$, called the trace class operator space $\mathds{L}_N^1:=\left(\mathds{L}_N^{\infty}\right)^*$.

\begin{proposition}\label{prop:s1-os}
For the trace class operator space $\mathds{L}_N^1=\left(\matr_N^*,\|\cdot\|_{(\mathds{L}_N^1)_Q}\right)$ we have
\begin{align}\label{eq:l1-os}
\|\cdot\|_{(\mathds{L}_N^1)_Q}=\|x\|_{(\infty;1)}\eqdef\sup\big\{\|(A\otimes\1_{N})x(B\otimes\1_{N})\|_{(1;1)}:\|A\|_2,\|B\|_2 \leq1;A,B\in\matr_Q\big\}\ ,
\end{align}
where $x \in \matr_Q(\matr_N^*)$ (with $\matr_N^*\equiv\matr_N$ as vector spaces), and $\|\cdot\|_{(1;1)}$ denotes the trace norm on $\matr_{QN}$.
\end{proposition}

We observe that the $(\infty;1)$-norm is the dual norm of the $(1;\infty)$-norm characterizing the conditional min-entropy~\eqref{eq:def-one-infty}; see Proposition~\ref{prop:duality-one-infty}.

\begin{proof}
In order to compute the dual norm, we first explicitly map $x \in \matr_Q(\matr_N)$ to an element of $\lin(\matr_N,\matr_Q)$. For this, we see $x$ as the Choi matrix of a map $u \in \lin(\matr_N,\matr_Q)$ defined by
\begin{align}
\tr\left[du(c)\right]=\tr\left[x\left(d\otimes c^T\right)\right]
\end{align}
for any $c\in\matr_N,d\in\matr_Q$, where $c^T$ denotes the transpose of $c$ in the standard basis. Using the definition of operator space dual (Definition~\ref{def:opertordual}), we have that
\begin{align}
\|x\|_{(\mathds{L}_N^1)_Q}=\left\|u:\mathds{L}_N^\infty\to\mathds{L}_Q^\infty\right\|_{\cb}=\left\|u:\left(\mathds{L}_N^\infty\right)_Q\to\left(\mathds{L}_Q^\infty\right)_Q\right\|\ ,
\end{align}
where we used~\eqref{eq:cb-norm-to-mn} to simplify the completely bounded norm. Continuing we get,
\begin{align}
\|x\|_{(\mathds{L}_N^1)_Q}
&= \sup\left\{ \| (\1_Q \otimes u)(z) \|_{(\infty;\infty)} : z \in \matr_{QN}, \| z \|_{(\infty;\infty)} \leq 1\right\} \\
&=\sup\left\{\tr\left[\left(\1_Q\otimes u\right)(z)ab^{*}\right]:z\in\matr_{QN},\|z\|_{(\infty;\infty)}\leq1;a,b\in\C^{Q^2},\|a\|_2,\|b\|_2\leq1\right\}\ .
\end{align}
A straightforward calculation (Lemma~\ref{lem:calc-vect-to-matr}) shows that
\begin{align}
\tr\left[ (\1_Q \otimes u)(z) a b^{*} \right]= \tr\left[ (\bar{B} \otimes \1_{N}) x (A^T \otimes \1_{N}) z^{T} \right]\ ,
\end{align}
where $A\eqdef \sum_{ij} a_{ij} \ket{i} \bra{j}$ and $B\eqdef \sum_{ij} b_{ij} \ket{i} \bra{j}$. To conclude, we use that $\| z^T \|_{(\infty;\infty)} = \| z \|_{(\infty;\infty)}$, $\|A^T \|_2 = \| a \|_2$, $\| \bar{B} \|_2 = \| b \|_2$, and that the trace norm is the dual norm of the operator norm.
\end{proof}

We now can define the analogs of the $\cap$- and $\Sigma$-norms for operator spaces. For two sequences of norms $\|\cdot\|_{\alpha_Q}$ and $\|\cdot\|_{\beta_Q}$ defining operator space structures on the same vector space $E$, we denote by $\|\cdot\|_{\alpha^*_Q}$ and $\|\cdot\|_{\beta^*_Q}$ the sequence of norms on the dual space $E^*$ giving rise to the operator space duals.

\begin{definition}[$\cap$-norm]\label{def:cap-sigma-os} 
Let $\mathds{E}^{\alpha}=\left(E,\|\cdot\|_{\alpha_Q}\right)$ and $\mathds{E}^{\beta}=\left(E,\|\cdot\|_{\beta_Q}\right)$ be two operator spaces on the same vector space $E$. We define a new operator space $\mathds{E}^{\cap\{\alpha,\beta\}}=\left(E,\|\cdot\|_{(\cap\{\alpha,\beta\})_Q}\right)$ by the sequence of norms
\begin{align}
\norm{x}_{(\cap\{\alpha,\beta\})_Q}\eqdef \max\left\{\norm{x}_{\alpha_Q},\norm{x}_{\beta_Q}\right\}\ .
\end{align}
The operator space dual of $\mathds{E}^{\cap\{\alpha,\beta\}}$ is denoted $\mathds{E}^{\Sigma\{\alpha^*,\beta^*\}}=\left(E^*,\|\cdot\|_{(\Sigma\{\alpha^*,\beta^*\})_Q}\right)$.
\end{definition}

One might think that the $\left(\Sigma\{\alpha^*,\beta^*\}\right)_Q$-norms are equal to the $\Sigma$-norms of the $\alpha^*_Q$- and $\beta^*_Q$-norms. This is almost the case, but only up to a factor of $2$; see~\cite[Chapter 2.7]{Pis03}. A more detailed discussion of the $\cap$- and $\Sigma$-operator space structures can be found in~\cite[Chapter 2.7]{Pis03}.


\subsection{Notation}\label{sec:notation}

The purpose of this subsection is to summarize the important notation we introduced. The tensor product of inner-product spaces $Q$ and $N$ is denoted by $Q \otimes N=QN$. The vector space of $Q \times Q$ matrices with entries in $S$ is denoted by $\matr_Q(S)$. Whenever $S$ is not specified, it is assumed to be the set of complex numbers $\bC$, i.e., $\matr_Q(\bC)=\matr_{Q}$. We also use the abbreviation $\matr_{Q}\otimes\matr_{N}=\matr_{QN}$. The set of positive semidefinite operators with trace one acting on $Q$ is denoted by $\cS(Q) \subset \matr_{Q}$.

A vector space $E$ together with a norm $\|\cdot\|_{E}$ defines the normed space $\mathbf{E}=\left(E,\|\cdot\|_{E}\right)$. The vector space of linear operators from $E$ to $F$ is denoted by $\lin(E,F)$. For normed spaces $\mathbf{E}=(E,\|\cdot\|_E)$ and $\mathbf{F}=(F,\|\cdot\|_F)$ there is a norm induced on $\lin(E,F)$ defined by
\begin{align}
\|u\|_{\mathbf{E}\to\mathbf{F}}=\|u:\mathbf{E}\to\mathbf{F}\|=\sup_{\|x\|_E \leq 1} \|u(x)\|_F,\quad\text{leading to the normed space $\mathbf{B}(\mathbf{E},\mathbf{F})=\left(\lin(E,F),\|\cdot\|_{\mathbf{E}\to\mathbf{F}}\right)$.}
\end{align}
A vector space $E$ together with a sequence of norms $\|\cdot\|_{\mathds{E}_Q}$ on the spaces $\matr_Q(E)$ (satisfying some consistency conditions) defines an operator space denoted $\mathds{E}=\left(E,\|\cdot\|_{\mathds{E}_Q}\right)$. For operator spaces $\mathds{E}=\left(E,\|\cdot\|_{\mathds{E}_Q}\right)$ and $\mathds{F}=\left(F,\|\cdot\|_{\mathds{F}_Q}\right)$ the completely bounded norm of $u\in\lin(E,F)$ is defined as
\begin{align}
\|u\|_{\mathds{E}\to\mathds{F}}=\|u:\mathds{E}\to\mathds{F}\|_{\cb}=\sup_{Q}\|\1_Q \otimes u\|_{\mathbf{E}_Q\to \mathbf{F}_Q}\ .
\end{align}
For the operator space $\mathds{B}(\mathds{E},\mathds{F})=\left(\lin(E,F),\|\cdot\|_{\mathds{B}(\mathds{E},\mathds{F})_Q}\right)$ we see $\matr_Q(\lin(E,F))$ as elements of $\lin(E,\matr_Q(F))$, and thus for $x \in \matr_Q(\lin(E,F))$,
\begin{align}
\| x \|_{\mathds{B}(\mathds{E},\mathds{F})_Q}=\| x \|_{\mathds{E}\to\mathds{L}^{\infty}_Q(\mathds{F})}\quad\text{with the operator space $\mathds{L}^{\infty}_Q(\mathds{F})=\left(\matr_Q(F),\|\cdot\|_{(\mathds{L}^{\infty}_Q(\mathds{F}))_{Q'}}\right)$,}
\end{align}
having $\|\cdot\|_{(\mathds{L}^{\infty}_Q(\mathds{F}))_{Q'}}=\|\cdot\|_{\mathds{F}_{QQ'}}$.


\section{Extractors}\label{sec:extractors}

\subsection{Extractor property as a bounded norm}

Here we characterize extractors in terms of the bounded norm $\cap\{2^{k}\ell^\infty_N,\ell^1_N\}\to\ell^1_M$.

\extToNorm*

\begin{proof}
We first prove~\eqref{eq:extractor_1}. For any probability distribution $P$ with min-entropy at least $k$ we have $\| P\|_{\ell^1_N} = 1$ as well as $\| P\|_{2^{k} \ell^\infty_N} \leq 1$. Hence, $\| P\|_{\cap(2^{k} \ell^\infty_N, \ell^1_N)} \leq 1$ and by the definition of the bounded norm this implies the claim $\|\Delta[\ext](P)\|_{\ell^1_M}\leq\eps$.

To prove~\eqref{eq:extractor_2} consider a distribution $P$ with $\|P\|_{\ell^1_N}\leq1$ and $\| P \|_{\ell^\infty_N} \leq 2^{-k}$. Then let $P^+(x) = \max\{P(x), 0\}$ and $P^-(x) = \max\{-P(x), 0\}$, and note that $\| P^+ \|_{\ell^\infty_N}, \| P^{-} \|_{\ell^\infty_N} \leq 2^{-k}$. As the extractor property only applies for normalized distributions, we extend $P^+$, $P^-$ into a probability distributions $\bar{P}^+ = P^+ + (1-\|P^+\|_{\ell^1_N})\upsilon_N$ and $\bar{P}^-(x) = P^-(x) +(1-\|P^-\|_{\ell^1_N}) \upsilon_N$. Now observe that $\|\bar{P}^+\|_{\ell^\infty_N}\leq\| P\|_{\ell^\infty_N}+\frac{1}{N}\leq2^{-(k-1)}$ and the similar bound holds for $\|\bar{P}^-\|_{\ell^\infty_N}$. Thus, we have
\begin{align}
\| \Delta[\ext](P) \|_{\ell^1_M} 
&= \| \Delta[\ext](P^+) - \Delta[\ext](P^-) \|_{\ell^1_M} \\
&\leq \| \Delta[\ext](P^+) \|_{\ell_1}  +  \| \Delta[\ext](P^-) \|_{\ell^1_M} \\
&\leq \| \Delta[\ext](\bar{P}^+) \|_{\ell^1_M} + (1-\|P^+\|_{\ell^1_M}) \| \Delta[\ext](\upsilon_N) \|_{\ell^1_M}+\| \Delta[\ext](\bar{P}^-) \|_{\ell^1_M}\notag\\
&\quad+ (1-\|P^-\|_{\ell^1_M}) \| \Delta[\ext](\upsilon_N) \|_{\ell^1_M}\\
&\leq \| \Delta[\ext](\bar{P}^+) \|_{\ell^1_M} + \| \Delta[\ext](\bar{P}^-) \|_{\ell^1_M} + 2\| \Delta[\ext](\upsilon_N) \|_{\ell^1_M}\\
&\leq 4\eps\ ,
\end{align}
where we used the fact that $\bar{P}^+$, $\bar{P}^-$ and $\upsilon_N$ have min-entropy at least $k-1$.
\end{proof}


\subsection{Quantum-proof as a completely bounded norm}\label{sec:norms-ext}

Recall that the relevant norm for extractors is a maximum of two norms denoted $\cap\{2^k\ell^\infty_N,\ell^1_N\}$. Our objective is to extend this norm to matrices so that it gives an operator space and so that the requirement of quantum-proof extractors is captured. In Section~\ref{sec:operator_spaces} we discussed operator spaces for the $\ell^\infty_N$-norm as well as for the $\ell^1_N$-norm. Moreover, it is simple to define an operator space for the maximum of two norms (intersection norm), because the corresponding unit ball is the intersection of the two unit balls of the norms. However, it turns out that because of positivity issues, this norm is not the most adapted for our purpose.

\begin{definition}[Intersection norms for extractors]\label{def:intersection_norms}
We define the two operator spaces
\begin{align}
\cap\{K\mathds{L}_N^\infty,\mathds{L}_N^1\}=\left(\matr_N,\|\cdot\|_{(\cap\{K\mathds{L}_N^\infty,\mathds{L}_N^1\})_Q}\right)\quad\mathrm{and}\quad\capcap\{K\mathds{L}_N^\infty,\mathds{L}_N^1\}=\left(\matr_N,\|\cdot\|_{(\capcap\{K\mathds{L}_N^\infty,\mathds{L}_N^1\})_Q}\right)\ .
\end{align}
for $x\in\matr_Q(\matr_N)$ as
\begin{align}\label{eq:def-cap}
&\| x \|_{(\cap\{K\mathds{L}_N^\infty,\mathds{L}_N^1\})_Q} \eqdef \max\left\{ K\| x \|_{(\infty;\infty)}, \| x \|_{(\infty;1)}\right\}\\
&\label{eq:def-cap-cap}
\| x \|_{(\capcap\{K\mathds{L}_N^\infty,\mathds{L}_N^1\})_Q} \eqdef \inf \Big\{\max\left\{ \sqrt{K} \| a \|_{(\infty;\infty)}, \| \Gamma (a) \|_{\infty} \right\} \cdot \max\left\{ \sqrt{K} \| b \|_{(\infty;\infty)}, \| \Gamma(b) \|_{\infty} \right\} : x = a b^*\Big\} \ ,
\end{align}
where $\Gamma\in\lin\left(\matr_{QN},\matr_Q\left(\mathbb{C}^{N^2}\right)\right)$ is defined as 
\begin{align}\label{eq:def-gamma}
\Gamma( \ket{i} \bra{j} \otimes \ket{k} \bra{\ell} ) = \ket{i} \bra{j} \otimes  \bra{k} \otimes \bra{\ell} \ .
\end{align}
\end{definition}

We also use the abbreviation
\begin{align}
\cap=\cap\{K\mathds{L}_N^\infty,\mathds{L}_N^1\}\quad\mathrm{and}\quad\cap\cdot\;\cap=\capcap\{K\mathds{L}_N^\infty,\mathds{L}_N^1\}\ .
\end{align}
It might appear that the $\capcap$-norm is rather artificial but it can in fact be constructed from basic operator spaces (the row and column operators spaces) and a fundamental operator space tensor product called the Haagerup tensor product. For details on this we refer to Appendix~\ref{app:intersection}.

For extractors we are naturally interested in the case of matrices which are diagonal with respect to the first system, that is, elements of $\matr_Q(\mathbb{C}^N)\subset\matr_Q(\matr_N)$. The $\cap$-norm and the $\capcap$-norm are then defined on $\matr_Q(\mathbb{C}^N)$ via the embedding of $\matr_Q(\mathbb{C}^N)$ into $\matr_{QN}$ as block-diagonal matrices.

The $\cap$-norm and the $\capcap$-norm are actually closely related and this can be seen from the following lemma, where we write the $(\infty;\infty)$-norm and $(\infty;1)$-norm as an infimum over all possible factorizations.

\begin{lemma}\label{lem:factorization-norms}
For $x \in \matr_{QN}$ we have
\begin{align}\label{eq:factorization-infty}
\| x \|_{(\infty; \infty)} = \inf \big\{ \| a \|_{(\infty; \infty)} \| b \|_{(\infty; \infty)} : x = a b^* \text{ and } a, b \in \matr_{QN} \big\}\ .
\end{align}
as well as
\begin{align}\label{eq:factorization-infty-one}
\| x \|_{(\infty; 1)} = \inf \big\{ \| \Gamma (a) \|_{\infty} \| \Gamma(b) \|_{\infty} : x = a b^* \text{ and } a, b \in \matr_{QN} \big\}\ ,
\end{align}
where $\Gamma\in\lin\left(\matr_{QN},\matr_Q\left(\mathbb{C}^{N^2}\right)\right)$ as defined in~\eqref{eq:def-gamma}.
\end{lemma}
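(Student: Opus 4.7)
The plan is to treat \eqref{eq:factorization-infty} and \eqref{eq:factorization-infty-one} separately: the first is a classical fact about the operator norm, while the second exploits the identification of $S^{1}_{N}$ with the operator-space dual of $M_N$ from Proposition~\ref{prop:s1-os} together with a Stinespring-type dilation theorem for completely bounded maps.

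For \eqref{eq:factorization-infty}, submultiplicativity of the operator norm, $\|ab^{*}\|_{\infty} \leq \|a\|_{\infty}\|b\|_{\infty}$, immediately gives the lower bound on the infimum. For the matching upper bound I would use the polar decomposition $x = U|x|$ and set $a \eqdef U|x|^{1/2}$, $b \eqdef |x|^{1/2}$, which gives $ab^{*} = x$ and $\|a\|_{\infty} = \|b\|_{\infty} = \|x\|_{\infty}^{1/2}$, so the infimum is attained.

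For \eqref{eq:factorization-infty-one}, I would first record the auxiliary identity $\|\Gamma(a)\|_{\infty}^{2} = \|\tr_N(aa^{*})\|_{\infty}$, which follows by expanding $\Gamma(a)\Gamma(a)^{*}$ directly from~\eqref{eq:def-gamma}. The easy ``$\leq$'' direction then goes as follows: for a factorization $x = ab^{*}$ and $A, B \in \matr_{Q}$ with $\|A\|_{2}, \|B\|_{2} \leq 1$, H\"older's inequality in Schatten norms yields
\[
\|(A\otimes\1)x(B\otimes\1)\|_{1} = \bigl\|[(A\otimes\1)a]\cdot[(B^{*}\otimes\1)b]^{*}\bigr\|_{1} \leq \|(A\otimes\1)a\|_{2}\,\|(B^{*}\otimes\1)b\|_{2},
\]
and the Hilbert--Schmidt factors obey $\|(A\otimes\1)a\|_{2}^{2} = \tr\bigl[A^{*}A\,\tr_N(aa^{*})\bigr] \leq \|A\|_{2}^{2}\|\Gamma(a)\|_{\infty}^{2}$ by H\"older on $\matr_Q$ (and analogously for $b$). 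Taking the supremum over $A, B$ and then the infimum over factorizations of $x$ gives the bound.

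The main obstacle will be the ``$\geq$'' direction, since one has to exhibit a specific factorization attaining $\|x\|_{(\infty;1)}$. My plan is to use Proposition~\ref{prop:s1-os} to view $\|x\|_{(\infty;1)}$ as $\|u\|_{\cb, M_N \to M_Q}$ for the map $u$ whose Choi matrix is $x$, and then invoke the Stinespring/Wittstock dilation to write $u(y) = V_{1}^{*}(y\otimes\1_{K})V_{2}$ with $V_{1}, V_{2} \colon \C^{Q} \to \C^{N}\otimes\C^{K}$ satisfying $\|V_{1}\|\,\|V_{2}\| = \|u\|_{\cb}$. By a standard rank-reduction (and, if necessary, zero-padding) one may take $K = QN$, so that the $a, b$ constructed below are genuinely square. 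I would then define $a, b \in \matr_{QN}$ by
\[
a\ket{\alpha} \eqdef \sum_{r=1}^{N} \bigl(V_{1}^{*}(\ket{r}\otimes\1_{K})\ket{\alpha}\bigr)\otimes\ket{r}, \qquad b\ket{\alpha} \eqdef \sum_{r=1}^{N} \bigl(V_{2}^{*}(\ket{r}\otimes\1_{K})\ket{\alpha}\bigr)\otimes\ket{r}
\]
for $\ket{\alpha} \in \C^{K} \simeq \C^{QN}$. A short block computation then yields $ab^{*} = \sum_{r,s} V_{1}^{*}(\ket{r}\bra{s}\otimes\1_{K})V_{2} \otimes \ket{r}\bra{s}$, which is precisely the Choi matrix of $u$ and hence equals $x$, while $\tr_{N}(aa^{*}) = \sum_{r} V_{1}^{*}(\ket{r}\bra{r}\otimes\1_{K})V_{1} = V_{1}^{*}V_{1}$ (and analogously $\tr_{N}(bb^{*}) = V_{2}^{*}V_{2}$), so $\|\Gamma(a)\|_{\infty}\|\Gamma(b)\|_{\infty} = \|V_{1}\|\,\|V_{2}\| = \|x\|_{(\infty;1)}$. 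The delicate points to spell out carefully are the rank reduction to $K = QN$ and the index bookkeeping that identifies the displayed product with the Choi matrix of $u$.
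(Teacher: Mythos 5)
Your proof is correct and follows essentially the same route as the paper's: equation \eqref{eq:factorization-infty} via polar decomposition, and the hard direction of \eqref{eq:factorization-infty-one} via the Choi-matrix identification $\|x\|_{(\infty;1)}=\|u\|_{\cb}$ from Proposition~\ref{prop:s1-os} together with the Wittstock/Watrous factorization theorem, including the same dimension-reduction caveat for the ancilla. The only (minor) deviation is that you obtain the easy inequality $\|x\|_{(\infty;1)}\leq\|\Gamma(a)\|_{\infty}\|\Gamma(b)\|_{\infty}$ by a direct H\"older estimate from the definition, using $\Gamma(a)\Gamma(a)^{*}=\tr_N(aa^{*})$, whereas the paper reads both directions off the factorization theorem at once; both are fine.
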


Note that in contrast, the factorization in the $\capcap$-norm in~\eqref{eq:factorization-infty-one} is restricted to be the same one for both of the norms.

\begin{proof}
For one direction in~\eqref{eq:factorization-infty}, we have for any $x = a b^*$ that
\begin{align}
\| x \|_{(\infty;\infty)} \leq \| a \|_{(\infty;\infty)} \| b^* \|_{(\infty;\infty)}\ .
\end{align}
For the other direction, we write a polar decomposition of $x = UP$ with $U$ unitary and $P$ positive semidefinite. Then let $a = U \sqrt{P}$, $b = \sqrt{P}$, and hence $\| a \|_{(\infty;\infty)} = \| b \|_{(\infty;\infty)} = \| P \|_{(\infty;\infty)}^{1/2}$ which gives $\| a \|_{(\infty;\infty)} \| b \|_{(\infty;\infty)} = \| x \|_{(\infty;\infty)}$.

For~\eqref{eq:factorization-infty-one}, we use the definition of the $(\infty;1)$-norm in terms of the operator space dual of $\mathds{L}_N^\infty$. For that we see $x$ as the Choi matrix of $u\in\lin(\matr_N,\matr_{Q})$, defined by 
\begin{align}\label{eq:choi-matrix}
\tr\left[d u(c)\right] = \tr\left[ x \left(d \otimes c^T\right)\right]\ .
\end{align}
We then have $\| x \|_{(\infty; 1)} = \| u \|_{\mathds{L}^\infty_{N}\to\mathds{L}^\infty_{N}}$. Next we show the following useful claim:
\begin{align}
\text{$x = ab^*$ is equivalent to $u(z) = \hat{a} (z \otimes \1_{NQ}) \hat{b}^{*}$ for all $z \in \matr_N$, where $\hat{a} = \Gamma(a)$ and $\hat{b} = \Gamma(b)$.}
\end{align}
For this, let us write $x = \sum_{ii'kk'} x_{ii'kk'} \ket{i} \bra{i'} \otimes \ket{k} \bra{k'}$. Then $x = a b^*$ translates to $x_{ii'pq} = \sum_{j \ell} a_{ijp\ell} b^*_{i' j q \ell}$ for all $ii'pq$. On the other hand, we have that $\tr[ \ket{i'} \bra{i} u(\ket{p} \bra{q})] = \tr[ x  \ket{i'} \bra{i} \otimes \ket{q} \bra{p} ] = x_{ii'pq}$. We explicitly write
\begin{align}
\hat{a} = \sum_{ijk\ell} a_{ijk\ell} \ket{i} \bra{j} \otimes \bra{k} \otimes \bra{\ell} \quad\mathrm{and}\quad\hat{b}^* = \sum_{ijk\ell} b^*_{ijk\ell} \ket{j}\bra{i} \otimes \ket{k} \otimes \ket{\ell}\ .
\end{align}
As a result,
\begin{align}
\hat{a} \ket{p} \bra{q} \otimes \1_{NQ} \hat{b}^* &= \sum_{ii' j \ell } a_{ijp\ell} b^*_{i'jq\ell} \ket{i} \bra{i'}\ ,
\end{align}
which proves the claim. To finish the proof for~\eqref{eq:factorization-infty-one}, we use~\cite[Theorem 5]{Wat09} which states that\footnote{We refer to Watrous' paper as the notation and proof are quantum information friendly, but such a result is well known in operator space theory and holds in more generality, see e.g., \cite[Theorem 1.6]{Pis03}.}
\begin{align}
\| u \|_{\mathds{L}^\infty_{N}\to\mathds{L}^\infty_{N}}= \inf\left\{ \| \hat{a} \|_{\infty} \| \hat{b} \|_{\infty} : u(\beta) = \hat{a} \beta \otimes \1_{NQ} \hat{b}^*\right\}\ .
\end{align}
In~\cite{Wat09}, things are written in the dual form: the diamond norm of $u^*$ is considered and $\hat{a}^*$ and $\hat{b}^*$ is a Stinespring pair for the channel $u^*$ in the sense that $u^*(\beta) = \tr_{NQ}[\hat{a}^* \beta \hat{b}]$. Another point is that we can assume that the output dimension of $\hat{a}$ and $\hat{b}$ are $NQ^2$. This proves equality \eqref{eq:factorization-infty-one}.
\end{proof}

We now provide a simple bound on the $\capcap$-norm.

\begin{proposition}\label{prop:new-old-norm}
For $x \in \matr_Q (\matr_N)$ we have
\begin{align}\label{eq:new-old-norm1}
\text{$\| x \|_{(\capcap)_Q} \geq \| x \|_{\cap_Q}$ and if $x \geq 0$, $\| x \|_{(\capcap)_Q} = \| x \|_{\cap_Q}$.}
\end{align}
Moreover, for $Q = 1$, i.e., $x \in \matr_N$, we have $\| x \|_{\capcap} = \| x \|_{\cap}$. 
\end{proposition}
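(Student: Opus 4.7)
The plan is to prove the three statements in turn, using the factorization characterization of Lemma~\ref{lem:factorization-norms} as the main tool throughout. The first inequality $\|x\|_{M_Q(\capcap)}\geq\|x\|_{M_Q(\cap)}$ is essentially immediate: for any factorization $x = ab^*$, Lemma~\ref{lem:factorization-norms} yields both
$2^k\|x\|_{(\infty;\infty)}\leq(\sqrt{2^k}\|a\|_{(\infty;\infty)})(\sqrt{2^k}\|b\|_{(\infty;\infty)})$
and $\|x\|_{(\infty;1)}\leq\|\Gamma(a)\|_{\infty}\|\Gamma(b)\|_{\infty}$. Each right-hand side is bounded by the product $\max\{\sqrt{2^k}\|a\|_{(\infty;\infty)},\|\Gamma(a)\|_{\infty}\}\cdot\max\{\sqrt{2^k}\|b\|_{(\infty;\infty)},\|\Gamma(b)\|_{\infty}\}$, so taking the max on the left and the infimum over factorizations on the right gives the claim.

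For the equality when $x\geq 0$, I would exhibit the explicit ``symmetric'' factorization $a=b=\sqrt{x}$, where $\sqrt{x}\in\matr_{QN}$ is the positive square root and $x=\sqrt{x}\sqrt{x}^{*}$ since $\sqrt{x}$ is Hermitian. Then $\|\sqrt{x}\|_{(\infty;\infty)}=\sqrt{\|x\|_{(\infty;\infty)}}$, and the crux is the identity $\|\Gamma(\sqrt{x})\|_{\infty}^{2}=\|x\|_{(\infty;1)}$. To establish this, let $u$ be the map with Choi matrix $x$; by the computation in the proof of Lemma~\ref{lem:factorization-norms}, $u(\beta)=\Gamma(a)(\beta\otimes\1_{NQ})\Gamma(b)^{*}$, so taking $\beta=\1_N$ with $a=b=\sqrt{x}$ gives $u(\1_N)=\Gamma(\sqrt{x})\Gamma(\sqrt{x})^{*}$ and hence $\|u(\1_N)\|_{\infty}=\|\Gamma(\sqrt{x})\|_{\infty}^{2}$. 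On the other hand, $u(\1_N)=\tr_{N}(x)$ by direct evaluation of the Choi definition~\eqref{eq:choi-matrix}, and a short Cauchy--Schwarz argument applied to the supremum~\eqref{eq:l1-os} defining $\|\cdot\|_{(\infty;1)}$ shows that for $x\geq 0$ one has $\|x\|_{(\infty;1)}=\|\tr_N(x)\|_{\infty}$ (the symmetric choice $A=B$ is optimal because $(A\otimes\1)x(A^{*}\otimes\1)\geq 0$). Combining, both max's evaluate to $\sqrt{\|x\|_{M_Q(\cap)}}$, so their product equals $\|x\|_{M_Q(\cap)}$, yielding $\|x\|_{M_Q(\capcap)}\leq\|x\|_{M_Q(\cap)}$.

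For the $Q=1$ case, an arbitrary $x\in\matr_{N}$ need not be positive, so instead I would use a polar decomposition $x=UP$ with $U$ unitary and $P=\sqrt{x^{*}x}$, setting $a=U\sqrt{P}$ and $b=\sqrt{P}$ so that $ab^{*}=x$. When $Q=1$ the map $\Gamma$ reduces to the vectorization $a\mapsto\sum_{k\ell}a_{k\ell}\bra{k}\otimes\bra{\ell}$, so $\|\Gamma(a)\|_{\infty}$ coincides with the Hilbert--Schmidt norm $\|a\|_{2}$. A short computation using unitary invariance gives $\|a\|_{(\infty;\infty)}=\|b\|_{(\infty;\infty)}=\sqrt{\|x\|_{\infty}}$ and $\|a\|_{2}=\|b\|_{2}=\sqrt{\tr P}=\sqrt{\|x\|_{1}}$, so each max becomes $\sqrt{\|x\|_{\cap}}$ and their product is $\|x\|_{\cap}$.

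The main obstacle is the identity $\|\Gamma(\sqrt{x})\|_{\infty}^{2}=\|x\|_{(\infty;1)}$ in the positive case: it is the operator-space manifestation of the fact that for completely positive maps the Stinespring pair may be chosen symmetric and the cb-norm collapses to $\|u(\1)\|_{\infty}$. Once this is in hand, the rest amounts to bookkeeping of factorizations, and the $Q=1$ case is handled by replacing the square root by a polar decomposition (which plays the role of a ``symmetric'' factorization for non-positive $x$).
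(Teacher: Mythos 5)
Your proposal is correct and follows essentially the same route as the paper: the first inequality from the factorization characterization of Lemma~\ref{lem:factorization-norms}, the positive case via the symmetric factorization, and the $Q=1$ case via polar decomposition with $\|\Gamma(a)\|_\infty=\|a\|_2$. The only (cosmetic) difference is in the positive case, where the paper cites the general facts that a completely positive map admits a symmetric Stinespring pair and satisfies $\|u\|_{\cb}=\|u(\1)\|_\infty$, whereas you instantiate this explicitly with $a=b=\sqrt{x}$ and verify $\|x\|_{(\infty;1)}=\|\tr_N(x)\|_\infty$ directly by Cauchy--Schwarz on the defining supremum.
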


\begin{proof}
The inequality~\eqref{eq:new-old-norm1} follows directly from Definition~\ref{def:intersection_norms} and Lemma~\ref{lem:factorization-norms}. When $x \geq 0$, the corresponding map $u\in\lin(\matr_N,\matr_Q)$ is completely positive. This implies that the completely bounded norm of $u$, as defined in~\eqref{eq:choi-matrix} is given by $\| u \|_{\mathds{L}^\infty_{N}\to\mathds{L}^\infty_{N}} = \| u(\1) \|_{\infty}$. We also know that for completely positive map, we can find a representation $u(\beta) = \hat{a} \beta \otimes \1_{NQ} \hat{a}^{*}$. This implies that $\| x \|_{(\infty;1)} = \| u \|_{_{\mathds{L}^\infty_{N}\to\mathds{L}^\infty_{N}}} = \| \hat{a} \hat{a}^* \|_{\infty}$, and then we get
\begin{align}
\| x \|_{\capcap}\leq\left(\max\left\{ \sqrt{2^k} \| a \|_{\infty}, \| \hat{a} \|_{\infty} \right\}\right)^2=\max \left\{2^k\| x \|_{\infty}, \| x \|_{(1;\infty)} \right\}\ .
\end{align}
For $x \in \matr_N$, we perform a polar decomposition of $x$, $x = UP$ where $U$ is a unitary matrix and $P$ is positive semidefinite. Then let $a = U \sqrt{P}$, $b = \sqrt{P}$, and hence $\| a \|_{(\infty;\infty)} = \| b \|_{(\infty;\infty)} =\| P \|_{(\infty;\infty)}^{1/2}$. Moreover, we have $\| \Gamma(a) \|_{\infty} = \| a \|_2 = \| b \|_2 = \| \sqrt{P} \|_2 = \| P \|_1$, and we finally get $\max\left\{ 2^k \| x \|_{\infty}, \| x \|_{1} \right\} \leq \| x \|_{\capcap}$. 
\end{proof}

\begin{proposition}\label{prop:decomposition-positive}
For $x \in \matr_Q(\matr_N)$ we have
\begin{align}
x = x_1 - x_2 + i x_3 - i x_4\quad\mathrm{with}\quad x_i \geq 0\quad\mathrm{and}\quad\| x_i \|_{(\capcap)_Q} \leq \| x \|_{(\capcap)_Q}\ .
\end{align}
\end{proposition}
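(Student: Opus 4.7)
My proof plan is to perform a standard two-level decomposition of $x$ and then control the $\capcap$-norm of each piece separately. First, set $H = \tfrac12(x + x^*)$ and $K = \tfrac1{2i}(x - x^*)$, so that $x = H + iK$ with both $H$ and $K$ Hermitian. Applying the spectral theorem to these Hermitian matrices in $\matr_{QN}$ yields decompositions $H = H_+ - H_-$ and $K = K_+ - K_-$ with $H_\pm, K_\pm \geq 0$ and $H_+ H_- = 0 = K_+ K_-$. Setting $x_1 = H_+$, $x_2 = H_-$, $x_3 = K_+$, $x_4 = K_-$ yields the desired identity $x = x_1 - x_2 + i x_3 - i x_4$ with each $x_i$ positive semidefinite.

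In the first stage of the norm control I would show $\|H\|_{M_Q(\capcap)}, \|K\|_{M_Q(\capcap)} \leq \|x\|_{M_Q(\capcap)}$. The key observation is that the $\capcap$-norm is invariant under taking adjoints: given any factorization $x = a b^*$, the identity $x^* = b a^*$ is a factorization of $x^*$ whose two factors $\max\{\sqrt{2^k}\|a\|_{(\infty;\infty)}, \|\Gamma(a)\|_\infty\}$ and $\max\{\sqrt{2^k}\|b\|_{(\infty;\infty)}, \|\Gamma(b)\|_\infty\}$ are merely swapped. Hence $\|x^*\|_{M_Q(\capcap)} = \|x\|_{M_Q(\capcap)}$, and the triangle inequality for $\|\cdot\|_{M_Q(\capcap)}$ yields $\|H\|_{M_Q(\capcap)} \leq \tfrac12(\|x\|_{M_Q(\capcap)} + \|x^*\|_{M_Q(\capcap)}) = \|x\|_{M_Q(\capcap)}$, and analogously for $K$.

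In the second stage I need $\|H_\pm\|_{M_Q(\capcap)} \leq \|H\|_{M_Q(\capcap)}$ (and the analogue for $K_\pm$). Because $H_\pm \geq 0$, Proposition~\ref{prop:new-old-norm} identifies $\|H_\pm\|_{M_Q(\capcap)}$ with $\|H_\pm\|_{M_Q(\cap)} = \max\{2^k \|H_\pm\|_{(\infty;\infty)}, \|H_\pm\|_{(\infty;1)}\}$. The operator-norm part is immediate: the spectral theorem gives $\|H_\pm\|_{(\infty;\infty)} \leq \|H\|_{(\infty;\infty)} \leq \|H\|_{M_Q(\cap)} \leq \|H\|_{M_Q(\capcap)}$. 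The delicate estimate is $\|H_\pm\|_{(\infty;1)} \leq \|H\|_{M_Q(\capcap)}$. For this I would take a near-optimal factorization $H = a b^*$ realizing the $\capcap$-norm of $H$, compose on both sides with the spectral projector $P_+$ onto the positive eigenspace of $H$ to obtain the factorization $H_+ = (P_+ a)(P_+ b)^*$, and then estimate the factors $\|P_+ a\|_{(\infty;\infty)} \leq \|a\|_{(\infty;\infty)}$ (trivial) and $\|\Gamma(P_+ a)\|_\infty$ (not trivial) in terms of the original factorization data.

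The main obstacle is precisely this last bound on $\|\Gamma(P_+ a)\|_\infty$. Using the formula $\|\Gamma(c)\|_\infty^2 = \|\tr_N(c c^*)\|_\infty$ implicit in the proofs of Lemma~\ref{lem:factorization-norms} and Proposition~\ref{prop:new-old-norm}, this amounts to showing $\|\tr_N(P_+\, a a^*\, P_+)\|_\infty \leq \|\tr_N(a a^*)\|_\infty$, and this is \emph{false} for a general projector $P_+$ on $\matr_{QN}$, since $P_+$ need not be of tensor product form in $\matr_Q \otimes \matr_N$. The argument must therefore exploit the specific compatibility between $P_+$ and $H$, namely that $P_+$ commutes with $H$ and satisfies $P_+ H P_+ = H_+$, to reduce the bound to a statement about the Hermiticity-preserving map associated with $H$ via the operator-space dual structure used in Proposition~\ref{prop:s1-os}. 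In particular, I expect one can exploit the Jordan decomposition of the corresponding Hermiticity-preserving map $u_H = u_{H_+} - u_{H_-}$ (both summands completely positive) together with Paulsen/Wittstock-type dilation to produce a compatible factorization of $H_+$ from one of $H$ whose $\Gamma$-factors are controlled by those of the factorization of $H$, closing the proof.
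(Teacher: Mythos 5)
Your proposal has a genuine gap, and you have in fact located it yourself: the entire argument hinges on showing $\|H_\pm\|_{M_Q(\capcap)} \leq \|H\|_{M_Q(\capcap)}$ for the spectral (Jordan) parts of the Hermitian matrix $H \in \matr_{QN}$, and this step is never established. Your own analysis shows that the natural attempt --- compressing an optimal factorization $H = ab^*$ by the spectral projector $P_+$ --- fails because $P_+$ is not of tensor-product form on $\matr_Q \otimes \matr_N$, so $\|\Gamma(P_+a)\|_\infty$ is not controlled by $\|\Gamma(a)\|_\infty$. The closing appeal to a Jordan decomposition of the Hermiticity-preserving map $u_H$ together with a Paulsen/Wittstock-type dilation is a statement of hope, not a proof; indeed it is not even clear that the inequality $\|H_+\|_{(\infty;1)} \leq \|H\|_{(\infty;1)}$ holds, since $(A\otimes\1_N)H_+(B\otimes\1_N)$ is not the positive part of $(A\otimes\1_N)H(B\otimes\1_N)$. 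The first stage of your argument (adjoint-invariance of the $\capcap$-norm via swapping the factors in $x=ab^*$, hence $\|H\|,\|K\| \leq \|x\|$) is fine, but it does not rescue the second stage.

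The paper's proof avoids spectral decompositions entirely by polarizing the optimal factorization rather than the operator: take $x = ab^*$ achieving (after rescaling) a balanced optimum in the definition of $\|x\|_{M_Q(\capcap)}$, and set $x_1 = \tfrac14(a+b)(a+b)^*$, $x_2 = \tfrac14(a-b)(a-b)^*$, $x_3 = \tfrac14(a+ib)(a+ib)^*$, $x_4 = \tfrac14(a-ib)(a-ib)^*$. The polarization identity gives $x = x_1 - x_2 + ix_3 - ix_4$, positivity of each $x_i$ is automatic from the form $cc^*$, and the norm bound is a one-line triangle inequality applied to $\|a\pm b\|_{(\infty;\infty)}$ and $\|\Gamma(a\pm b)\|_\infty$ inside the balanced maximum. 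This is precisely the move your approach is missing: because the $\capcap$-norm is defined through factorizations, the decomposition into positive parts must be built from the factorization itself, not imposed afterwards by spectral projections. If you want to salvage your route, you would need to replace the Jordan decomposition by this factorization-compatible one; as written, the proof does not go through.
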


\begin{proof}
Let $x = ab^*$ be a factorization achieving the minimum of~\eqref{eq:def-cap-cap} with
\begin{align}
\max\left\{ \sqrt{2^k} \| a \|_{(\infty;\infty)}, \| \Gamma(a) \|_{\infty} \right\} = \max\left\{ \sqrt{2^k} \| b \|_{(\infty;\infty)}, \| \Gamma(b) \|_{\infty} \right\}\ ,
\end{align}
which can be achieved by scaling $a$ and $b$. We define $x_1 = \frac{1}{4} (a+b) (a+b)^*, x_2 = \frac{1}{4} (a - b) (a-b)^*, x_3 = \frac{1}{4} (a+ib) (a+ib)^*$ and $x_4 = \frac{1}{4}(a-ib)(a-ib)^*$. It is simple to verify that $x = x_1 - x_2 + i x_3 - i x_4$. For the norms, we have 
\begin{align}
\| x_1 \|_{(\capcap)_Q} 
&\leq \frac{1}{4} \max\left\{ \sqrt{2^k} \| a + b\|_{(\infty;\infty)}, \| \Gamma(a+b)\|_{\infty} \right\}^2 \\
&\leq \frac{1}{4} \max\left\{ \sqrt{2^k} (\| a \|_{(\infty;\infty)} + \| b \|_{(\infty;\infty)}), \| \Gamma(a) \|_{\infty} + \| \Gamma(b) \|_{\infty} \right\}^2 \\
&\leq \frac{1}{4} \left(\max\left\{ \sqrt{2^k} \| a \|_{(\infty;\infty)}, \| \Gamma(a) \|_{\infty} \right\} + \max\left\{ \sqrt{2^k} \| b \|_{(\infty;\infty)}, \| \Gamma(b) \|_{\infty} \right\}  \right)^2 \\
&= \| x \|_{(\capcap)_Q}\ .
\end{align}
The same argument also holds for $x_i$ with $i \in \{2,3,4\}$.
\end{proof}

We now have everything at hand that is relevant for extractors: the operator space $\capcap\{K\mathds{L}_N^\infty,\mathds{L}_N^1\}$ defined in~\eqref{eq:def-cap-cap} as the extension of the $\cap\{2^k \ell^\infty_N, \ell^1_N\}$-norm (for the input condition), and the trace class operator space $\mathds{L}_M^1$ with the defining norms $(\infty;1)$ as in~\eqref{eq:l1-os} as the extension of the $\ell^1_M$-norm (for the output condition).

\extToNormQuantum*

\begin{proof}
We first prove~\eqref{eq:quantum_ext1}. For $\rho_{QN}\in\cS(QN)$ with $H_{\min}(N|Q)_{\rho}\geq k$ we have $\|\rho_{QN}\|_{(1;\infty)} \leq 2^{-k}$ as well as $\|\rho_{QN}\|_{(1;1)} \leq 1$. Hence, we get for $\sigma_{Q}\in\cS(Q)$ that
\begin{align}
\| \Delta[\ext](\rho_{QN})\|_{(1;1)} &= \left\| \sigma^{1/2}_Q\Delta[\ext]\left(\sigma^{-1/2}_Q \rho_{QN}\sigma^{-1/2}_Q\right)\sigma^{1/2}_Q\right\|_{(1;1)}\label{eq:cb_extractor1}\\
&\leq \left\| \Delta[\ext]\left(\sigma^{-1/2}_Q \rho_{QN} \sigma^{-1/2}_Q\right) \right\|_{(\infty;1)} \\
&\leq \| \Delta[\ext] \|_{\capcap\left(2^k\mathds{L}_N^\infty,\mathds{N}_N^1\right)\to\mathds{L}_M^1} \left\| \sigma^{-1/2}_Q \rho_{QN} \sigma_Q^{-1/2} \right\|_{\capcap} \\
&= \| \Delta[\ext] \|_{\capcap\left(2^k\mathds{L}_N^\infty,\mathds{N}_N^1\right)\to\mathds{L}_M^1} \cdot \max\left\{ 2^k \left\| \sigma^{-1/2}_Q \rho_{QN} \sigma^{-1/2}_Q \right\|_{(\infty;\infty)},\left\|\sigma^{-1/2}_Q \rho_{QN} \sigma^{-1/2}_Q \right\|_{(\infty;1)}\right\}\ ,
\end{align}
where we used the simple expression for the $\capcap$-norm of positive operators (Proposition~\ref{prop:new-old-norm}). We now apply the previous inequality to
\begin{align}
\text{$\sigma_{Q}:=\frac{\omega_{Q}+\rho_Q}{2}$ with $\rho_{QN} \leq 2^{-k} \omega_{Q} \otimes \1_N$ and $\omega_{Q}\in\cS(Q)$.}
\end{align}
Then, we have $2^k \rho_{QN} \leq \omega_Q \otimes \1_N \leq 2 \sigma_Q \otimes \1_N$, which means
\begin{align}
2^{k}\left\|\sigma^{-1/2}_Q \rho_{QN} \sigma^{-1/2}_Q\right\|_{(\infty;\infty)} \leq 2\ .
\end{align}
In addition, as $\rho_Q \leq 2 \sigma$ we get 
\begin{align}
\left\| \sigma^{-1/2}_Q \rho_{QN} \sigma^{-1/2}_Q \right\|_{(\infty;1)}= \left\| \sigma^{-1/2}_Q \rho_Q \sigma^{-1/2}_Q \right\|_{\infty}\leq 2\ .\label{eq:cb_extractor2}
\end{align}
Taken~\eqref{eq:cb_extractor1}--\eqref{eq:cb_extractor2} together proves~\eqref{eq:quantum_ext1}.

We now prove~\eqref{eq:quantum_ext2}. Let $z\in\matr_Q(\mathbb{C}^N)$ with $\| z \|_{\capcap} \leq 1$. By definition of the $\|\cdot\|_{\capcap}$ norm (Definition~\ref{def:intersection_norms}), there exists a factorization $z = a b^*$ such that
\begin{align}
\max\left\{\sqrt{2^{k}}\|a\|_{(\infty; \infty)}, \| \Gamma(a) \|_{\infty} \right\}=\max \left\{ \sqrt{2^{k}} \| b \|_{(\infty; \infty)},\|\Gamma(b) \|_{\infty} \right\} \leq 1\ .
\end{align}
We then define $\hat{z}\in\matr_{2QN}$ as
\begin{align}\label{eq:zhat}
\hat{z}\eqdef\begin{pmatrix}
aa^* & b a^* \\
a b^* & b b^*
\end{pmatrix}
=\left( a \otimes \ket{0} \bra{0} + b \otimes \ket{1} \bra{0}  \right) \cdot \left( a \otimes \ket{0} \bra{0} + b \otimes \ket{1} \bra{0}  \right)^*\ ,
\end{align}
and estimate
\begin{align}
\| \hat{z} \|_{\capcap} &\leq\left(\max\left\{ \sqrt{2^{k}} \| a \otimes \proj{0} + b \otimes \ket{0} \bra{1}  \|_{(\infty;\infty)}, \| \Gamma(a) \otimes \proj{0} + \Gamma(b) \otimes \ket{0} \bra{1} \|_{\infty}\right\}\right)^2 \\
&\leq\left(2\max\left\{ \sqrt{2^{k}} \| a \|_{(\infty; \infty)}, \| \Gamma(a) \|_{\infty}\right\}\right)^2 \\
&= 4 \| z \|_{\capcap}\\
&\leq4\ .
\end{align}
By~\eqref{eq:zhat} we have $\hat{z}\geq 0$ and Proposition~\ref{prop:new-old-norm} then implies
\begin{align}\label{eq:zhat_action}
\max \left\{ 2^{k} \| \hat{z} \|_{(\infty;\infty)}, \| \hat{z} \|_{(\infty;1)} \right\}=\| \hat{z} \|_{\cap \cdot \cap}\leq4\ .
\end{align}
Now, we evaluate
\begin{align}\label{eq:Uopt1}
\| \Delta[\ext](z) \|_{(\infty;1)}= \sup_{\stackrel{c,d \in \matr_Q}{\| c \|_2 \leq 1, \| d \|_2 \leq 1}}\big\| c \Delta[\ext](z) d \big\|_{(1;1)}&= \sup_{U, c,d} \Big| \tr\big[c \Delta[\ext](z) d \cdot U\big] \Big| \\
&= \sup_{U,c,d} \left|  \tr\left[ \Delta[\ext]\left(
\begin{pmatrix} 
c & 0 \\
0 & c 
\end{pmatrix}
\hat{z} 
\begin{pmatrix}
d & 0 \\
0 & d 
\end{pmatrix}\right)  \cdot U \otimes \ket{1} \bra{0} \right] \right| \ .\label{eq:Uopt2}
\end{align}
For the positive semidefinite operator 
\begin{align}
\rho:=\frac{1}{8}\begin{pmatrix}
c & 0 \\
0 & c 
\end{pmatrix}
\hat{z} 
\begin{pmatrix}
d & 0 \\
0 & d 
\end{pmatrix}
\end{align}
we get by the definition of the $(1;\infty)$-norm in~\eqref{eq:def-one-infty} as well as~\eqref{eq:zhat_action} that
\begin{align}
\| \rho \|_{(1;\infty)} \leq \frac{1}{8} \cdot 2 \| \hat{z} \|_{(\infty;\infty)} \leq2^{-k}\quad\mathrm{and}\quad\| \rho \|_{(1;1)} \leq \frac{1}{8} \cdot 2 \| \hat{z} \|_{(\infty;1)} \leq 1\ .
\end{align}
In order to have a valid state, we define
\begin{align}
\bar{\rho}:=\rho+\big(1-\tr[\rho]\big)\frac{\1}{QN}\quad\mathrm{with}\quad\|\bar{\rho}\|_{(1;\infty)} \leq2^{-k} + \frac{1}{N} \leq2^{-(k-1)}\ .
\end{align}
Now, by assumption $\ext$ is a quantum-proof $(k-1,\eps)$-extractor and with~\eqref{eq:Uopt1}--\eqref{eq:Uopt2} we conclude that
\begin{align}
\| \Delta[\ext](z) \|_{(\infty;1)}\leq8\cdot\| \Delta[\ext](\rho) \|_{(1;1)}\leq8\cdot\| \Delta[\ext](\bar{\rho}) \|_{(1;1)}\leq 8 \eps\ .
\end{align}
\end{proof}


\subsection{Stability bounds}

This way of writing the extractor and the quantum-proof extractor conditions allows us to use tools from operator space theory to compare the two concepts. As a first straightforward application, we can use dimension dependent bounds that are known for the maximal possible ratios between the completely bounded norm and the bounded norm.

\begin{corollary}\label{cor:out-size}
Every $(k,\eps)$-extractor is a quantum-proof $(k+1,8\sqrt{2^{m}}\eps)$-extractor.
\end{corollary}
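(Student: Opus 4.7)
The plan is to obtain the corollary by chaining together three results that are already in place: the classical characterization in Proposition~\ref{prop:ext-to-norm}, the dimension-dependent ratio bound $\|\cdot\|_{\cb,\capcap\{2^{k}\ell_\infty,\ell_1\}\to\ell_1}\leq\sqrt{2^{m}}\|\cdot\|_{\cap\{2^{k}\ell_\infty,\ell_1\}\to\ell_1}$ stated just above the corollary, and the quantum characterization in Theorem~\ref{thm:ext-to-norm-quantum}. The net ``loss'' in parameters comes entirely from the $+1$ shifts in the two characterizations and the constants $3$ and $2$ in their respective error bounds.

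Concretely, I would first apply~\eqref{eq:extractor_2} of Proposition~\ref{prop:ext-to-norm} with the min-entropy parameter set to $k+1$: since $\ext$ is by assumption a $(k,\eps)$-extractor, i.e.\ a $((k+1)-1,\eps)$-extractor, the proposition yields
\begin{align}
\bigl\|\Delta[\ext]:\cap\{2^{k+1}\ell_{\infty},\ell_1\}\to\ell_1\bigr\|\leq 3\eps.
\end{align}
Next I would invoke the cited dimension bound (applied with the same input operator-space structure and output dimension $2^{m}$) to move from the bounded norm to the completely bounded norm with the operator-space structure $\capcap$ on the input side:
\begin{align}
\bigl\|\Delta[\ext]:\capcap\{2^{k+1}\ell_{\infty},\ell_1\}\to\ell_1\bigr\|_{\cb}\leq\sqrt{2^{m}}\cdot 3\eps = 3\sqrt{2^{m}}\,\eps.
\end{align}
Finally I would apply the forward direction~\eqref{eq:quantum_ext1} of Theorem~\ref{thm:ext-to-norm-quantum}, again with the min-entropy parameter $k+1$, to convert this cb-norm bound back into a statement about quantum-proof extractors, picking up an additional factor of $2$ and yielding that $\ext$ is a quantum-proof $(k+1,\,2\cdot 3\sqrt{2^{m}}\,\eps)$-extractor, i.e.\ a quantum-proof $(k+1,6\sqrt{2^{m}}\,\eps)$-extractor, as desired.

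The only real content beyond bookkeeping is the dimension-dependent ratio bound $\|\cdot\|_{\cb}\leq\sqrt{2^{m}}\|\cdot\|$ into the output space $\ell_1^{2^{m}}$. This is where the main obstacle lies: the elementary bound $\|\cdot\|_{\cb}\leq 2^{m}\|\cdot\|$ (noted in the paper's footnote) is too weak and would yield only $(k+1, O(2^{m}\eps))$. Obtaining the square-root improvement requires a genuine operator-space argument; I would expect to prove it by reducing via Proposition~\ref{prop:new-old-norm} to positive inputs (where $\capcap$ collapses to $\cap$) and then using a Cauchy--Schwarz-type factorization through an intermediate Hilbertian layer to relate the $(\infty;1)$-norm on the output to the $(1;1)$-norm with an explicit $\sqrt{M}$-loss. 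Once that ratio bound is granted, the remaining steps are purely a matter of plugging into Proposition~\ref{prop:ext-to-norm} and Theorem~\ref{thm:ext-to-norm-quantum}.
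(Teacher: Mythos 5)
Your chain of reductions is exactly the paper's: apply~\eqref{eq:extractor_2} at min-entropy level $k+1$ to get $\|\Delta[\ext]:\cap\{2^{k+1}\ell_\infty,\ell_1\}\to\ell_1\|\leq 3\eps$, promote this to a completely bounded bound at the cost of $\sqrt{2^m}$, and finish with~\eqref{eq:quantum_ext1}, which contributes the factor $2$ and gives $6\sqrt{2^m}\eps$. The one place you diverge is the only step with real content, the ratio bound $\|\Delta[\ext]\|_{\cb}\leq\sqrt{2^m}\|\Delta[\ext]\|$, which you correctly identify as the crux but for which you sketch a different argument than the paper's. The paper proves it as a black-box dimension bound by factoring through the minimal operator space structure on the output: $\|\Delta[\ext]:\capcap\{2^{k+1}\ell_\infty,\ell_1\}\to\ell_1\|_{\cb}\leq\|\Delta[\ext]:\capcap\{2^{k+1}\ell_\infty,\ell_1\}\to\min(\ell_1)\|_{\cb}\cdot\|\id:\min(\ell_1)\to\ell_1\|_{\cb}$, where the first factor equals the bounded norm by the defining property of $\min$, and the second equals $\|\id:\ell_\infty\to\max(\ell_\infty)\|_{\cb}\leq\sqrt{2^m}$ by duality and \cite[Theorem 3.8]{Pis03}; nothing about $\Delta[\ext]$ or the input operator space is used. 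Your proposed route --- reducing to positive inputs via Proposition~\ref{prop:new-old-norm} and then a Haagerup--Cauchy--Schwarz factorization --- is the mechanism of Theorem~\ref{thm:small-error-ent}, not of this corollary, and as sketched it is problematic here: passing to positive parts requires Proposition~\ref{prop:decomposition-positive} and costs an extra factor of $4$ (spoiling the constant $6$), and the Cauchy--Schwarz step in Theorem~\ref{thm:small-error-ent} exploits the specific form of $\Delta[\ext]_S$ and degrades the error to $O(\sqrt{\eps})$ rather than $O(\eps)$. So the skeleton of your proof matches the paper exactly, but the key lemma you defer is established there by a simpler and genuinely different argument than the one you anticipate, and your sketch as written would not recover the stated parameters.
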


Observe that this result is only interesting when $m$ is small. In particular, it is only useful for weak extractors, as strong extractors have $2^m \geq 2^d = \Omega(\eps^{-2})$ (by the converse bound~\eqref{eq:extractor_converse}).

\begin{proof}
By Proposition~\ref{prop:ext-to-norm} we have that $\| \Delta[\ext] : \capcap\{2^{k+1} \ell^\infty_N, \ell^1_N\} \to \ell^1_M \| \leq 4 \eps$. Now, we estimate using~\cite[Theorem 3.8]{Pis03},
\begin{align}
\left\|\Delta[\ext]:\capcap\{2^{k+1}\mathds{L}^\infty_N,\mathds{L}^1_N\}\to\mathds{L}^1_M\right\|_{\cb}&\leq\left\| \Delta[\ext]\right\|_{\capcap\{2^{k+1} \mathds{L}^\infty_N,\mathds{L}^1_N\} \to(\mathds{L}^1_M)^{\min}}\cdot\left\|\1\right\|_{(\mathds{L}^1_M)^{\min}\to\mathds{L}^1_M}\\
&=\left\| \Delta[\ext]\right\|_{\capcap\{2^{k+1} \ell^\infty_N,\ell^1_N\} \to\ell^1_M}\cdot\left\|\1\right\|_{\mathds{L}^{\infty}_M\to(\mathds{L}^\infty_M)^{\max}}\\
&\leq 4 \eps \sqrt{2^{m}}\ .
\end{align}
The claim then follows by Theorem~\ref{thm:ext-to-norm-quantum}.
\end{proof}

\smallerrorent*

\begin{proof}
By operator space duality for the trace class operator space (Proposition~\ref{prop:s1-os}) and Proposition~\ref{prop:decomposition-positive} we get
\begin{align}
\left\|\Delta[\ext]_{S} : \capcap\{2^{k+\log(1/\eps)} \mathds{L}^\infty_N,\mathds{L}^1_N\}\to\mathds{L}^1_M\right\|_{\cb}\leq4\sup\left\{\left\|\E_s \sum_y \sum_x \left(\delta_{f_s(x) = y} - 2^{-m}\right)p(x) \otimes q(s,y)\right\|_{(\infty;\infty)}\right\}\ ,
\end{align}
where the supremum is over all
\begin{align}
p(x)\in\matr_Q\quad&\mathrm{with}\quad0\leq p(x)\leq\eps2^{-k}\1\,,\,\sum_x p(x) \leq \1\\
q(s,y)\in\matr_Q\quad&\mathrm{with}\quad\norm{q(s,y)}_{\infty} \leq1
\end{align}
and $Q\in\mathbb{N}$. We now apply the operator version of the Cauchy-Schwarz inequality due to Haagerup~\cite[Chapter 7]{Pis03},
\begin{align}
&\left\|\E_s \sum_y \left(\sum_x \left(\delta_{f_s(x) = y} - 2^{-m'}\right)p(x)\right) \otimes q(s,y)\right\|_{(\infty;\infty)} \notag\\
&\leq\left\|\E_s \sum_y \left(\sum_x \left(\delta_{f_s(x) = y} - 2^{-m'}\right)p(x)\right) \otimes \overline{\left(\sum_{x'} \left(\delta_{f_s(x') = y} - 2^{-m'}\right)p(x')\right)}\right\|^{1/2}_{(\infty;\infty)}\left\|\E_s \sum_y q(s,y) \otimes \overline{q(s,y)}\right\|^{1/2}_{(\infty;\infty)}\ .
\end{align}
The second term is upper bounded by
\begin{align}
\left\|\E_s \sum_y q(s,y) \otimes \overline{q(s,y)}\right\|_{(\infty;\infty)} \leq \E_s \sum_y \norm{q(s,y)}_{\infty}^2 \leq\sqrt{2^{m'}}\ ,
\end{align}
and we are left with the first term. The operator whose $(\infty;\infty)$-norm has to be estimated is hermitian, and hence we arrive via norm duality at
\begin{align}
&\sup_{\stackrel{C\in\matr_Q}{\tr[C^*C]\leq1}}\left|\,\E_s \sum_y \sum_{x,x'} \left(\delta_{f_s(x) = y} - 2^{-m'}\right)\tr\left[C p(x) C^* p(x')\right]\left(\delta_{f_s(x') = y} - 2^{-m'}\right)\right| \notag\\
&\leq\sup_{\stackrel{C\in\matr_Q}{\tr[C^*C]\leq1}}\E_s \sum_y \sum_{x'}\left| \sum_{x}\left(\delta_{f_s(x) = y} - 2^{-m'}\right) l_{xx'}(C)  \right|\ ,
\end{align}
where \(l_{xx'}(C):=\tr[C p(x) C^* p(x')] \) is a positive function on \(N \times N\) with
\begin{align}
\sum_{x,x'} l_{xx'}(C) = \tr\left[C \sum_x p(x) C^* \sum_{x'} p(x')\right] \leq 1\quad\mathrm{and}\quad l_{xx'}(C)\leq\eps2^{-k}\,\tr[CC^* p(x')]\ .
\end{align}
Hence, the distribution $l_{xx'}(C)$ has conditional min-entropy $H_{\min}(X|X')_{l(C)}\geq k+\log(1/\eps)$, and by Markov's inequality
\begin{align}
\pr{H_{\min}(X|X'=x')_{l(C)}\leq k}\leq\eps\ .
\end{align}
Finally, we get by the assumption that the bounded norm of the extractor is upper bounded by $\eps$ that
\begin{align}
\E_s \sum_y \sum_{x'}\left| \sum_{x}\left(\delta_{f_s(x) = y} - 2^{-m'}\right)l_{xx'}(C)\right|&=\E_{x'}\E_s\sum_{y}\Big|\sum_{x}\left(\delta_{f_s(x) = y} - 2^{-m'}\right)l_{x|x'}(C)\Big|\\
&\leq2\eps\ .
\end{align}
By putting everything together we get the upper bound $4\sqrt{2^{m'}}\sqrt{2\eps}$ on the completely bounded norm of the extractor as claimed.
\end{proof}

An interesting application is for very high min-entropy extractors.

\highMinEntExt*

\begin{proof}
We have for any distribution $P$,
\begin{align}
\text{$\|P\|_{2^{k}\ell^\infty_N}\leq\|P\|_{\cap(2^{k}\ell^\infty_N,\ell^1_N)}\leq2^{n-k}\|P\|_{2^{k}\ell^\infty_N}$, and this implies that $\left\|\Delta[\ext]:2^{k+1}\ell^\infty_N\to\ell^1_N\right\|\leq2^{n-k}\eps$.}
\end{align}
By Grothendieck's inequality~\cite[Corollary 14.2]{Pis12}, we conclude
\begin{align}
\left\|\Delta[\ext]:\capcap\{2^{k}\mathds{L}^\infty_N,\mathds{L}^1_N\}\to\mathds{L}^1_M\right\|_{\cb}\leq\left\|\Delta[\ext]:2^{k}\mathds{L}^\infty_N\to\mathds{L}^1_M\right\|_{\cb}\leq K_G2^{n-k}\eps\ .
\end{align}
\end{proof}


\section{Condensers}\label{sec:condenser}

\subsection{Condenser property as a bounded norm}

\condToNorm*

\begin{proof}
We first prove~\eqref{eq:condenser_1}. Let $P$ be a distribution with min-entropy at least $k$. Then, we have $\| P \|_{\cap\{2^k\ell^\infty_N,\ell^1_N\}}\leq1$ and this implies $\|[\cond](P)\|_{\Sigma\{2^{k'}\ell^\infty_M,\ell^1_M\}}\leq\eps$. Hence, there is a decomposition $[\cond](P)=Q_1+Q_2$ such that $\|Q_1\|_{2^{k'}\ell^\infty_M}\leq\eps$ and $\|Q_2\|_{\ell^1_M}=\|[\cond](P)-Q_1\|_{\ell^1_M}\leq\eps$. This is almost another way of saying $H_{\min}^{\eps}([\cond](P))\geq k'+\log(1/\eps)$ except for the fact that $Q_1$ and $Q_2$ might have negative entries. We now show that we may construct nonnegative $Q_1''$ and $Q_2''$ with the same properties. In fact, consider $Q'_1 = \max(Q_1, 0)$ and $Q'_2 = [\cond](P) - Q'_1 = \min([\cond](P),Q_2)$. Then, we still have $Q'_1 + Q'_2 = [\cond](P)$, and in addition $\| Q'_1 \|_{2^{k'}\ell^\infty_M} \leq \|Q_1\|_{2^{k'}\ell^\infty_M}$ as well as $\|Q'_2\|_{\ell^1_M} \leq \|Q_2\|_{\ell^1_M}$. But we do not necessarily have $Q'_2$ nonnegative. For this, we can define $Q_1'' = \min(Q_1',P)$ and $Q_2'' = P - Q_1'' = \max(0, Q_2')$. Then we get $Q_1''$ and $Q_2''$ are nonnegative and $\|Q_1''\|_{\ell^\infty_M} \leq \eps 2^{-k'}$ and $\| P - Q_1''\|_{\ell^1_M} \leq \eps$.

To prove~\eqref{eq:condenser_2}, assume that the set of functions $\cond$ defines a $(k-1)\to_{\eps}k'+\log(1/\eps)$ condenser and consider $P$ such that $\|P\|_{\ell^1_N}\leq1$ and $\|P\|_{\ell^\infty_N} \leq 2^{-k}$. Let $P^+(x) = \max\{P(x), 0\}$ and $P^-(x) = \max\{-P(x), 0\}$, and note that $\| P^+ \|_{\ell^\infty_N}, \| P^{-} \|_{\ell^\infty_N} \leq 2^{-k}$. We extend $P^+$ into a probability distribution $\bar{P}^+ = P^+ + (1-\|P^+\|_{\ell^1_N})\upsilon_N$ and similarly $\bar{P}^-(x) = P^-(x) +(1-\|P^-\|_{\ell^1_N}) \upsilon_N$. Now observe that $\| \bar{P}^+ \|_{\ell^\infty_N} \leq \| P \|_{\ell^\infty_N} + \frac{1}{N} \leq 2^{-(k-1)}$ and similarly for $\| \bar{P}^+ \|_{\ell^\infty_N}$. Thus, we have that
\begin{align}
\|[\cond](P) \|_{ \Sigma\{2^{k'}\ell^\infty_M, \ell^1_M\}} 
&= \| [\cond](P^+) - [\cond](P^-) \|_{ \Sigma\{2^{k'}\ell^\infty_M, \ell^1_M\}} \\
&\leq \| [\cond](P^+) \|_{ \Sigma\{2^{k'}\ell^\infty_M, \ell^1_M\}}+\| [\cond](P^-) \|_{ \Sigma\{2^{k'}\ell^\infty_M, \ell^1_M\}} \\
&\leq \| [\cond](\bar{P}^+) \|_{ \Sigma\{2^{k'}\ell^\infty_M, \ell^1_M\}} + (1-\|P^+\|_{\ell_1}) \| [\cond](\upsilon_N) \|_{ \Sigma\{2^{k'}\ell^\infty_M, \ell^1_M\}}\notag\\
&\quad+\| [\cond](\bar{P}^-) \|_{ \Sigma\{2^{k'}\ell^\infty_M, \ell^1_M\}} + (1-\|P^-\|_{\ell_1}) \| [\cond](\upsilon_N) \|_{ \Sigma\{2^{k'}\ell^\infty_M, \ell^1_M\}} \\
&\leq \| [\cond](\bar{P}^+) \|_{ \Sigma\{2^{k'}\ell^\infty_M, \ell^1_M\}} + \| [\cond](\bar{P}^-) \|_{ \Sigma\{2^{k'}\ell^\infty_M, \ell^1_M\}} + 2\| [\cond](\upsilon_N) \|_{ \Sigma\{2^{k'}\ell^\infty, \ell^1_M\}}\ .
\end{align}
Now the distributions $\bar{P}^+, \bar{P}^-$ and $\upsilon_N$ have min-entropy at least $k-1$. Hence there exists $\bar{Q}^+$ with $\|[\cond](\bar{P}^+)-\bar{Q}^+\|_{\ell^1_M}\leq\eps$ and $\| \bar{Q}^+ \|_{\ell^\infty_M}\leq\eps 2^{-k'}$. As a result, $\| [\cond](\bar{P}^+) \|_{ \Sigma\{2^{k'}\ell^\infty_M, \ell^1_M\}} \leq 2 \eps$. This implies $\|[\cond](P)\|_{\Sigma\{2^{k'}\ell^\infty_M, \ell^1_M\}}\leq 8 \eps$, which proves the desired result.
\end{proof}


\subsection{Quantum-proof as a completely bounded norm}

As we saw in Proposition~\ref{prop:ext-to-norm}, a condenser maps the \(\cap\)-normed space to its dual space. Since we expect the same to happen in the quantum-proof case, it is useful to have an understanding about the operator space dual of \(\capcap\). By expressing the \(\capcap\)-operator space using the Haagerup tensor product, the operator dual is easily identified. However, we do not want to elaborate further on this, since we will just use a simple estimate (see Lemma~\ref{lem:sigmanormforpos} below), and refer to Appendix~\ref{app:intersection} for the exact characterization. We again use a shorthand notation and denote the operator space dual of \(\capcap\) by $\sigsig$.

Like for the case of extractors, we are naturally interested in the case of matrices which are diagonal with respect to the first system, that is, elements of $\matr_Q(\mathbb{C}^N)\subset\matr_Q(\matr_N)$. The norms $\capcap$ and the $\sigsig$ are then defined on $\matr_Q(\mathbb{C}^N)$ via the embedding of $\matr_Q(\mathbb{C}^N)$ into $\matr_{QN}$ as block-diagonal matrices.

For positive $x\in\matr_Q\left(\mathbb{C}^N\right)$, the operator space dual norm $\sigsig$ has the following simple estimate.

\begin{lemma}\label{lem:sigmanormforpos}
Let $x = \sum_{j} x(j) \otimes \proj{i} \in \matr_Q\left(\mathbb{C}^N\right)$ be positive. Then, we have
\begin{align}
\half \, \norm{x}_{(\sigsig)_Q}\leq\inf\left\{\left\|2^{-k}\sum_j A(j) + B\right\|_\infty \;:\; x(j) \leq A(j) + B, \, A(j) \geq 0, \, B \geq 0 \right\} \leq \norm{x}_{(\sigsig)_Q}\ .
\end{align}
\end{lemma}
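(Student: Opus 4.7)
The proof splits into two inequalities, and the plan is to establish each using operator space duality between $\capcap$ and $\sigsig$. As noted after Definition~\ref{def:cap-sigma-os}, the $\sigsig$-norm on $\matr_{QN}$ is equivalent up to a factor of $2$ to the infimum over decompositions $x = z_1 + z_2$ of $2^{-k}\|z_1\|_{(\infty;1)} + \|z_2\|_{(\infty;\infty)}$. Both bounds in the lemma reduce to manipulating such decompositions under the positivity constraint.

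For the upper bound $\inf\leq\norm{x}_{M_Q(\sigsig)}$, I would take a near-optimal decomposition $x = z_1 + z_2$ witnessing the $\sigsig$-norm. Since $x$ is positive and block-diagonal in $N$, and the diagonal-projection $\mathcal{D}_N$ is a completely contractive projection in both the $(\infty;\infty)$ and $(\infty;1)$ norms, I may replace $z_1, z_2$ by positive block-diagonal counterparts $z_i = \sum_j z_i(j) \otimes \proj{j}$ with $z_1(j) + z_2(j) = x(j)$. For such positive diagonal elements, one has the explicit identities $\|z_1\|_{(\infty;1)} = \|\sum_j z_1(j)\|_\infty$ and $\|z_2\|_{(\infty;\infty)} = \max_j \|z_2(j)\|_\infty$. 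Setting $A(j) := z_1(j)$ and $B := \|z_2\|_{(\infty;\infty)}\,\1_Q$ gives $A(j), B \geq 0$ with $x(j) \leq A(j) + B$, and because $B$ is a scalar multiple of the identity,
\begin{align*}
\Bigl\|2^{-k}\sum_j A(j) + B\Bigr\|_\infty = 2^{-k}\|z_1\|_{(\infty;1)} + \|z_2\|_{(\infty;\infty)} = \|x\|_{M_Q(\sigsig)}\ ,
\end{align*}
which yields the required bound.

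For the lower bound $\half\norm{x}_{M_Q(\sigsig)} \leq \inf$, I would work directly with $\|x\|_{M_Q(\sigsig)} = \|u_x\|_{\cb, \capcap \to M_Q}$, where $u_x$ is the map whose Choi matrix is $x$. Because $x$ is block-diagonal in $N$, $u_x$ vanishes on off-diagonal inputs, so $(\id_{Q'} \otimes u_x)(y) = \sum_j y^{jj} \otimes x(j)$, where $y^{jj}$ are the diagonal blocks of $y \in \matr_{Q'}(\capcap)$ in the $N$-register. Given feasible $A(j), B \geq 0$, I would reduce $y$ to positive elements via Proposition~\ref{prop:decomposition-positive} and estimate $\|\sum_j y^{jj} \otimes x(j)\|_\infty \leq \|\sum_j y^{jj} \otimes A(j)\|_\infty + \|\sum_j y^{jj} \otimes B\|_\infty$. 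Two elementary bounds for positive operators close the argument: $\|\sum_j C_j \otimes D_j\|_\infty \leq (\max_j \|C_j\|_\infty)\,\|\sum_j D_j\|_\infty$, combined with $\max_j\|y^{jj}\|_\infty \leq \|y\|_{(\infty;\infty)} \leq 2^{-k}$, bounds the first term by $2^{-k}\|\sum_j A(j)\|_\infty$, while $\|\sum_j y^{jj} \otimes B\|_\infty = \|\tr_N[y]\|_\infty\|B\|_\infty \leq \|B\|_\infty$ follows from $\|y\|_{(\infty;1)} = \|\tr_N[y]\|_\infty \leq 1$ for positive $y$. The factor of $2$ in the lemma then arises from the elementary inequality $\|P_1\|_\infty + \|P_2\|_\infty \leq 2\|P_1 + P_2\|_\infty$ for positive $P_1, P_2$.

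The main obstacle is preserving the tight constants throughout. In particular, a naive use of Proposition~\ref{prop:decomposition-positive} splits $y$ into four positive pieces and inflates the bound to factor $4$; to recover the factor of $2$ stated in the lemma, I would either use a finer Hermitization step producing only two positive parts with $\capcap$-norm controlled by $\|y\|_{\capcap}$, or invoke the Haagerup-tensor-product realization of $\capcap$ from Appendix~\ref{app:intersection} directly to bound $\|u_x\|_{\cb}$ without splitting $y$ into positive pieces at all.
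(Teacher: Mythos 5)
Your lower bound is close in spirit to the paper's, but your upper bound rests on an unproven premise, and the constant in the lower bound does not come out right as written. On the lower bound: you correctly reduce to positive $y$ and use $x(j)\leq A(j)+B$, but your chain (triangle inequality, then $\|P_1\|_\infty+\|P_2\|_\infty\leq 2\|P_1+P_2\|_\infty$) loses a factor of $2$ that, stacked on the Hermitization loss, gives $4$ rather than $2$ --- as you yourself note without resolving it. The fix is to add the operator inequalities \emph{before} taking norms: since $y^{jj}\leq 2^{-k}\1$, $\sum_j y^{jj}\leq\1$ and $A(j),B\geq 0$, one has $\sum_j y^{jj}\otimes(A(j)+B)\leq\1\otimes\bigl(2^{-k}\sum_j A(j)+B\bigr)$, so the positive-$y$ supremum is bounded by the infimum with \emph{no} loss. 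The only factor of $2$ then comes from reducing general $y$ to positive $y$: writing $y=y_1-y_2+i(y_3-y_4)$ via Proposition~\ref{prop:decomposition-positive} and using that a difference of positive operators satisfies $\|P-Q\|_\infty\leq\max\{\|P\|_\infty,\|Q\|_\infty\}$ (so each Hermitian part costs a max, not a sum), which is exactly how the paper places the factor of $2$.

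The genuine gap is the upper bound $\inf\leq\|x\|_{M_Q(\sigsig)}$. You invoke a characterization of $\|\cdot\|_{M_Q(\sigsig)}$ as (up to a factor $2$) the infimum of $2^{-k}\|z_1\|_{(\infty;1)}+\|z_2\|_{(\infty;\infty)}$ over decompositions $x=z_1+z_2$. Nothing in the paper supports this: the remark after Definition~\ref{def:cap-sigma-os} concerns $\Sigma\{E^*,F^*\}$ as the operator space dual of the genuine intersection $\cap\{E,F\}$, not $\sigsig$ as the dual of $\capcap$ (which is a factorization norm, identified in Appendix~\ref{app:intersection} as a Haagerup tensor product of $\Sigma$-spaces, not a sum of two norms). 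Moreover $M_Q(E^*)$ is explicitly \emph{not} the Banach dual of $M_Q(E)$ for $Q\geq 2$, and even at the Banach-space level the dual of $\max\{2^k\|\cdot\|_{(\infty;\infty)},\|\cdot\|_{(\infty;1)}\}$ would involve the $(1;1)$ and $(1;\infty)$ norms, not the ones you wrote. Your subsequent manipulations (projecting onto positive block-diagonal $z_i$, setting $B=\|z_2\|_{(\infty;\infty)}\1$) are fine, but they only show $\inf\leq$ (your decomposition quantity), which is not known to be $\leq\|x\|_{M_Q(\sigsig)}$. The paper's route is different and is the part you are missing: after restricting the duality pairing \eqref{eq:mq-sigsig} to positive $y$ (which can only decrease the supremum, so the restricted value is $\leq\|x\|_{M_Q(\sigsig)}$), one rewrites that restricted supremum as the SDP \eqref{eq:sigsig-sdp} and identifies its \emph{strong} dual as exactly the stated infimum \eqref{eq:sigsig-sdp-dual}; strong duality then yields $\inf=\text{(positive-restricted sup)}\leq\|x\|_{M_Q(\sigsig)}$. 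Without the SDP duality (or an equivalent exact argument), your upper bound does not go through.
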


\begin{proof}
By the definition of the operator space dual, we have
\begin{align}\label{eq:mq-sigsig}
\| x \|_{(\sigsig)_Q} = \sup\left\{\left\|\sum_j x(j) \otimes y(j)\right\|_{(\infty;\infty)}\;:\; \left\|\sum_{j} y(j) \otimes \proj{j}\right\|_{(\capcap)_Q} \leq 1 \right\}\ .
\end{align}
Let us now decompose $y$ according to Proposition~\ref{prop:decomposition-positive}, $y = y_1 - y_2 + iy_3 - i y_4$. Then, we observe that in the maximization in \eqref{eq:mq-sigsig}, up to an additional factor of two we can assume that $y$ is also Hermitian. This implies that we can take $y_3 = y_4 = 0$. But then using the fact that $x \geq 0$, we have
\begin{align}
\left\| \sum_{j} x(j) \otimes y(j) \right\|_{(\infty;\infty)} \leq \left\| \sum_j x(j) \otimes y_1(j)\right\|_{(\infty;\infty)}\ ,
\end{align}
and thus we can assume that $y \geq 0$ in the further study of \eqref{eq:mq-sigsig}. Recalling that when $y \geq 0$,
\begin{align}
\| y \|_{(\capcap)_Q} = \| y \|_{\cap_Q} = \max\left\{2^k\| y \|_{\infty} , \left\| \sum_{j} y(j) \right\|_{\infty}\right\}\ ,
\end{align}
we have
\begin{align}
&\sup\left\{ \bra{\psi} \sum_j x(j) \otimes y(j)\ket{\psi} \;:\; 0 \leq y(j) \leq 2^{-k}\1,\, \sum_j y(j) \leq \1, \, \| \ket{\psi} \|_2 \leq 1 \right\}\notag\\
&= \sup\left\{  \sum_{j,l,l'} \lambda^{*}_{l} \lambda_{l} \bra{u_l} x(j) \ket{u_{l'}} \bra{v_l} y(j) \ket{v_{l'}} \;:\; 0 \leq y(j) \leq2^{-k}\1,\, \sum_j y(j) \leq \1 \, \| \ket{\psi} \|_2 \leq 1 \right\}\ ,
\end{align}
where $\ket{\psi} = \sum_{l} \lambda_l \ket{u_l} \ket{v_l}$ is a Schmidt decomposition. Now using the fact that $\bra{v_l} y(j) \ket{v_{l'}} = \bra{v_{l'}} y(j)^T \ket{v_{l}}$ and writing $C = \sum_{l} \lambda_l \ket{v_l} \bra{u_l}$, we can write
\begin{align}
\| x \|_{(\sigsig)_Q}=\sup\left\{\tr\left[\sum_j x(j) C^* y(j)^T C \right]\;:\; 0 \leq y(j) \leq2^{-k}\1,\,\sum_j y(j) \leq \1,\, \tr[C^* C] \leq 1 \right\}\ .
\end{align}
Moreover, the transposition leaves the operator inequalities involving the \(y(j)\)'s invariant, we can further simplify to (set $\sigma = C^* C$)
\begin{align}\label{eq:sigsig-sdp}
\sup\left\{\tr\left[\sum_j x(j) \hat{y}(j) \right]\;:\; 0 \leq \hat{y}(j) \leq2^{-k}\sigma,\, \sum_j \hat{y}(j) \leq \sigma,\, \tr[\sigma] \leq 1,\,\sigma \geq 0 \right\}\ .
\end{align}
This is an SDP and its dual is given by
\begin{align}\label{eq:sigsig-sdp-dual}
\inf\left\{\left\|2^{-k}\sum_j A(j) + B\right\|_\infty \;:\; x(j) \leq A(j) + B, \, A(j) \geq 0, \, B \geq 0 \right\}\ .
\end{align}
The program~\eqref{eq:sigsig-sdp} is clearly feasible and~\eqref{eq:sigsig-sdp-dual} is strictly feasible and thus strong duality is satisfied, i.e., the programs~\eqref{eq:sigsig-sdp} and~\eqref{eq:sigsig-sdp-dual} have the same value.
\end{proof}

Such kind of intersection norms for operator spaces have been extensively employed by Junge {\it et al.} in their study of non-commutative $L_p$-spaces and their relation to free probability, see for instance the monograph~\cite{Junge:2010cg}. We expect that many of Junge and his co-workers' techniques are applicable to questions regarding the stability of pseudorandom objects and hope that our work serves as a starting point for such kind of investigations.

We first need a lemma relating the dual norm of $\capcap$ for positive elements to the smooth conditional min-entropy.

\begin{lemma}\label{lem:smoothmin-sigmanorm}
For $\rho \in \matr_{QN}$ positive we have
\begin{align}
2^{k}\|\rho\|_{(\sigsig)_Q}\leq\eps\quad&\Rightarrow\quad H_{\min}^{2\sqrt{\eps}}(N|Q)_\rho \geq k + \log(1/\eps)\label{eq:smooth_min1}\\
H_{\min}^{\eps}(N|Q) \geq k + \log(1/\eps)\quad&\Rightarrow\quad2^{k}\|\rho\|_{(\sigsig)_Q}\leq2\eps\ .\label{eq:smooth_min2}
\end{align}
\end{lemma}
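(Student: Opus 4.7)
The plan is to treat the two implications separately, with Lemma~\ref{lem:sigmanormforpos} providing the technical bridge between the operator-space norm $\|\rho\|_{M_Q(\sigsig)}$ and operator-level decompositions of the form $\rho(j)\leq A(j)+B$.

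For the backward direction \eqref{eq:smooth_min2}, I would start from the smooth min-entropy hypothesis, which supplies a sub-normalized (and without loss of generality cq) $\tilde\rho$ and a state $\tau_Q$ with $\|\rho-\tilde\rho\|_1\leq\eps$ and $\tilde\rho(j)\leq 2^{-k}\eps\,\tau_Q$ for every $j$. I then exhibit an explicit feasible pair for the dual program of Lemma~\ref{lem:sigmanormforpos}: set $A(j):=(\rho(j)-\tilde\rho(j))_+$ and $B:=2^{-k}\eps\,\tau_Q$. The dominance $\rho(j)\leq A(j)+B$ is immediate from $\rho(j)\leq\tilde\rho(j)+(\rho(j)-\tilde\rho(j))_+$. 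Combining $\|X\|_\infty\leq\tr X$ for positive $X$ with $\sum_j\tr[(\rho(j)-\tilde\rho(j))_+]\leq\|\rho-\tilde\rho\|_1\leq\eps$ yields $\|2^{-k}\sum_j A(j)+B\|_\infty\leq 2\cdot 2^{-k}\eps$, and Lemma~\ref{lem:sigmanormforpos} closes the argument (up to an absolute constant that may need absorbing into the stated factor of $2$).

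For the forward direction \eqref{eq:smooth_min1}, Lemma~\ref{lem:sigmanormforpos} applied to the hypothesis gives a decomposition $\rho(j)\leq A(j)+B$ with $\omega:=2^{-k}\sum_j A(j)+B$ satisfying $\|\omega\|_\infty\leq 2^{-k}\eps$, so that $\rho(j)\leq 2^k\omega$ for every $j$. I would then construct the smoothed $\tilde\rho$ via the gentle-measurement-style map $\tilde\rho(j):=M_j^*\rho(j)M_j$ with $M_j:=(A(j)+B)^{-1/2}B^{1/2}$, which by direct calculation satisfies $\tilde\rho(j)\leq B\leq\omega$; normalizing $\omega$ (after a spectral truncation, see below) provides the witness state $\sigma_Q$ delivering the desired conditional min-entropy of $k+\log(1/\eps)$. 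The trace distance would be controlled by the cq version of Winter's gentle-measurement lemma applied to the block-diagonal POVM element $E:=\sum_j M_jM_j^*\otimes\proj{j}$, giving $\|\rho-E^{1/2}\rho E^{1/2}\|_1\leq 2\sqrt{\tr[\rho(\1-E)]}$, where $\tr[\rho(\1-E)]=\sum_j\tr[\rho(j)(A(j)+B)^{-1/2}A(j)(A(j)+B)^{-1/2}]\leq\sum_j\tr A(j)\leq 2^k\tr\omega$; the square root here is the source of the $2\sqrt\eps$ smoothing parameter.

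The main obstacle is making the trace bound dimension-independent: the naive estimate $\tr\omega\leq Q\|\omega\|_\infty$ would introduce a spurious $\sqrt Q$ factor, and choosing the witness $\sigma_Q=\omega/\tr\omega$ similarly degrades the min-entropy. The intended fix is to first project onto the top-eigenvalue subspace of $\omega$ via a projector $\Pi$ whose rank is governed by the entropy budget, absorb the resulting error into a first gentle-measurement step producing an $O(\sqrt\eps)$ contribution to the trace distance, and only then perform the per-$j$ step $M_j^*\cdot M_j$ on the restricted subspace, where the truncated $\omega|_\Pi$ has trace at most $O(\eps)$ and supplies the normalized witness $\sigma_Q$. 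Reconciling the two smoothings into a single $2\sqrt\eps$ bound---while verifying that the min-entropy condition $\tilde\rho(j)\leq 2^{-k}\eps\,\sigma_Q$ survives the restriction---is the technical heart of the proof.
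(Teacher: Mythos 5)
Your backward direction \eqref{eq:smooth_min2} is essentially right and matches the paper's idea (the paper tests $\rho$ directly against an arbitrary feasible $x$ of the primal characterization $2^k\|\rho\|_{M_Q(\sigsig)}=\sup\{\tr[\rho x]: x\geq 0,\ x\leq\1,\ \tr_N x\leq 2^k\1\}$ and splits $\tr[x\rho]=\tr[x\hat\rho]+\tr[x(\rho-\hat\rho)]$, whereas you exhibit a dual-feasible pair; these are two sides of the same coin). The only caveat is quantitative: routing through Lemma~\ref{lem:sigmanormforpos} costs its built-in factor of $2$, so your argument yields $2^k\|\rho\|_{M_Q(\sigsig)}\leq 4\eps$ rather than the stated $2\eps$; to recover $2\eps$ you should bound the primal supremum directly, which for positive $\rho$ equals the norm exactly (via Propositions~\ref{prop:new-old-norm} and~\ref{prop:decomposition-positive}).

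The forward direction \eqref{eq:smooth_min1} has a genuine gap, and it is exactly the one you flag as "the technical heart". The decomposition supplied by Lemma~\ref{lem:sigmanormforpos} controls only the \emph{operator norm} $\|2^{-k}\sum_j A(j)+B\|_\infty$, and from that no amount of post-processing recovers a normalized witness state: if $B$ has a flat spectrum (say $B\propto\1_Q$ and $A(j)=0$), then $\tr[B]=Q\|B\|_\infty$, there is no distinguished "top-eigenvalue subspace" to truncate to, and any projector $\Pi$ of small enough rank to make $\tr[B\Pi]=O(2^{-k}\eps)$ discards all but an $O(1/Q)$ fraction of $\rho_Q$ when, e.g., $\rho_Q=\upsilon_Q$ — so the first gentle-measurement step already incurs trace-distance error close to $1$, not $O(\sqrt\eps)$. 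The missing idea is that one must dualize a \emph{different} SDP, namely the one with $\rho$ as the fixed datum and no auxiliary $\sigma$ variable: for positive $\rho$, strong duality gives $\sup\{\tr[\rho x]: x\geq 0,\ x\leq\1,\ \tr_N x\leq 2^k\1\}=\inf\{\tr[A_1]+2^k\tr[A_2]:\rho\leq A_1+\1_N\otimes A_2,\ A_1,A_2\geq 0\}$, whose objective is a \emph{trace}. The hypothesis then yields $\tr[A_1]\leq\eps$ and $\tr[A_2]\leq 2^{-k}\eps$, so $\sigma=A_2/\tr[A_2]$ is a normalized witness for free, and your gentle-measurement map (which coincides with the paper's $\hat\rho=B^*B$, $B=\rho^{1/2}(A_1+\1\otimes A_2)^{-1/2}(\1\otimes A_2^{1/2})$) together with the Powers--Stoermer inequality gives $\hat\rho\leq 2^{-k}\eps\,\1_N\otimes\sigma$ and $\|\rho-\hat\rho\|_1\leq 2\sqrt{\tr[A_1]}\leq 2\sqrt{\eps}$ with no dimension dependence. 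In short: your smoothing construction is the right one, but it must be fed the trace-bounded decomposition coming from this second duality, not the operator-norm-bounded one coming from Lemma~\ref{lem:sigmanormforpos}.
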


\begin{proof}
We first prove~\eqref{eq:smooth_min1}. Due to the fact that $\rho$ is positive, the optimal decomposition of $x$ with respect to Proposition \ref{prop:decomposition-positive} is one where only the positive term is non-zero. Hence, we have due to Proposition~\ref{prop:new-old-norm},
\begin{align}
\|\rho\|_{(\sigsig)_Q}=\sup\big\{ |\tr[\rho x]| \,:\, \norm{x}_{(\capcap)_Q} \leq 1 \big\} &= \sup\big\{ |\tr[\rho x]| \,:\, x \geq 0,\,\norm{x}_{\cap_Q} \leq 1\big\}\\
&= 2^{-k}\,\sup\left\{ \tr[\rho x] \,:\,x \geq 0,\, \tr_N(x) \leq 2^k\1, x \leq \1\right\}\ .
\end{align}
This is a semidefinite program, and by strong duality we get 
\begin{align}
\sup\left\{ \tr[\rho x] \,:\,x \geq 0,\,\tr_N(x) \leq 2^k\1, x \leq \1\right\} =\inf\left\{\tr[A_1] + 2^{k}\tr[A_2] \,:\, \rho \leq A_1 + \1_N \otimes A_2, \; A_{1,2} \geq 0\right\}\ .
\end{align}
Hence, we find two positive matrices $A_1$ and $A_2$ such that $\tr[A_1] \leq \eps$, and $\tr[A_2] \leq \eps \, 2^{-k}$. We now consider
\begin{align}
\hat{\rho}:=B^* B\quad\mathrm{with}\quad B:=\rho^{1/2} (A_1 + \1 \otimes A_2)^{-1/2} \1 \otimes A_2^{1/2}\ .
\end{align}
We have that its trace is smaller than one,
\begin{align}
\tr[\hat{\rho}] &= \tr\left[\1 \otimes A_2^{1/2} (A_1 + \1 \otimes A_2)^{-1/2} \rho (A_1 + \1 \otimes A_2)^{-1/2} \1 \otimes A_2^{1/2}\right]\\
&= \tr\left[\rho (A_1 + \1 \otimes A_2)^{-1/2} \1 \otimes A_2 (A_1 + \1 \otimes A_2)^{-1/2}\right]\\
&\leq \tr[\rho]\\
&= 1\ ,
\end{align} 
since $(A_1 + \1 \otimes A_2)^{-1/2} \1 \otimes A_2 (A_1 + \1 \otimes A_2)^{-1/2} \leq (A_1 + \1 \otimes A_2)^{-1/2} A_1 + \1 \otimes A_2 (A_1 + \1 \otimes A_2)^{-1/2} = \1$. This implies that $\hat{\rho}$ is a sub-normalized state, and its min-entropy is as least $k + \log(1/\eps)$ since we have $(A_1 + \1 \otimes A_2)^{-1/2} \rho (A_1 + \1 \otimes A_2)^{-1/2} \leq \1$ by construction and $\tr[A_2] \leq \eps \, 2^{-k}$. Simple rescaling of $A_2$ makes it into a density matrix, and the corresponding factor is picked up by the inner term. Moreover, let us consider its trace norm distance to $\rho$. First, we have using the bounded trace of $\rho$ and $B^* B$,
\begin{align}
\left\|\rho - B^*B\right\|_1 =\left\|(\rho^{1/2} - B^*)\rho^{1/2} + B^*(\rho^{1/2} - B)\right\|_1 \leq2\left\|\rho^{1/2} - B\right\|_2\ .
\end{align}
We then estimate further, using the Powers-Stoermer inequality in the last step,
\begin{align}
\norm{\rho^{1/2} - B}_2 &=\left\|\rho^{1/2} (A_1 + \1 \otimes A_2)^{-1/2} (A_1 + \1 \otimes A_2)^{1/2} - \rho^{1/2} (A_1 + \1 \otimes A_2)^{-1/2} \1 \otimes A_2^{1/2}\right\|_2\\
&\leq\left\|\rho^{1/2} (A_1 + \1 \otimes A_2)^{-1/2}\right\|_\infty\cdot\left\|(A_1 + \1 \otimes A_2)^{1/2} - \1 \otimes A_2^{1/2}\right\|_2\\
&\leq\sqrt{\norm{A_1}_1}\\
&= \sqrt{\eps}\ ,
\end{align}
and the first statement follows.
	
To prove~\eqref{eq:smooth_min2}, let $\hat{\rho}$ be the state which achieves the smooth min-entropy, i.e., we have $\hat{\rho} \leq \eps\,2^{-k}\1 \otimes \sigma$ and $\norm{\hat{\rho} - \rho}_1 \leq \eps$. Let $x$ be now chosen such that $\tr_N(x) \leq 2^k\1$, $x \leq \1$. We then have
\begin{align}
2^{-k}\tr[x \rho] = \tr[x \hat{\rho}] + \tr[x (\hat{\rho} - \rho)] \leq \eps \, 2^{-k} \tr\big[\sigma \tr_N[x]\big] + \eps\,\norm{x}_\infty\leq2\cdot2^{-k}\eps\ ,
\end{align} 
and the assertion is proven.
\end{proof}

Expressing the smooth conditional min-entropy in terms of the $\sigsig$-norm (Lemma~\ref{lem:smoothmin-sigmanorm}) allows us to characterize quantum-proof condensers using operator space norms.

\condToNormQuantum*

\begin{proof}
We first prove~\eqref{eq:quantumcondenser1}. Let $\rho_{NQ}\in\matr_Q\left(\mathbb{C}^N\right)$ be a state with conditional min-entropy at least $k$, and let $\omega \in \matr_Q$ such that $\rho_{NQ} \leq 2^{-k} \omega \otimes \1_N$. We proceed as in the proof of Theorem~\ref{thm:ext-to-norm-quantum} and define $\sigma = \frac{\rho_Q + \omega}{2}$. It follows that $\left\|\sigma^{-1/2} \rho \sigma^{-1/2}\right\|_{(\capcap)_Q}\leq 2$ and hence we have due to Lemma~\ref{lem:sigmanormforpos} that there are positive matrices $A \in \matr_Q\left(\mathbb{C}^N\right)$, $B\in\matr_Q$ such that
\begin{align}
\cond(\rho)=\sigma^{1/2} \cond\left(\sigma^{-1/2} \rho \sigma^{-1/2}\right) \sigma^{1/2}\leq \sigma^{1/2}\left(A + \1 \otimes B\right) \sigma^{1/2}\quad\mathrm{and}\quad\left(\sum_{y \in M} A(y) + 2^{k'} \,B\right) \leq \eps \, \1\ .
\end{align}
Now take $x\in\matr_Q\left(\mathbb{C}^M\right)$, fulfilling $x\geq0$, $x\leq\1$ and $\sum_{y\in M}x(y)\leq2^{k'}\1$. Then, we have
\begin{align}
\tr[\rho x]\leq \tr\left[\sigma^{1/2} (A + \1_M \otimes B) \sigma^{1/2} x\right]&\leq \tr_Q\left[\sigma^{1/2} \sum_{y \in M} A(y) \sigma^{1/2}\right] + \tr_Q\left[\sigma^{1/2} B \sigma^{1/2} \sum_{y \in M} x(y)\right]\\
&\leq\tr\left[\sigma\left(\sum_{y \in M} A(y) + 2^{k'} \,B\right)\right]\\
&\leq\eps\ .
\end{align}
and~\eqref{eq:quantumcondenser1} follows by Lemma~\ref{lem:smoothmin-sigmanorm}. 

For proving~\eqref{eq:quantumcondenser2}, take $x\in\matr_Q\left(\mathbb{C}^N\right)$ with $\norm{x}_{(\capcap)_Q} \leq 1$. Due to Proposition~\ref{prop:decomposition-positive} we can find a decomposition of $x=x_1-x_2+i(x_3-x_4)$ into positive terms fulfilling the same norm constraint. We will proceed to bound 
\begin{align}
\text{$\left\|\sum_{y\in M}z(y)\otimes\cond(x_i)(y)\right\|_{(\infty;\infty)}\leq\eps$, where $\norm{z}_{\capcap\{2^{k'}\ell^\infty_M,\ell^1_M\}} \leq 2^{k'}$.}
\end{align} 
This implies the assertion by operator space duality. Due to the fact that $\cond(x_i)(y)$ is positive, the optimal decomposition of $z$ with respect to Proposition~\ref{prop:decomposition-positive} is one where only the positive term is non-zero. Hence, it is enough to bound the expression above for positive $z$,
\begin{align}
\left\|\sum_{y \in M} z(y) \otimes \cond(x_i)(y)\right\|_\infty= \sup_{\ket{\psi} \in \C^{Q^2}} \bra{\psi} \sum_{y \in M} z(y) \otimes \cond(x_i)(y) \ket{\psi}= \sup_{C}\tr\left[\sum_{y \in M} C z(y)^T C^*\, \cond(x_i)(y)\right]\ ,
\end{align}
where the last supremum is over all operators $C$ on $Q$ with singular values equal to the Schmidt coefficients of $\psi$. Due to the properties of $x_i$,
\begin{align}
\sum_{k \in N} \tr_Q\left[C^* x_i(k) C\right] \leq 1\quad\mathrm{and}\quad C^* x_i(y) C \leq 2^{-k} C^* C\ ,
\end{align}
which implies that $\rho_i(k) = C^* x_i(k) C$ defines a sub-normalized classical-quantum state on $NQ$ having min-entropy at least $k$. In order to have a valid state, we define
\begin{align}
\bar{\rho_i} = \rho_i + (1-\tr[\rho_i]) \frac{\1}{Q} \otimes \frac{\1}{N}\ .
\end{align}
As before, we can say that $\| \bar{\rho} \|_{(1;\infty)}\leq2^{-k}+\frac{1}{N}\leq2^{-k+1}$ and hence the min-entropy is at least $k-1$. It follows that $\cond(C^* x_i C) \leq \cond{\bar{\rho_i}} = \omega_i$ where $\omega_i$ is a state with $\eps$-smooth min-entropy at least $k'$. Hence, we find by Lemma~\ref{lem:smoothmin-sigmanorm}
\begin{align}
\tr\left[\sum_{y \in M} C z(y)^T C^* \, \cond(x_i)(y)\right]= \tr\left[\sum_{y \in M} z(y^T) \, \cond(C^* x_i C)(y)\right] \,\leq\, \tr\left[\sum_{y \in M} z(y)^T \omega_i(y)\right] \,\leq \, 2 \,\eps\ .
\end{align}
\end{proof}


\subsection{Graph Theory}\label{sec:graph_theory}

\bipartitegraph*

\begin{proof}
Using the fact that $\Sigma\{2^{k'}\ell^\infty_M,\ell^1_M\}$ is the dual norm of the norm $\cap\{2^{-k'}\ell^1_M,\ell^\infty_M\}$, we write 
\begin{align}
&\left\|\cond:\cap\{2^{k}\ell^\infty_N, \ell^1_N\} \to \Sigma\{2^{k'}\ell^\infty_M, \ell^1_M\}\right\|\notag\\
&= \max\left\{\left|\sum_{y \in M} \cond(f)_y \cdot g_y\right| \;:\; \norm{f}_{\cap\{ 2^k \ell^\infty_N, \ell^1_N\}} \leq 1 ,\; \norm{g}_{\cap\{2^{-k'} \ell^1_M, \ell^\infty_M\}} \leq 1\,\right\} \\
&= \max\left\{  \left| \frac{1}{D} \sum_{x \in N, y \in M, s \in D} \delta_{\Gamma(x,s)=y} f_x \cdot g_y \right| \;:\; \norm{f}_{\cap\{ 2^k \ell^\infty_N, \ell^1_N\}} \leq 1 ,\; \norm{g}_{\cap\{2^{-k'} \ell^1_M, \ell^\infty_M\}} \leq 1\,  \right\} \\
&= \frac{1}{D \cdot 2^k} \max\left\{  \left| \sum_{x \in N, y \in M, s \in D} \delta_{\Gamma(x,s)=y} f_x \cdot g_y \right| \;:\; \norm{f}_{\cap\{ \ell^\infty_N, 2^{-k} \ell^1_N\}} \leq 1 ,\; \norm{g}_{\cap\{2^{-k'} \ell^1_M, \ell^\infty_M\}} \leq 1\,\right\}\ ,
\end{align}
where $f_x$, $g_y$ denote the components of the vectors $f \in \Rl^N$, $g \in \Rl^M$. Because the matrix elements of the tensor are all positive, we can restrict the maximization to vectors with positive entries. Then, the norm conditions on the vector $f$ translates into $0\leq f_x \leq 1$, $\sum_x f_x \leq2^{k}$. However, the extreme points of this convex sets are just the characteristic vectors of subsets of $N$ of size $2^{k}$. Due to the convex character of the objective function in both variables $f$ and $g$, we hence end up with the quadratic program as claimed.
\end{proof}


\subsection{Bell inequalities}\label{sec:two_player}

As discussed in the review article~\cite{Brunner14} Bell inequalities and two-player games are the same, and in the following we will take the game perspective. In particular, we are interested in a two-player game such that its classical value is related to the bounded norm of the condenser as in Proposition~\ref{prop:cond-to-norm}, and its entangled value relates to the completely bounded norm of the condenser as in Theorem~\ref{thm:cond-to-norm-quantum}. The game is as follows. There are the two players Alice and Bob, and a referee. The bipartite graph $G=(N,M,V,D)$ as defined by the condenser $\cond=\{f_{s}\}_{s\in D}$ via the neighbor function in~\eqref{eq:neighbor_graph} is known to all parties. First, the referee samples $\gamma=2^{n-k}$ elements $x_1,\dots,x_\gamma$ out of $N$ uniformly at random (with replacement), and likewise $\gamma'=2^{m-k'}$ independent and uniformly chosen random entries $y_1,\dots,y_{\gamma'}$ out of $M$, as well as a value of the seed $s$ according to the uniform distribution. We collect these random indices into vectors $\vec{x} \in N^\gamma$ and $\vec{y} \in M^{\gamma'}$; the questions for Alice and Bob, respectively. Alice and Bob then provide indices $1\leq \alpha \leq \gamma$ and $1\leq \beta \leq \gamma'$ of these vectors as answers. They win if
\begin{align}
\Gamma(x_\alpha,s) = y_\beta\ .
\end{align}
We use the notation $(G;2^{k},2^{k'})$ for this game. A classical strategy amounts to a pair of deterministic mappings $f: N^\gamma \to \{1,\dots,\gamma\}$ and $g:M^{\gamma'} \to \{1,\dots,\gamma'\}$ (independent of the value of $s$), and the classical value of the game is\footnote{Instead of deterministic functions $f$ and $g$, we could also allow for shared randomness, which does not increase the value of the game.}
\begin{align}
\omega\left(G;2^{k},2^{k'}\right):=\sup\Big\{ \E_s \E_{\vec{x},\vec{y}} \, \Gamma\big[\vec{x}(f(\vec{x})),s;\vec{y}(g(\vec{y}))\big]\Big\}\ ,
\end{align}
where the supremum is over all classical strategies, and
\begin{align}
\Gamma[x,s;y]\eqdef
\begin{cases}
1 & \mbox{for}\;\Gamma(x,s) = y\\
0 & \mbox{otherwise.}
\end{cases}
\end{align}
If Alice and Bob are allowed to share an entangled state of local dimension $Q$, they are not restricted to classical deterministic strategies. Instead, each player has a set of positive operator valued measures (POVMs), indexed by his questions, acting on his share of the entangled state. That is, Alice and Bob each have a set of positive operators in $\matr_Q$ labeled by the questions $\vec{x}$ and $\vec{y}$ and possible outcomes $\alpha$ and $\beta$,
\begin{align}
\hat{p}(\alpha;\vec{x})\geq0\quad&\mathrm{with}\quad\sum_\alpha \hat{p}(\alpha;\vec{x})\leq\1\label{eq:quantum_strategy1}\\
\hat{q}(\beta;\vec{y})\geq0\quad&\mathrm{with}\quad\sum_\beta \hat{q}(\beta;\vec{y})\leq\1\label{eq:quantum_strategy2}
\end{align}
for all questions $\vec{x}$ and $\vec{y}$ (again independent of the value of $s$). Note that we allow for incomplete POVMs (they do not sum to the identity), since we can always include a dummy answer into the game, associated to the missing normalization, thereby not changing the value of the game. For fixed $Q\in\mathbb{N}$ we define
\begin{align}
\omega^*_Q\left(G;2^{k},2^{k'}\right)\eqdef\sup \left\{ \E_s \E_{\vec{x},\vec{y}} \, \sum_{\alpha,\beta}\,\Gamma\big[\vec{x}(\alpha)),s;\vec{y}(\beta)\big]\cdot\bra{\psi} \hat{p}(\alpha;\vec{x}) \otimes \hat{q}(\beta;\vec{y})\ket{\psi}\,\right\}\ ,
\end{align}
where the supremum is over all bipartite pure state vectors $\ket{\psi}\in\C^Q \otimes\C^Q$, and all corresponding quantum strategies as in~\eqref{eq:quantum_strategy1}--\eqref{eq:quantum_strategy2}. The supremum over all bipartite pure state vectors just gives the operator norm and hence we have
\begin{align}
\omega^*_Q\left(G;2^{k},2^{k'}\right)=\sup\left\{\left\|\E_s \E_{\vec{x},\vec{y}}\,\sum_{\alpha,\beta}\,\Gamma\big[\vec{x}(\alpha)),s;\vec{y}(\beta)\big]\, \hat{p}(\alpha;\vec{x}) \otimes \hat{q}(\beta;\vec{y})\right\|_{(\infty;\infty)}\right\}\ .
\end{align} 
Finally, the entangled value of the game is given by
\begin{align}
\omega^*\left(G;2^{k},2^{k'}\right)\eqdef\sup_{Q\in\mathbb{N}}\omega^*_Q\left(G;2^{k},2^{k'}\right)\ .
\end{align}

If we ask condensers only to be quantum-proof against $Q$-dimensional quantum side information the relevant quantity to bound becomes (as in Theorem~\ref{thm:cond-to-norm-quantum})
\begin{align}
\|[\cond]\|_{(\capcap\{2^{k}\mathds{L}^\infty_N,\mathds{L}^1_M\})_Q\to(\sigsig\{2^{k'}\mathds{L}^\infty_M,\mathds{L}^1_M\})_Q}\eqdef\sup_{\stackrel{P\in\matr_{Q}(N)}{\|P\|_{(\capcap\{2^{k}\mathds{L}^\infty,\mathds{L}^1\})_Q\leq1}}}\|[\cond](P_N)\|_{(\sigsig\{2^{k'}\mathds{L}^\infty_M,\mathds{L}^1_M\})_Q}\ .
\end{align}
For the proof of Theorem~\ref{thm:condensergame} we will show that for any dimension $Q\in\mathbb{N}$,
\begin{align}\label{eq:condenser_basicthm}
\omega^{*}_Q(G;2^{k},2^{k'})\leq2^{-k'}\cdot\|[\cond]\|_{(\capcap\{2^{k}\mathds{L}^\infty_N,\mathds{L}^1_M\})_Q\to(\sigsig\{2^{k'}\mathds{L}^\infty_M,\mathds{L}^1_M\})_Q}\leq c\cdot\omega^{*}_Q(G;2^{k},2^{k'})\ ,
\end{align}
and with this prove all the claims at once (by letting $Q=1$ for~\eqref{eq:condensergame_b}, $Q\to\infty$ for~\eqref{eq:condensergame_cb}, or leaving $Q$ free). For the second inequality in~\eqref{eq:condenser_basicthm} we start from elements $p\in\matr_Q(\mathbb{C}^N)$ and $q\in \matr_Q(\mathbb{C}^M)$, satisfying
\begin{align}
\sum_x p(x) \leq \frac{2^{k}}{4}\cdot\1,\;p(x)\leq\1\quad\text{and}\quad\sum_y q(y)\leq\frac{2^{k'}}{4}\cdot\1,\;q(y)\leq\1\ ,
\end{align}
and construct POVMs $\hat{p}(\alpha;\vec{x})$ and $\hat{q}(\beta;\vec{y})$ associated to the questions and answers of the game such that
\begin{align}
\frac{1}{64}\cdot\frac{4}{2^{k}}\cdot\frac{4}{2^{k'}}\left\|\sum_{x,y} \,\E_s\,\Gamma[x,s;y]\,p(x) \otimes q(y)\right\|_{(\infty;\infty)}\leq\left\|\E_s \E_{\vec{x},\vec{y}} \sum_{\alpha, \beta}\Gamma\big[\vec{x}(\alpha),s;\vec{y}(\beta)\big]\,\hat{p}(\alpha;\vec{x})\otimes \hat{q}(\beta;\vec{y})\right\|_{(\infty;\infty)}\ .
\end{align}
Since the supremum of the left hand side over all such $q$ and $p$ is by operator space duality just $1/64$ times $\norm{\cond}_\cb$, and the statement follows by taking the expectation over the value of the seed $s$ on both sides. Conversely, the first inequality in~\eqref{eq:condenser_basicthm} is proven by constructing operators satisfying the $\capcap$-norm estimates out of POVM elements.

The POVMs $\hat{q}(\beta;\vec{y})$ and $\hat{p}(\alpha;\vec{x})$ are built in a similar manner. Let $\vec{x}$ denote again the vector with entries corresponding to the choice of $\gamma$ independent random indices in $N$. For $0\leq \alpha \leq \gamma$, $1 \leq l \leq \alpha$ or \(0 \leq \beta \leq \gamma'\), \(1 \leq l' \leq \beta\) we define for $p\in\matr_Q\left(\mathbb{C}^N\right)$ and $q\in\matr_Q\left(\mathbb{C}^M\right)$,
\begin{align}\label{eq:rt1}
r_{\vec{x},l,\alpha}\eqdef\prod_{l\leq k \leq \alpha} \big(\1 - p[\vec{x}(k)]\big)^{1/2}\ ,\quad t_{\vec{y},l',\beta}\eqdef\prod_{l'\leq k \leq \beta} \big(\1 - q[\vec{y}(k)]\big)^{1/2}
\end{align}
as well as
\begin{align}\label{eq:rt2}
r_{\vec{x},\alpha+1,\alpha}  \eqdef\1 \;,\;\;r_{\vec{x},\alpha} \eqdef r_{\vec{x},1,\alpha}\quad\mathrm{and}\quad t_{\vec{y},\beta+1,\beta}\eqdef \1 \;,\;\;t_{\vec{y},\beta} \eqdef t_{\vec{y},1,\beta}\ .
\end{align}
The POVMs are then constructed as
\begin{align}
\hat{p}(\alpha;\vec{x})\eqdef r_{\vec{x},\alpha-1}^*\,p[\vec{x}(\alpha)]\,r_{\vec{x},\alpha-1}\quad\mathrm{and}\quad\hat{q}(\beta;\vec{y})\eqdef t_{\vec{y},\beta-1}^*\,q[\vec{y}(\beta)]\,t_{\vec{y},\beta-1}\ .
\end{align}
The following lemma asserts that these are valid POVMs and also provides some estimates for remaining terms, to be used later on. It is a special case of~\cite[Lemma 6.7]{Junge:2006wt}, but for the convenience of the reader we restate it, and provide an adapted proof in Appendix~\ref{app:missing}.

\begin{restatable}{lemma}{jungetechnical}\label{lem:Jungelemma}
Let $p\in\matr_Q\left(\mathbb{C}^N\right)$, and $\alpha$, $\gamma$, $r_{\vec{x},\alpha}$ as in~\eqref{eq:rt1}--\eqref{eq:rt2}. Then, we have
\begin{align}
&\sum_{\alpha=1}^{\gamma} r_{\vec{x},\alpha-1}^* p[\vec{x}(\alpha)] r_{\vec{x},\alpha-1} \leq \1\label{eq:junge1}\\
&\sum_{\alpha=1}^{\gamma} \E_{x_1,\dots,x_{\alpha-1}} \left(\1 - r_{\vec{x},\alpha-1}\right) \leq \frac{\gamma}{8}\cdot\1\label{eq:junge2}\\
&\sum_{\alpha=1}^{\gamma} \E_{x_1,\dots,x_{\alpha-1}} [\1 - r_{\vec{x},\alpha-1}^*] [\1 - r_{\vec{x},\alpha-1}] \leq \frac{\gamma}{4}\cdot\1\ .\label{eq:junge3}
\end{align}
Similar estimates hold for $q\in\matr_Q\left(\mathbb{C}^M\right)$ with the replacements $\alpha\mapsto\beta$, $\gamma\mapsto\gamma'$, and $r_{\vec{x},\alpha}\mapsto t_{\vec{y},\beta}$ as defined in~\eqref{eq:rt1}--\eqref{eq:rt2}.
\end{restatable}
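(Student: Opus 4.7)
The plan is to exploit that $A_k:=(\1-p[\vec{x}(k)])^{1/2}$ are independent Hermitian contractions with $\E_x p(x)\leq\tfrac{1}{4\gamma}\1$ (since the constraint $\sum_x p(x)\leq \tfrac{2^k}{4}\1$ combined with $N=\gamma\cdot 2^k$ gives this after averaging), and to interpret the product as $r_{\vec{x},\alpha-1}=A_{\alpha-1}A_{\alpha-2}\cdots A_1$ (the convention in which the POVM element $\hat p(\alpha;\vec x)=r_{\vec x,\alpha-1}^* p[\vec x(\alpha)] r_{\vec x,\alpha-1}$ has the standard sequential-measurement interpretation). Inequality \eqref{eq:junge1} then follows from the telescope
\[r_{\vec x,\alpha-1}^{*}\,p[\vec x(\alpha)]\,r_{\vec x,\alpha-1}=r_{\vec x,\alpha-1}^{*}r_{\vec x,\alpha-1}-r_{\vec x,\alpha}^{*}r_{\vec x,\alpha},\]
obtained from $A_\alpha^2=\1-p[\vec x(\alpha)]$; summing from $1$ to $\gamma$ yields $\1-r_{\vec x,\gamma}^* r_{\vec x,\gamma}\leq\1$.

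For \eqref{eq:junge2} I would compute the expectation using independence: $\E[r_{\vec x,\alpha-1}]=\bar A^{\alpha-1}$ with $\bar A:=\E_x[(\1-p(x))^{1/2}]$. The scalar inequality $\sqrt{1-t}\geq 1-t$ on $[0,1]$ lifts to $(\1-p(x))^{1/2}\geq\1-p(x)$ via the functional calculus, so $\bar A\geq \1-\E_x p(x)\geq (1-\tfrac{1}{4\gamma})\1$. A matrix Bernoulli step then gives $\bar A^{\alpha-1}\geq (1-\tfrac{\alpha-1}{4\gamma})\1$, and summing
\[\sum_{\alpha=1}^{\gamma}\E[\1-r_{\vec x,\alpha-1}]=\sum_{\alpha=1}^{\gamma}(\1-\bar A^{\alpha-1})\leq\sum_{\alpha=1}^{\gamma}\frac{\alpha-1}{4\gamma}\1=\frac{\gamma-1}{8}\1\]
yields the desired bound $\tfrac{\gamma}{8}\1$.

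Finally, I would reduce \eqref{eq:junge3} to \eqref{eq:junge2}. Since $0\leq A_k\leq\1$, iterated conjugation $A_k\,Y\,A_k\leq Y$ for $0\leq Y\leq\1$ gives $r_{\vec x,\alpha-1}^* r_{\vec x,\alpha-1}\leq\1$, so expanding
\[(\1-r^*)(\1-r)\leq 2\1-r-r^{*}=(\1-r)+(\1-r^{*})\]
and using that $\E[r]=\bar A^{\alpha-1}$ is Hermitian gives $\E[(\1-r^*)(\1-r)]\leq 2\,\E[\1-r]$; summing over $\alpha$ and applying \eqref{eq:junge2} yields $\tfrac{\gamma}{4}\1$. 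The parallel statements for $q$ and $t_{\vec y,\beta}$ follow by identical reasoning. The only potentially delicate step is the matrix Bernoulli lift in \eqref{eq:junge2}, but since $\bar A$ is Hermitian with spectrum bounded below by the scalar $1-\tfrac{1}{4\gamma}$, the spectral theorem reduces it to the scalar inequality $(1-t)^{\alpha-1}\geq 1-(\alpha-1)t$; everything else is telescoping or operator-monotonicity of conjugation.
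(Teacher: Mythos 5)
Your proof is correct and follows essentially the same route as the paper's: a telescoping identity for \eqref{eq:junge1}, the scalar bound $1-\sqrt{1-t}\leq t$ together with independence and spectral calculus for \eqref{eq:junge2} (you use Bernoulli's inequality for the final sum where the paper uses a geometric series and $4e^{-1/4}-3\leq 1/8$, both giving the claimed $\gamma/8$), and the expansion $(\1-r^*)(\1-r)\leq(\1-r)+(\1-r^*)$ reducing \eqref{eq:junge3} to \eqref{eq:junge2}. One small imprecision: the intermediate claim $A_kYA_k\leq Y$ for $0\leq Y\leq\1$ is false in general; what you need (and what holds) is $A_kYA_k\leq A_k^2\leq\1$, i.e.\ simply $\|r_{\vec{x},\alpha-1}\|_\infty\leq 1$, so the conclusion $r^*r\leq\1$ is unaffected.
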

 
\begin{proof}[Proof of Theorem~\ref{thm:condensergame}]
We start with the direction game $\implies$ condenser. The main idea of the proof is similar to the proof of~\cite[Proposition 6.9]{Junge:2006wt}. According to the rule
\begin{align}
t^* q t= q + (1-t^*) 1 + q(1-t) - (1-t^*) q (1-t) \leq q + (1-t^*) 1 + q(1-t) + (1-t^*) q (1-t)\ ,
\end{align}
we split the following sum into $16$ terms,
\begin{align}\label{eq:condensergameexpandterms}
&\sum_{\alpha,\beta}\Gamma[\vec{x}(\alpha),s;\vec{y}(\beta)] \,  p[\vec{x}(\alpha)]  \otimes  q[\vec{y}(\beta)]  \notag\\
&\leq \sum_{\alpha,\beta}   \Gamma[\vec{x}(\alpha),s;\vec{y}(\beta)] \, r_{\vec{x},\alpha-1}^* p[\vec{x}(\alpha)] r_{\vec{x},\alpha-1} \otimes  t_{\vec{y},\beta-1}^* q[\vec{y}(\beta)] t_{\vec{y},\beta-1} \notag\\
&\quad+ \sum_{\alpha,\beta}   \Gamma[\vec{x}(\alpha),s;\vec{y}(\beta)] \, [\1 - r_{\vec{x},\alpha-1}^*] p[\vec{x}(\alpha)]  \otimes  [\1 - t_{\vec{y},\beta-1}^*] q[\vec{y}(\beta)] \notag\\
&\quad+\sum_{\alpha,\beta}   \Gamma[\vec{x}(\alpha),s;\vec{y}(\beta)] \, p[\vec{x}(\alpha)] [\1 - r_{\vec{x},\alpha-1}] \otimes   q[\vec{y}(\beta)] [\1 - t_{\vec{y},\beta-1}] \notag\\
&\quad+ \dots \text{ 12 other terms } \notag\\
&\quad+ \sum_{\alpha,\beta}   \Gamma[\vec{x}(\alpha),s;\vec{y}(\beta)] \,[\1 - r_{\vec{x},\alpha-1}^*] p[\vec{x}(\alpha)] [\1 - r_{\vec{x},\alpha-1}] \otimes  [\1 - t_{\vec{y},\beta-1}^*]  q[\vec{y}(\beta)] [\1 - t_{\vec{y},\beta-1}]\ .
\end{align}
We now take the expectation over the random choices $\vec{x}$ and $\vec{y}$ and the seed value \(s\). The left hand side reduces to
\begin{align}
\E_s \E_{\vec{x},\vec{y}} \sum_{\alpha,\beta}\Gamma[\vec{x}(\alpha),s;\vec{y}(\beta)] \,  p[\vec{x}(\alpha)]\otimes q[\vec{y}(\beta)]&= \E_s \E_{x_1,\dots,x_\gamma,y_1,\dots,y_{\gamma'}} \sum_{\alpha,\beta} \Gamma[x_\alpha,s;y_\beta]\,p[x_\alpha]\otimes q[y_\beta]\\
&= \gamma \,\gamma' \, \E_s \E_x \E_y \Gamma[x,s;y] p(x)\otimes q(y)\ .
\end{align}
The first term on the right hand side of \eqref{eq:condensergameexpandterms} is the Bell (or game) operator which norm is equal to the value of the strategy given by the POVMs $\hat{q}(\beta;\vec{y})$ and $\hat{p}(\alpha;\vec{x})$. Let us examine the remaining terms. Expressions such as the second or third one are evaluated to
\begin{align}
&\sum_{\alpha,\beta}\E_s \E_{x_\alpha} \E_{y_\beta}\Big\{\Gamma[\vec{x}(\alpha),s;\vec{y}(\beta)] \, \left(\E_{x_1,\dots,x_{\alpha-1}}[\1 - r_{\vec{x},\alpha-1}^*]\right) p[\vec{x}(\alpha)]  \otimes  \left(\E_{y_1,\dots,y_{\beta-1}}[\1 - t_{\vec{y},\beta-1}^*]\right) q[\vec{y}(\beta)]\Big\} \notag\\
&\leq \frac{\gamma \, \gamma'}{8 \cdot 8} \norm{\E_s \E_x \E_y \Gamma[x,s;y] p(x) \otimes q(y)}_{(\infty;\infty)}\cdot\1\ ,
\end{align}
where we applied Lemma~\ref{lem:Jungelemma}, \eqref{eq:junge2}, to the operators in the brackets. Note that we had to use that the estimate are independent of the seed value. Terms involving $[\1 - r_{\dots}]$ or its starred version on both sides of $p[\dots]$ (and similarly those involving $[\1 - t_{\dots}]$ on both sides of $q[\dots]$) are estimated as follows
\begin{align}
&\sum_{\alpha,\beta}\E_s \E_{x_\alpha} \E_{y_\beta}\notag\\
&\Big\{\Gamma[\vec{x}(\alpha),s;\vec{y}(\beta)] \,\E_{x_1,\dots,x_{\alpha-1}} \left\{[\1 - r_{\vec{x},\alpha-1}^*] p[\vec{x}(\alpha)] [\1 -r_{\vec{x},\alpha-1}]\right\}\otimes\E_{y_1,\dots,y_{\beta-1}} \left\{[\1 - t_{\vec{y},\beta-1}^*]  q[\vec{y}(\beta)] [\1 - t_{\vec{y},\beta-1}]\right\}\Big\} \notag\\
&\leq \norm{\E_s \E_x \E_y \Gamma[x,s;y] p(x) \otimes q(s,y)}_{(\infty;\infty)}\cdot\left\|\sum_\alpha \E_{x_1,\dots,x_{\alpha-1}} [\1 - r_{\vec{x},\alpha-1}^*] [\1 - r_{\vec{x},\alpha-1}]\right\|_{(\infty;\infty)}\notag\\
&\quad\,\,\cdot\max_s \,\left\|\sum_\beta\E_{y_1,\dots,y_{\beta-1}} [\1 - t_{s,\vec{y},\beta-1}^*] [\1 - t_{\vec{y},\beta-1}]\right\|_{(\infty;\infty)}\cdot\1\\
&\leq  \frac{\gamma \, \gamma'}{4 \cdot 4} \norm{\E_s \E_x \E_y \Gamma[x,s;y] p(x) \otimes q(y)}_{(\infty;\infty)}\cdot\1\ .
 \end{align}
Here we used that for completely positive maps $\phi$ and positive operators $a$ it holds that $\phi(a) \leq \norm{a} \,\phi(\1)$, which we applied to the cp maps given by the Kraus operators \(\1 - r_{\vec{x},\alpha-1}\) and \(\1 - t_{\vec{y},\beta-1}\). The estimate then followed by applying Lemma~\ref{lem:Jungelemma}, \eqref{eq:junge3}, and again using that the estimate are independent of the seed value. Estimating all cross terms according to the two strategies leads to the final estimate
\begin{align}
\frac{1}{4}\cdot2^{-k-k'}\left\|\sum_{x,y} \E_s  \Gamma[x,s;y] p(x) \otimes q(y)\right\|_{(\infty;\infty)}\,&=\, \gamma \,\gamma' \, \left\|\E_s \E_x \E_y \Gamma[x,s;y] p(x) \otimes q(y)\right\|_{(\infty;\infty)}\\
&\leq\left\|\sum_{\alpha,\beta} \E_s \E_{\vec{x},\vec{y}}  \Gamma[\vec{x}(\alpha),s;\vec{y}(\beta)] \,\hat{p}(\alpha;\vec{x}) \otimes  \hat{q}(\beta;\vec{y})\right\|_{(\infty;\infty)}\ .
\end{align}
 
For the converse part, let $\hat{q}(\beta;\vec{y})$ and $\hat{p}(\alpha;\vec{x})$ be arbitrary quantum strategies. We perform the following transformation
\begin{align}
\E_s \E_{\vec{x},\vec{y}} \sum_{\alpha, \beta}\Gamma[\vec{x}(\alpha),s;\vec{j}(\beta)] \hat{p}(\alpha;\vec{x}) \otimes  \hat{q}(\beta;\vec{y})= \sum_{x'\in N, y' \in M} \,\E_s\, \Gamma[x,s;y] \sum_\alpha \E_{\vec{x}} \left[ \delta_{\vec{x}(\alpha) = x'} \hat{p}(\alpha;\vec{x})\right] \otimes  \sum_\beta \E_{\vec{y}} \left[ \delta_{\vec{y}(\beta) = y'} \hat{q}(\beta;\vec{y})\right]\ .
\end{align}
We now show that the collection of operators 
\begin{align}\label{eq:definecondenseropsthroughbellops}
p(x')= \sum_\alpha \E_{\vec{x}} \left[ \delta_{\vec{x}(\alpha) = x'} \hat{p}(\alpha;\vec{x})\right]\quad\mathrm{and}\quad q(y')= \sum_\beta \E_{\vec{y}} \left[ \delta_{\vec{y}(\beta) = y'} \hat{q}(\beta;\vec{y})\right]
\end{align}
satisfy the $\capcap$-norm estimates
\begin{align}\label{eq:gametocondenserestimates}
\sum_{x'} p(x') \leq \1\quad\mathrm{and}\quad p(x') \leq2^{-k}\1\quad\mathrm{plus}\quad\sum_{y'} q(y') \leq \1\quad\mathrm{and}\quad q(y') \leq2^{-k'}\1\ .
\end{align}
Hence, we have 
\begin{align}
&\left\|\E_s \E_{\vec{x},\vec{y}} \sum_{\alpha, \beta}  \Gamma[\vec{x}(\alpha),s;\vec{j}(\beta)] \hat{p}(\alpha;\vec{x}) \otimes  \hat{q}(\beta;\vec{y})\right\|_{(\infty;\infty)}\notag\\
&\leq \sup\left\{\left\| \sum_{x,y} \,\E_s  \Gamma[x,s;y] p(x) \otimes q(y)\right\|_{(\infty;\infty)}\;:\;\norm{p}_{\capcap\{2^{k}\ell^\infty,\ell^1\}} \leq 1\,,\; \norm{q}_{\capcap\{2^{k'}\ell^\infty,\ell^1\}} \leq 1 \right\}\ ,
\end{align}
and again operator space duality provides the last argument. 
 
To show the first set of estimates in~\eqref{eq:gametocondenserestimates}, note
\begin{align}
\sum_{x'} \sum_\alpha \E_{\vec{x}} \left[ \delta_{\vec{x}(\alpha) = x'} \hat{p}(\alpha;\vec{x})\right] = \sum_\alpha \E_{\vec{x}} \left[  \sum_{x'} \delta_{\vec{x}(\alpha) = x'} \hat{p}(\alpha;\vec{x})\right] =  \sum_\alpha \, \hat{p}(\alpha;\vec{x}) \leq \1 \,,
\end{align}
and similarly $\sum_{y',\beta} \E_{\vec{y}} \left[ \delta_{\vec{y}(\beta) = y'} \hat{q}(\beta;\vec{y})\right] \leq \1$. Moreover, we have $\hat{p}(\alpha;\vec{x}) \leq \1$ and hence
\begin{align}
\sum_\alpha \E_{\vec{x}} \left[ \delta_{\vec{x}(\alpha) = x'} \hat{p}(\alpha;\vec{x})\right] \leq \sum_\alpha \P[x_\alpha = x'] \,\leq \frac{\gamma}{N} \1 =2^{-k}\1\ ,
\end{align}
and again similarly $\E_{\vec{y}} \left[ \delta_{\vec{y}(\beta) = y'} \hat{q}(\beta;\vec{y})\right]\leq2^{-k'}$.
\end{proof}


\section*{Acknowledgments}

We acknowledge discussions with Matthias Christandl, Fabian Furrer, Patrick Hayden, Christopher Portmann, Renato Renner, Oleg Szehr, Marco Tomamichel, Thomas Vidick, Stephanie Wehner, Reinhard Werner, and Andreas Winter. MB acknowledges funding provided by the Institute for Quantum Information and Matter, an NSF Physics Frontiers Center (NFS Grant PHY-1125565) with support of the Gordon and Betty Moore Foundation (GBMF-12500028). Additional funding support was provided by the ARO grant for Research on Quantum Algorithms at the IQIM (W911NF-12-1-0521). Most of this work was done while OF was affiliated with ETH Zurich, supported by the the European Research Council grant No. 258932. Additional funding was provided by the EU under the project ``Randomness and Quantum Entanglement'' (RAQUEL). VBS was supported by an ETH postdoctoral fellowship. Part of this work was done while OF and VBS were visiting the Institute for Quantum Information and Matter at Caltech and we would like to thank John Preskill and Thomas Vidick for their hospitality.


\appendix

\section{Missing proofs}\label{app:missing}

\begin{theorem}[Grothendieck's inequality]\label{thm:grothendieck}
For any real matrix $\{A_{ij}\}$, we have
\begin{align}
\max\left\{ \sum_{i,j} A_{ij} \vec{a}_{i} \cdot \vec{b}_j : \| \vec{a}_{i} \|_2 \leq 1, \| \vec{b}_j \|_2 \leq 1\right\}\leq K_G \cdot \max \left\{ \sum_{i,j} A_{ij} a_i b_j : a_i, b_j \in \bR, |a_i| \leq 1, |b_j| \leq 1 \right\}\ .
\end{align}
\end{theorem}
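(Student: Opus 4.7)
The plan is to prove Grothendieck's inequality by Krivine's argument, which combines Gaussian rounding with a non-linear pre-transformation of the input vectors. The key insight is that a vector-valued bilinear form can be converted into a sign-valued one by randomly projecting onto a standard Gaussian direction and taking signs, with a controlled loss.

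First I would establish the Grothendieck identity: for unit vectors $u,v$ in a real Hilbert space $H$ and a standard Gaussian vector $g \in H$,
\begin{align}
\E\left[\mathrm{sign}\langle g,u\rangle \cdot \mathrm{sign}\langle g,v\rangle\right] = \frac{2}{\pi}\arcsin\langle u,v\rangle.
\end{align}
By rotational invariance of $g$ this reduces to a planar integral, which follows from an elementary angle computation. Applying this naively to the original $\vec{a}_i, \vec{b}_j$ would only give $K_G \leq \pi/2$ via the bound $|\arcsin t| \geq |t|$, which is suboptimal.

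To obtain the stated constant, I would introduce Krivine's non-linear transformation. Setting $c \eqdef \ln(1+\sqrt{2}) = \sinh^{-1}(1)$, I would map each vector $u \in H$ with $\|u\| \leq 1$ to vectors $S(u), T(u)$ in the Hilbert space $\bigoplus_{k\geq 0} H^{\otimes(2k+1)}$ defined by
\begin{align}
S(u) \eqdef \sum_{k\geq 0}\sqrt{\frac{c^{2k+1}}{(2k+1)!}}\,u^{\otimes(2k+1)},\qquad T(u) \eqdef \sum_{k\geq 0}(-1)^k\sqrt{\frac{c^{2k+1}}{(2k+1)!}}\,u^{\otimes(2k+1)}.
\end{align}
A termwise calculation gives $\|S(u)\|^2 = \|T(u)\|^2 = \sinh(c\|u\|^2) \leq \sinh(c) = 1$, and
\begin{align}
\langle S(u), T(v)\rangle = \sum_{k\geq 0}(-1)^k \frac{(c\langle u,v\rangle)^{2k+1}}{(2k+1)!} = \sin(c\langle u,v\rangle),
\end{align}
so the transformation realizes $\sin(c\,\cdot)$ as an inner product while staying in the unit ball.

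Putting everything together, the Grothendieck identity applied to $S(\vec{a}_i)$ and $T(\vec{b}_j)$ yields
\begin{align}
\E\left[\mathrm{sign}\langle g, S(\vec{a}_i)\rangle\cdot\mathrm{sign}\langle g, T(\vec{b}_j)\rangle\right] = \frac{2}{\pi}\arcsin\bigl(\sin(c\langle \vec{a}_i,\vec{b}_j\rangle)\bigr) = \frac{2c}{\pi}\langle \vec{a}_i,\vec{b}_j\rangle,
\end{align}
where we used that $|c\langle \vec{a}_i,\vec{b}_j\rangle| \leq c < \pi/2$. Multiplying by $A_{ij}$ and summing, the bilinear form in vectors equals $\pi/(2c)$ times an expectation of a bilinear form in the $\pm 1$ signs $\mathrm{sign}\langle g, S(\vec{a}_i)\rangle$ and $\mathrm{sign}\langle g, T(\vec{b}_j)\rangle$; bounding this expectation by its maximum over sign realizations yields $K_G \leq \pi/(2\ln(1+\sqrt{2})) < 1.8$. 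The main obstacle is the Krivine construction itself: the specific alternating-sign series and the optimal constant $c = \sinh^{-1}(1)$ are precisely what is needed to simultaneously reproduce $\sin(c\langle u,v\rangle)$ as an inner product and preserve the unit-ball constraint, and getting both right requires the specific ansatz above. The Gaussian rounding identity and the final averaging argument are essentially mechanical once this ansatz is in hand.
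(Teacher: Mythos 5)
The paper does not actually prove this statement: Theorem~\ref{thm:grothendieck} is stated in Appendix~\ref{app:missing} purely as a citation of a classical result (Grothendieck's inequality, quoted from \cite[Corollary 14.2]{Pis12}), and it is used only as a black box in Theorem~\ref{thm:high-min-ent-ext}. Your proposal supplies a genuine proof via Krivine's argument, and it is essentially correct: the Gaussian sign identity, the termwise computations $\|S(u)\|^2=\sinh(c\|u\|^2)$ and $\langle S(u),T(v)\rangle=\sin(c\langle u,v\rangle)$, the choice $c=\sinh^{-1}(1)$, and the final averaging all check out, and the resulting constant $\pi/(2\ln(1+\sqrt2))\approx 1.78$ is exactly what justifies the paper's use of $K_G<1.8$. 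One small point you should make explicit: the identity $\E[\mathrm{sign}\langle g,u\rangle\,\mathrm{sign}\langle g,v\rangle]=\tfrac{2}{\pi}\arcsin\langle u,v\rangle$ requires $u,v$ to be \emph{unit} vectors, whereas $S(\vec a_i)$ and $T(\vec b_j)$ only satisfy $\|S(\vec a_i)\|,\|T(\vec b_j)\|\le 1$ under the hypothesis $\|\vec a_i\|_2\le 1$. This is repaired by noting that the left-hand maximum is attained with $\|\vec a_i\|_2=\|\vec b_j\|_2=1$ (the form is linear in each vector separately, so the maximum over each ball is attained on the sphere), after which $\|S(\vec a_i)\|=\|T(\vec b_j)\|=1$ exactly and the identity applies as you use it. With that one-line reduction added, your argument is a complete and self-contained proof of the statement, which is strictly more than the paper provides.
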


\begin{proposition}\label{prop:duality-one-infty}
On $\matr_{QN}$ the norm $\|\cdot\|_{(1;\infty)}$ is dual to the norm $\|\cdot\|_{(\infty;1)}$.
\end{proposition}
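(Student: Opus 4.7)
The plan is to establish the duality $\|x\|_{(\infty;1)} = \sup\{|\tr[xy]| : \|y\|_{(1;\infty)} \leq 1\}$ by exploiting the factorized structure of both norms together with the standard duality between $\|\cdot\|_{(1;1)}$ and $\|\cdot\|_{(\infty;\infty)}$ on $\matr_{QN}$.

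For the inequality $\|x\|_{(\infty;1)} \leq \sup\{|\tr[xy]|\}$, I would first unfold the definition of $\|x\|_{(\infty;1)}$ from Proposition~\ref{prop:s1-os} and replace the inner trace norm by its dual representation, obtaining
\begin{align*}
\|x\|_{(\infty;1)} = \sup\big\{|\tr[(A\otimes\1_N)\, x\, (B\otimes\1_N)\, W]| : \|A\|_2, \|B\|_2 \leq 1,\, \|W\|_{(\infty;\infty)} \leq 1\big\}.
\end{align*}
By cyclicity of the trace this equals $\sup |\tr[x\, y]|$ with $y := (B\otimes\1_N)\,W\,(A\otimes\1_N)$, and the triple $(B,W,A)$ is exactly a valid factorization witnessing $\|y\|_{(1;\infty)} \leq \|B\|_2\|W\|_{(\infty;\infty)}\|A\|_2 \leq 1$ in definition~\eqref{eq:def-one-infty}.

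For the reverse inequality, I would take any $y$ with $\|y\|_{(1;\infty)} \leq 1$ and, for arbitrary $\epsilon>0$, pick a near-optimal factorization $y = (B_1\otimes\1_N)\,C\,(B_2\otimes\1_N)$ with $\|B_1\|_2 \|C\|_{(\infty;\infty)} \|B_2\|_2 \leq 1+\epsilon$. Cyclicity plus $(1;1)$--$(\infty;\infty)$ Hölder on $\matr_{QN}$ give
\begin{align*}
|\tr[x\,y]| = |\tr[(B_2\otimes\1_N)\,x\,(B_1\otimes\1_N)\,C]| \leq \|(B_2\otimes\1_N)\,x\,(B_1\otimes\1_N)\|_{(1;1)}\,\|C\|_{(\infty;\infty)},
\end{align*}
and the defining sup of $\|\cdot\|_{(\infty;1)}$ together with homogeneity in $B_1,B_2$ yields $\|(B_2\otimes\1_N)\,x\,(B_1\otimes\1_N)\|_{(1;1)} \leq \|B_1\|_2\|B_2\|_2\,\|x\|_{(\infty;1)}$. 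Combining bounds, $|\tr[x\,y]| \leq (1+\epsilon)\|x\|_{(\infty;1)}$, and letting $\epsilon \to 0$ closes the duality.

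There is no real obstacle here; the only bookkeeping is to normalize the factorization so that $\|B_1\|_2 = \|B_2\|_2 = 1$ (which is permitted by the homogeneity of the three factors) and to note that in finite dimensions the infimum defining $\|\cdot\|_{(1;\infty)}$ is actually attained, so the $\epsilon$ can even be dropped. The proof is essentially an exercise in recognizing that the same object $(B\otimes\1_N)\,W\,(A\otimes\1_N)$ appears on both sides of the duality, once as a factorization witness and once as a test element against which $x$ is paired.
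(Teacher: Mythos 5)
Your proposal is correct and follows essentially the same route as the paper: the paper writes the argument as a single chain of equalities starting from the dual of $\|\cdot\|_{(1;\infty)}$, substituting the factorization, cycling the trace, and invoking the $(1;1)$--$(\infty;\infty)$ duality, which is exactly the content of your two inequalities combined. The only cosmetic difference is that you split the identity into two directions rather than presenting it as one computation.
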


\begin{proof}
For $A \in \matr_{QN}$ we calculate
\begin{align}
\| A \|_{(1;\infty)^{*}}&=\sup\left\{\tr\left[ B^{*} A \right] \ : \ \| B \|_{(1;\infty)} \leq 1\right\}\\
&= \sup\left\{ \tr\left[ B^{*} A \right] \ : \  B = (D_1 \otimes \1_N) C (D_2 \otimes \1_N), \| D_1 \|_2, \| D_2 \|_2 \leq 1, \| C \|_{(\infty;\infty)} \leq 1\right\}\\
&= \sup\left\{\tr\left[C^{*} (D_1^{*} \otimes \1_N) A (D_2^{*} \otimes \1_N)  \right],\| D_1 \|_2, \|D_2 \|_2 \leq 1, \| C \|_{(\infty;\infty)} \leq 1\right\}\\
&= \sup\left\{\| (D_2^{*} \otimes \1_N) A (D_1^{*} \otimes \1_N) \|_{(1;1)} , \| D_1 \|_2, \|D_2 \|_2 \leq 1 \right\}\\
&= \| A \|_{(\infty;1)}\ ,
\end{align}
where we have used that $\|\cdot\|_{(\infty;\infty)}$ is dual to $\|\cdot\|_{(1;1)}$.
\end{proof}

\begin{proposition}\label{prop:min-ent-norm}
For $\rho_{QN}\in\cS(QN)$ we have $\|\rho_{QN}\|_{(1;\infty)} = 2^{-H_{\min}(N|Q)_{\rho}}$. 
\end{proposition}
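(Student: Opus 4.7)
The plan is to prove the claim as a matching pair of inequalities, both phrased as positivity statements in $\matr_{QN}$. I would begin by replacing the given definition by the standard SDP characterization
\begin{align}
2^{-H_{\min}(N|Q)_\rho} \;=\; \min\bigl\{\lambda \geq 0 \,:\, \sigma_Q \in \cS(Q),\; \rho_{QN} \leq \lambda\,\sigma_Q \otimes \1_N\bigr\}\ ,
\end{align}
which follows directly from unpacking the operator norm with generalized inverses, once one notes that admissibility forces $\sigma_Q$ to dominate $\rho_Q$ on its support so that the pseudo-inverse acts as a genuine inverse there. This reformulation eliminates all inverse symbols and reduces the task to two clean operator inequalities.

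For $\|\rho_{QN}\|_{(1;\infty)} \leq 2^{-H_{\min}(N|Q)_\rho}$ I would just exhibit an explicit factorization. Taking an optimal $\sigma_Q$ and setting $B_1 \eqdef B_2 \eqdef \sigma_Q^{1/2}$ together with $C \eqdef (\sigma_Q^{-1/2} \otimes \1_N)\rho_{QN}(\sigma_Q^{-1/2} \otimes \1_N)$ on the support of $\sigma_Q$ produces $\|B_i\|_2^2 = \tr[\sigma_Q] = 1$ and $\|C\|_{(\infty;\infty)} = 2^{-H_{\min}(N|Q)_\rho}$, which is the required witness.

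The substantive direction is the reverse. Given any factorization $\rho_{QN} = (B_1 \otimes \1_N)\,C\,(B_2 \otimes \1_N)$, the main step is the block-matrix positivity
\begin{align}
\begin{pmatrix}
\|C\|_{(\infty;\infty)}\,(B_1 B_1^*) \otimes \1_N & \rho_{QN} \\
\rho_{QN} & \|C\|_{(\infty;\infty)}\,(B_2^* B_2) \otimes \1_N
\end{pmatrix} \;\geq\; 0\ ,
\end{align}
which I would derive by factoring this matrix as $\mathrm{diag}(B_1 \otimes \1_N,\,B_2^* \otimes \1_N)$ times the $2\times 2$ block with identities on the diagonal and $C/\|C\|_{(\infty;\infty)}$ off-diagonal (manifestly PSD since its off-diagonal has norm at most one), times the adjoint diagonal block. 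Testing on vectors of the shape $(v, \pm v)$ and using $\rho_{QN} = \rho_{QN}^*$ yields $2\,\rho_{QN} \leq \|C\|_{(\infty;\infty)}\bigl((B_1 B_1^* + B_2^* B_2) \otimes \1_N\bigr)$. The rescaling $(B_1,B_2) \mapsto (\alpha B_1,\alpha^{-1} B_2)$ fixes both $\rho_{QN}$ and the product $\|B_1\|_2 \|B_2\|_2$ and can be chosen to equate $\|B_1\|_2 = \|B_2\|_2$, after which $\sigma_Q \eqdef (B_1 B_1^* + B_2^* B_2)/(\|B_1\|_2^2 + \|B_2\|_2^2)$ is a state and the previous inequality rearranges to $\rho_{QN} \leq \|B_1\|_2\,\|C\|_{(\infty;\infty)}\,\|B_2\|_2 \cdot \sigma_Q \otimes \1_N$, closing the proof after taking the infimum over factorizations. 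The only delicate point I foresee is the bookkeeping for rank-deficient $\sigma_Q$ or $B_i$; this I would dispatch by the standard full-rank regularization $\sigma_Q \mapsto (1-\eps)\sigma_Q + \eps\,\1_Q/Q$ (and similarly for the $B_i$), sending $\eps \downarrow 0$ and using compactness of $\cS(Q)$ together with continuity of both norms on the common support of the limit.
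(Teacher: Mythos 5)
Your proof is correct; the easy inequality is the same as the paper's (exhibit the factorization $B_1=B_2=\sigma_Q^{1/2}$ for an optimal $\sigma_Q$), but your converse direction takes a genuinely different route. The paper works in the dual: using H\"older's inequality and a polar decomposition it first shows that for positive $X$ one has $\|X\|_{(\infty;1)}=\sup\{\|AXA\|_1 : A\geq0,\ \|A\|_2\leq1\}$, and then transfers this back through Proposition~\ref{prop:duality-one-infty} to conclude that the infimum defining $\|\rho_{QN}\|_{(1;\infty)}$ may be restricted to symmetric factorizations $B_1=B_2=A\geq0$, whence $\sigma_Q=A^2/\|A\|_2^2$ is the required state. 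You stay on the primal side: the block positivity
\begin{align*}
\begin{pmatrix} \|C\|_{(\infty;\infty)}\,B_1B_1^*\otimes\1_N & \rho_{QN}\\ \rho_{QN} & \|C\|_{(\infty;\infty)}\,B_2^*B_2\otimes\1_N\end{pmatrix}\;\geq\;0
\end{align*}
is indeed immediate from the three-factor decomposition you describe (the middle block matrix is positive because its normalized off-diagonal has operator norm at most one), the $(v,-v)$ test yields $2\rho_{QN}\leq\|C\|_{(\infty;\infty)}(B_1B_1^*+B_2^*B_2)\otimes\1_N$ for \emph{every} factorization, and the $\alpha$-rescaling is exactly the AM--GM balancing that turns $\frac{1}{2}(B_1B_1^*+B_2^*B_2)$ into $\|B_1\|_2\|B_2\|_2$ times a density operator. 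Your route is more self-contained, avoiding the detour through the dual norm; the paper's detour has the side benefit of recording the symmetric-factorization form of $(\infty;1)$ for positive operators, which is in the same spirit as its later factorization lemmas. One small caveat: your opening reformulation $2^{-H_{\min}(N|Q)_\rho}=\min\{\lambda:\rho_{QN}\leq\lambda\,\sigma_Q\otimes\1_N,\ \sigma_Q\in\cS(Q)\}$ is not literally ``forced'' by the generalized-inverse formula, since a $\sigma_Q$ whose support fails to contain that of $\rho_Q$ is not excluded by the formula itself; this is the standard definitional convention (cf.~\cite{DJKR06}), and your $\eps$-regularization remark disposes of it, but it deserves a sentence rather than the word ``directly''.
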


\begin{proof}
This is basically proven in~\cite{DJKR06} but we reproduce the argument here for convenience. Let $\sigma_{Q}\in\cS(Q)\subset\matr_{Q}$ be such that $2^{-H_{\min}(N|Q)_{\rho}} = \left\|\left(\sigma^{-1/2}_{Q}\otimes\1_{N}\right)\rho_{QN}\left(\sigma^{-1/2}_{Q}\otimes\1_{N}\right)\right\|_{(\infty;\infty)}$. Then, we have
\begin{align}
\|\rho_{QN}\|_{(1;\infty)}&\leq\left\|\sigma^{1/2}_{Q}\right\|_2 \left\|\left(\sigma^{-1/2}_{Q}\otimes\1_{N}\right)\rho_{QN}\left(\sigma^{-1/2}_{Q}\otimes\1_{N}\right)\right\|_{(\infty;\infty)}\left\|\sigma^{1/2}_{Q}\right\|_2 = 2^{-H_{\min}(N|Q)_{\rho}}\ .
\end{align}
For the other direction we will show that for $\rho_{QN}\geq0$,
\begin{align}\label{eq:final_duality}
\|\rho_{QN}\|_{(1;\infty)}=\inf \{ \| A \|_2 \|\omega_{QN}\|_{(\infty;\infty)} \| A \|_2 : \rho_{QN}=(A \otimes \1_{N}) \omega_{QN} (A \otimes \1_{N}); A\in \matr_Q, A \geq 0 \}\ ,
\end{align}
from which the claim follows.

We prove the similar statement for the dual norm $\|\cdot\|_{(\infty;1)}$ first. For that let $X \geq 0$ and using H\"{o}lder's inequality we get
\begin{align}
\| A X B^{*} \|_1 = \left\| A \sqrt{X} \sqrt{X} B^{*} \right\|_1 &\leq \left\| A \sqrt{X} \right\|_2 \left\| \sqrt{X} B^{*} \right\|_2\leq \max\big\{\| A X A^{*} \|_1, \| B X B^{*} \|_1\big\}\ .
\end{align}
By performing a polar decomposition of $A$ and using the unitary invariance of the norm $\|\cdot\|_1$, we get that 
\begin{align}
\| X \|_{(\infty;1)} = \sup \{ \| A X A \|_1 : A\in \matr_Q, A \geq 0, \|A\|_2 \leq 1 \}\ .
\end{align}
By the same arguments which show that the norm $\|\cdot\|_{(\infty;1)}$ is dual to the norm $\|\cdot\|_{(1;\infty)}$ (Proposition~\ref{prop:duality-one-infty}), this then implies~\eqref{eq:final_duality}.
\end{proof}

\begin{lemma}\label{lem:calc-vect-to-matr}
Let $u : \matr_N \to \matr_Q$, $z\in\matr_{QN}$, as well as $a = \sum_{ij} a_{i,j} \ket{i} \otimes \ket{j}, b = \sum_{i,j} b_{ij} \ket{i} \otimes \ket{j}$ vectors in $\C^{Q^2}$. Then, we have
\begin{align}
\bra{b} (\1_{Q} \otimes u)(z) \ket{a} = \tr\left[  J(u) (A^T \otimes \1_{N}) z^T (\bar{B} \otimes \1_{N}) \right] \ ,
\end{align}
where $J(u) = \left(\sum_{k,k'} u(\ket{k} \bra{k'}) \otimes \ket{k} \bra{k'} \right)$, $A = \sum_{i,j} a_{ij} \ket{i} \bra{j}$, and $B = \sum_{i,j} b_{ij} \ket{i} \bra{j}$.
\end{lemma}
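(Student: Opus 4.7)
My plan is to prove Lemma~\ref{lem:calc-vect-to-matr} by a direct coordinate computation: expand every object in the standard basis, collect the index contractions on each side, and check that the resulting scalar expressions coincide. Since both sides are linear in $u$, $z$, $a$, and $b$, no subtlety beyond bookkeeping is involved, and I expect the main ``obstacle'' to be only keeping tensor orderings and transposes straight.

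First I would introduce matrix elements
\begin{align*}
z = \sum_{p,p',k,k'} z_{pp'kk'}\, \ket{p}\!\bra{p'} \otimes \ket{k}\!\bra{k'}, \qquad u(\ket{k}\!\bra{k'}) = \sum_{m,n} u_{mn;kk'} \ket{m}\!\bra{n},
\end{align*}
together with $A=\sum_{ij} a_{ij}\ket{i}\!\bra{j}$ and $B=\sum_{ij} b_{ij}\ket{i}\!\bra{j}$, so that $A^T=\sum a_{ij}\ket{j}\!\bra{i}$ and $\bar{B}=\sum b_{ij}^{*}\ket{i}\!\bra{j}$. The LHS then becomes a sum
\begin{align*}
\bra{b}(\1_Q \otimes u)(z)\ket{a}
= \sum_{p,p',k,k',j,j'} b_{pj}^{*}\, a_{p'j'}\, z_{pp'kk'}\, u_{jj';kk'},
\end{align*}
obtained by pairing $\braket{i}{p}$, $\braket{p'}{i'}$ on the first tensor factor and $\braket{j}{m}$, $\braket{n}{j'}$ on the second.

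For the RHS I would compute $(A^T\otimes\1)\, z^T$ first, using $z^T=\sum z_{pp'kk'}\,\ket{p'}\!\bra{p}\otimes\ket{k'}\!\bra{k}$, which collapses the inner index to give $\sum a_{p'j} z_{pp'kk'}\,\ket{j}\!\bra{p}\otimes\ket{k'}\!\bra{k}$. Right multiplication by $(\bar{B}\otimes\1)$ then produces $\sum a_{p'j}\, b^{*}_{pj'}\, z_{pp'kk'}\,\ket{j}\!\bra{j'}\otimes\ket{k'}\!\bra{k}$. Composing with $J(u)=\sum u_{mn;k''k'''}\ket{m}\!\bra{n}\otimes\ket{k''}\!\bra{k'''}$ and taking the trace imposes $n=j$, $k'''=k'$, $m=j'$, $k''=k$, which yields exactly the same sum as the LHS. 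A short closing remark would note that this is just the identity $\tr[(X\otimes Y)(M\otimes\1) z^T (N\otimes\1)]=\tr[u(Y)\,X'\cdot\dots]$ familiar from Choi-matrix manipulations, and that the same computation is what underlies the Choi-Jamio{\l}kowski isomorphism.

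The potentially error-prone step is aligning conventions for the Choi matrix (whether the ancilla appears on the left or right, and whether a transpose is hidden in $\ket{a}\leftrightarrow A$), so I would be careful to record the convention $a_{ij}\mapsto A=\sum a_{ij}\ket{i}\!\bra{j}$ and the resulting $A^T$, $\bar{B}$ at the outset and then read off both sides mechanically. No nontrivial ingredient beyond finite-dimensional linear algebra is used.
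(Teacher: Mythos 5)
Your proof is correct and takes essentially the same route as the paper's: a direct expansion of $z$, $a$, $b$ (and $u$) in the standard basis followed by index bookkeeping on both sides, the only cosmetic difference being that the paper keeps the trace form throughout instead of introducing matrix elements $u_{mn;kk'}$. In the final write-up just note the harmless relabeling $j\leftrightarrow j'$ needed to see that the two index sums literally coincide.
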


\begin{proof}
We write
\begin{align}
\bra{b} (\1_{Q} \otimes u)(z) \ket{a}
&= \sum_{i,i',j,j'} \tr[ (\1_Q \otimes u)(z) a_{ij} \bar{b}_{i'j'} \ket{i} \bra{i'} \otimes \ket{j} \bra{j'} ] \\
&= \sum_{i,i',j,j', k,k'} z_{i'ikk'} a_{ij} \bar{b}_{i'j'} \tr[ \ket{i} \bra{i'} \otimes u( \ket{k} \bra{k'} ) \ket{i'} \bra{i} \otimes \ket{j} \bra{j'}  ] \\
&= \sum_{i,i',j,j', k,k'} z_{i'ikk'} a_{ij} \bar{b}_{i'j'} \tr[ x \ket{j} \bra{j'} \otimes \ket{k'} \bra{k}  ] \\
&= \sum_{i,i', k,k'} z_{i'ikk'}  \tr\left[ J(u) \left(\sum_{j} a_{ij} \ket{j} \right) \left(\sum_{j'} \bar{b}_{i'j'} \bra{j'}\right) \otimes \ket{k'} \bra{k} \right] \\
&= \sum_{i,i', k,k'} z_{i'ikk'} \tr\left[ J(u) A^T \ket{i} \bra{i'} \bar{B} \otimes \ket{k'} \bra{k} \right] \\
&=  \tr\left[ (\bar{B} \otimes \1_{N}) J(u) (A^T \otimes \1_{N}) z^{T} \right] \ .
\end{align}
\end{proof}

\jungetechnical*

\begin{proof}
For the first property, note that
\begin{align}
\1 - \sum_\alpha r_{\vec{x},\alpha-1}^* p[\vec{x}(\alpha)] r_{\vec{x},\alpha-1} &= \1 - p[x_1] - \sum_{\alpha=2} r_{\vec{x},\alpha-1}^* p[\vec{x}(\alpha)] r_{\vec{x},\alpha-1}\\
&= (\1 - p[x_1])^{1/2} \left( \1 - \sum_{\alpha=2} r_{\vec{x},2,\alpha-1}^* p[\vec{x}(\alpha)] r_{\vec{x},2,\alpha-1} \right) (\1 - p[x_1])^{1/2}\\
&=\dots = \, r_{\vec{x},\gamma}^* r_{\vec{x},\gamma}\\
&\geq0\ .
\end{align}
For the second property, it follows from $1 - \sqrt{1-t} \leq t$, $t \in [0,1]$ that $\1 - (\1 - p(k))^{1/2} \leq p(k)$ and hence also $(1- \frac{1}{4 \,\gamma})\1 \leq \frac{1}{N}\sum_x (\1 - p(x))^{1/2}$. We then have 
\begin{align}
\sum^\gamma_\alpha\E_{x_1,\dots,x_{\alpha-1}} \left(\1 - r_{\vec{x},\alpha-1}\right)=\sum^\gamma_{\alpha=1} \left(\1 - \E_{x_1}\,\E_{x_2}\, \dots \,\E_{x_{\alpha-1}} r_{\vec{x},\alpha-1}\right)&= \sum^\gamma_{\alpha=1} \left(\1 - \left(\frac{1}{N}\sum_x (\1 - p(x))^{1/2}\right)^{\alpha-1} \right)\\
&\leq \sum^\gamma_{\alpha=1} \left(1 - \left(1 - \frac{1}{4 \,\gamma}\right)^{\alpha-1} \right) \,\1\\
& \leq \gamma (4 (1 - \frac{1}{4\gamma})^\gamma -3) \,\1\\
&\leq \gamma (4 \,e^{-1/4} - 3) \,\1\\
&\leq \frac{\gamma}{8} \,\1\ .
\end{align}
The last assertion follows by
\begin{align}
&\sum_\alpha \E_{x_1,\dots,x_{\alpha-1}} [\1 - r_{\vec{x},\alpha-1}^*] [\1 - r_{\vec{x},\alpha-1}]\notag\\ 
&=\1 - \sum_\alpha \E_{x_1,\dots,x_{\alpha-1}} [r_{\vec{x},\alpha-1}^*] - \sum_\alpha \E_{x_1,\dots,x_{\alpha-1}} [r_{\vec{x},\alpha-1}] + \sum_\alpha \E_{x_1,\dots,x_{\alpha-1}} [r_{\vec{x},\alpha-1}^*\, r_{\vec{x},\alpha-1}]\\
&\leq 2 \left( \1 - \left(\frac{1}{N}\sum_x (\1 - p(x))^{1/2}\right)^{\alpha-1} \right)\ ,
\end{align}
since $r_{\vec{x},\alpha-1}^*\, r_{\vec{x},\alpha-1} \leq \1$. The proof for terms involving \(t\)'s and \( q\)'s is identical. 
\end{proof}


\section{Haagerup tensor product and intersection norms}\label{app:intersection}

We denote the Haagerup tensor product of operator spaces by $\otimes_h$ (also for the corresponding normed spaces tensor product). We refer to~\cite[Chapter 5]{Pis03} for details, the calculations we perform here are based on a few simple properties. For operator spaces $\mathds{E}=\left(E,\|\cdot\|_{\mathds{E}_Q}\right)$ and $\mathds{F}=\left(F,\|\cdot\|_{\mathds{F}_Q}\right)$ of dimension $N$, we have
\begin{align}\label{eq:def-haagerup}
\| x \|_{\mathds{E}\otimes_{h}\mathds{F}} = \inf \left\{ \left\| \sum_{i} \ket{0} \bra{i} \otimes a_i\right\|_{\mathds{E}_N} \cdot \left\| \sum_{i} \ket{i} \bra{0} \otimes b_i \right\|_{\mathds{F}_N} : x = \sum_{i=1}^N a_i \otimes b_i \right\}\ .
\end{align}
Recall that we previously mentioned the column and row operator spaces $\mathds{C}_N=\left(C_N,\|\cdot\|_{(\mathds{C}_N)_Q}\right)$ and $\mathds{R}_N=\left(R_N,\|\cdot\|_{(\mathds{R}_N)_Q}\right)$. They are simply defined by embedding a vector $\bC^N$ as a column or row of a matrix in $\matr_N$. So we have for $x = \sum_{ijk} x_{ijk} \ket{i} \bra{j} \otimes \ket{k} \in \matr_Q(\mathbb{C}^N)$, 
\begin{align}
\| x \|_{(\mathds{C}_N)_Q}= \left\| \sum_{ijk} x_{ijk} \ket{i} \bra{j} \otimes \ket{k} \bra{0} \right\|_{(\infty;\infty)}\quad\mathrm{and}\quad\| x \|_{(\mathds{R}_N)_Q}= \left\| \sum_{ijk} x_{ijk} \ket{i} \bra{j} \otimes \ket{0} \bra{k} \right\|_{(\infty;\infty)}\ .
\end{align}
Using the $\min$ tensor product notation (see~\cite[Chapter 2]{Pis03} for a definition), the $\mathds{E}_N$- and $\mathds{F}_N$-norm in~\eqref{eq:def-haagerup} can be written as $\mathds{L}^\infty_N\omin\mathds{E}$ and $\mathds{L}^\infty_N\omin\mathds{F}$, respectively. As only the first row (column) of the matrix is used, we can write the norm as $\mathds{R}_N\omin\mathds{E}$ ($\mathds{C}_N\omin\mathds{E}$):
\begin{align}\label{eq:def-haagerup-2}
\| x \|_{\mathds{E} \otimes_{h} \mathds{F}} = \inf \left\{ \left\| \sum_{i} \bra{i} \otimes a_i \right\|_{\mathds{R}_N\omin\mathds{E}} \cdot \left\| \sum_{i} \ket{i} \otimes b_i \right\|_{\mathds{C}_N\omin\mathds{F}} : x = \sum_{i=1}^N a_i \otimes b_i \right\}\ .
\end{align}
In order to prove Proposition~\ref{prop:normeqhaageruptp}, we will make use of the following complete isometries (operator space equalities):
\begin{align}
&\mathds{C}_N\otimes_h\mathds{E}=\mathds{C}_N\omin\mathds{E}\\
&\mathds{E}\otimes_h\mathds{R}_N=\mathds{E}\omin\mathds{R}_N\\ 
&\mathds{C}_N\omin\mathds{R}_N=\mathds{L}^\infty_N\\
& \mathds{L}_Q^{\infty}(\mathds{E}) = \mathds{C}_Q \otimes_h \mathds{E}\otimes_h\mathds{R}_Q\\
& \mathds{L}_Q^\infty\omin\cap\left\{\mathds{E},\mathds{F}\} = \cap\{\mathds{L}_Q^\infty\omin\mathds{E},\mathds{L}_Q^\infty\omin\mathds{F}\right\}\ .
\end{align}

\begin{proposition}\label{prop:normeqhaageruptp}
The operator space $\capcap\left\{2^k \mathds{L}^\infty_N, \mathds{L}^1_N\right\}$ is completely isomorphic to the operator space
\begin{align}
\cap\left\{\sqrt{2^k} \mathds{C}_N, \mathds{R}_N\right\} \otimes_h \cap\left\{\sqrt{2^k} \mathds{R}_N, \mathds{C}_N\right\}\ .
\end{align}
\end{proposition}

\begin{proof}
We use the abbreviations
\begin{align}
\mathds{G}:=\cap\left\{\sqrt{2^k}\mathds{C}_N,\mathds{R}_N\right\}\otimes_h\cap\left\{\sqrt{2^k}\mathds{R}_N,\mathds{C}_N\right\}\quad\mathrm{and}\quad\mathds{G}_Q:=\left(\cap\left\{\sqrt{2^k}\mathds{C}_N,\mathds{R}_N\right\}\otimes_h\cap\left\{\sqrt{2^k} \mathds{R}_N,\mathds{C}_N\right\}\right)_Q\ ,
\end{align}
with the operator space structure
\begin{align}
\text{$\mathds{L}^{\infty}_Q(\mathds{G}):=\left(\matr_Q(\matr_N),\|\cdot\|_{(\mathds{L}^{\infty}_Q(\mathds{G}))_{Q'}}\right)$ given by $\|\cdot\|_{(\mathds{L}^{\infty}_Q(\mathds{G}))_{Q'}}:=\|\cdot\|_{\mathds{G}_{QQ'}}$.}
\end{align}
The rules for Haagerup tensor products give
\begin{align}
\mathds{L}^{\infty}_Q(\mathds{G})&= \mathds{C}_Q \otimes_h \cap\left\{\sqrt{2^k} \mathds{C}_N, \mathds{R}_N\right\} \otimes_h \cap\left\{\sqrt{2^k} \mathds{R}_N, \mathds{C}_N\right\} \otimes_h \mathds{R}_Q \\
&= \mathds{C}_Q \omin \cap\left\{\sqrt{2^k} \mathds{C}_N, \mathds{R}_N\right\} \otimes_h \cap\left\{\sqrt{2^k} \mathds{R}_N, \mathds{C}_N\right\} \omin \mathds{R}_Q \\
&= \cap\left\{\sqrt{2^k} \mathds{C}_Q \omin \mathds{C}_N, \mathds{C}_Q \omin \mathds{R}_N \right\} \otimes_h \cap\left\{\sqrt{2^k} \mathds{R}_N \omin \mathds{R}_Q, \mathds{C}_N \omin \mathds{R}_Q\right\} \\
&= \cap\left\{\sqrt{2^k} \mathds{C}_{QN}, \mathds{C}_Q \omin \mathds{R}_N \right\} \otimes_h \cap\left\{\sqrt{2^k} \mathds{R}_{NQ} , \mathds{C}_N \omin \mathds{R}_Q\right\} \ .
\end{align}
Then, considering the definition in~\eqref{eq:def-haagerup}, we continue with
\begin{align}
\mathds{R}_{QN}\omin\cap\left\{\sqrt{2^k}\mathds{C}_{QN},\mathds{C}_Q\omin\mathds{R}_N\right\}&=\cap\left\{\sqrt{2^k}\mathds{L}^\infty_{QN},\mathds{R}_{QN}\omin\mathds{C}_Q\omin\mathds{R}_N\right\}\\
\mathds{C}_{QN}\omin\cap\left\{\sqrt{2^k}\mathds{R}_{NQ},\mathds{C}_N\omin\mathds{R}_Q\right\}&=\cap\left\{\sqrt{2^k}\mathds{L}^\infty_{QN},\mathds{C}_{QN}\omin\mathds{C}_N\otimes_h\mathds{R}_Q\right\}\ .
\end{align}
Now let $x \in \matr_{QN}$ with $x = \sum_{ijkl} x_{ijkl} \ket{i} \bra{j} \otimes \ket{k} \bra{l}$: our objective is to compute $\| x \|_{\mathds{G}_Q}$. Seeing $x$ as an element of $\bC^{QN} \otimes \bC^{QN}$, we find using the expression in~\eqref{eq:def-haagerup-2},
\begin{align}
\| x \|_{\mathds{G}_Q}=\inf\Bigg\{&\max\Bigg\{ \sqrt{2^k}\bigg\| \sum_{ikp} a_{ik}(p) \ket{i} \ket{k} \bra{p} \bigg\|_{(\infty;\infty)},\bigg\| \sum_{ikp} a_{ik}(p) \ket{i} \bra{k} \bra{p}\bigg\|_{\infty}\Bigg\}\\
\cdot&\max\Bigg\{ \sqrt{2^k}\bigg\| \sum_{jlp} b^*_{jl}(p) \ket{p} \bra{j} \bra{l}  \bigg\|_{(\infty;\infty)},\bigg\| \sum_{jlp} b^*_{jl}(p)\ket{p} \ket{j} \bra{l}\bigg\|_{\infty}\Bigg\}:x = \sum_{p=1}^{NQ} a(p) \otimes b(p)\Bigg\}\ ,
\end{align}
where we have used the notation $a(p) = \sum_{ik} a_{ik}(p) \ket{i} \ket{k}$ and $b(p) = \sum_{jl} b^*_{jl}(p) \bra{j} \bra{l}$. If we define $a = \sum_{ikp} a_{ik}(p)\ket{i} \ket{k} \bra{p}$ and $b = \sum_{jlp} b_{jl}(p) \ket{j} \ket{l} \bra{p}$, then $x = \sum_{p} a(p) \otimes b(p)$ becomes $x = a b^*$. We find exactly the expression in \eqref{eq:def-cap-cap}, which proves the assertion.
\end{proof}

We end this appendix with an identification of the dual operator space of $\capcap\left\{2^k \mathds{L}^\infty_N, \mathds{L}^1_N\right\}$. Since the Haagerup tensor product is self dual~\cite[Chapter 5]{Pis03}, Proposition~\ref{prop:normeqhaageruptp} is conveniently applied. First, note that by the discussion on intersection norms for operator spaces in the previous section, the dual operator space of $\cap\left\{\sqrt{2^k} \mathds{C}_N, \mathds{R}_N\right\}$ is equal to $\Sigma\left\{2^{-k/2} \mathds{R}_N, \mathds{C}_N\right\}$, since the operator space dual of the row operator space is the column space and vice versa.

\begin{corollary}
The operator space dual of $\capcap\left\{2^k \mathds{L}^\infty_N, \mathds{L}^1_N\right\}$ is the operator space
\begin{align}
\Sigma\left\{2^{-k/2} \mathds{R}_N, \mathds{C}_N \right\} \otimes_h \Sigma\left\{2^{-k/2} \mathds{R}_N, \mathds{C}_N \right\} = 2^{-k}\left(\Sigma\left\{\mathds{R}_N,\sqrt{2^k} \mathds{C}_N\right\} \otimes_h \Sigma\left\{\mathds{R}_N,\sqrt{2^k} \mathds{C}_N\right\}\right)\ .
\end{align}
\end{corollary}


\bibliographystyle{arxiv}
\bibliography{library}

\end{document}